\newtheorem{theo}{Theorem}
\newtheorem{lemm}{Lemma}
\newtheorem{defn}{Definition}
\newcommand{\task}[1] {\tau_{#1}}
\newcommand{\proc}[1] {\tau_{#1}^{proc}}
\newcommand{\pri}[1] {\tau_{#1}^{pri}}
\newcommand{\BCET}[1] {\tau_{#1}^{BCET}}
\newcommand{\WCET}[1] {\tau_{#1}^{WCET}}
\newcommand{\minR}[1] {\tau_{#1}^{minR}}
\newcommand{\maxR}[1] {\tau_{#1}^{maxR}}
\newcommand{\minS}[1] {\tau_{#1}^{minS}}
\newcommand{\maxS}[1] {\tau_{#1}^{maxS}}
\newcommand{\minF}[1] {\tau_{#1}^{minF}}
\newcommand{\maxF}[1] {\tau_{#1}^{maxF}}
\newcommand{\release}[1] {\tau_{#1}^{release}}
\newcommand{\start}[1] {\tau_{#1}^{start}}
\newcommand{\finish}[1] {\tau_{#1}^{finish}}
\newcommand{\graph}[1] {\mathcal{T}_{#1}}
\newcommand{\period}[1] {\mathcal{T}_{#1}^{p}}
\newcommand{\jitter}[1] {\mathcal{T}_{#1}^{j}}
\newcommand{\graphRelease}[1] {\mathcal{T}_{#1}^{release}}
\newcommand{\deadline}[1] {\mathcal{T}_{#1}^{d}}
\newcommand{\rphase}[2] {\phi_{#1,#2}^r}
\newcommand{\sphase}[2] {\phi_{#1,#2}^s}
\newcommand{\fphase}[2] {\phi_{#1,#2}^f}
\newcommand{\pshift}[2] {\Psi_{#1,#2}}
\newcommand{\condset}[2] {\mathcal{#1}_{\task{#2}}}
\newcommand{\ignore}[1]{}
\begin{document}

\markboth{J. Choi et al.}{A Hybrid Performance Analysis Technique for Distributed Real-Time Embedded Systems}

\title{A Hybrid Performance Analysis Technique for Distributed Real-Time Embedded Systems}
\author{
JUNCHUL CHOI \affil{Seoul National University}
HYUNOK OH \affil{Hanyang University}
SOONHOI HA \affil{Seoul National University}
}

\begin{abstract}
It remains a challenging problem to tightly estimate the worst case response time of an application in a distributed embedded system, especially when there are dependencies between tasks. We discovered that the state-of-the art techniques considering task dependencies either fail to obtain a conservative bound or produce a loose upper bound. We propose a novel conservative performance analysis, called hybrid performance analysis, combining the response time analysis technique and the scheduling time bound analysis technique to compute a tighter bound fast. Through extensive experiments with randomly generated graphs, superior performance of our proposed approach compared with previous methods is confirmed.
\end{abstract}

\category{B.8.2}{Performance and Reliability}{Performance Analysis and Design Aids}

\terms{Algorithms, Performance, Design, Reliability, Theory}

\keywords{Worst case response time, performance analysis, response time analysis, distributed embedded system}

\acmformat{Junchul Choi, Hyunok Oh, and Soonhoi Ha, 2015. A Hybrid Performance Analysis Technique for Distributed Real-Time Embedded Systems.}


\begin{bottomstuff}

Author's addresses: 
J. Choi, Department of Computer Science and Engineering, Seoul National University;
H. Oh, Department of Information Systems, Hanyang University;
S. Ha, Department of Computer Science and Engineering, Seoul National University.
\end{bottomstuff}

\maketitle

\section{Introduction}

For the design of embedded systems that support real-time applications, it is required to guarantee the satisfaction of real-time constraints. After applications are mapped to a candidate architecture, we check the feasibility of the architecture by estimating the performance. Fast estimation enables us to explore the wider design space of architecture selection and application mapping. More accurate estimation will reduce the system cost. The performance analysis problem addressed in this paper is to estimate the worst case response time (WCRT) of an application that is executed on a distributed embedded system. A good example can be found in an intelligent safety application in a car where there is a tight requirement on the worst case response time from the sensor input to the actuator output.

Despite a long history of research over two decades, it still remains a challenging problem to tightly estimate the WCRT of an application in a distributed embedded system based on a fixed priority scheduling policy. Since the response time of an application is affected by interference between applications as well as execution time variation of tasks, all possible execution scenarios should be considered to obtain the exact WCRT. There are some approaches proposed, such as a model checking approach [Brekling et al. 2008]\ignore{\cite{ref1}} and an ILP-based approach [Kim et al. 2012]\ignore{\cite{ref2}}, to find the accurate WCRT. However, they require exponential time complexity. The exact WCRT analysis problem is known to be NP-complete [Yen and Wolf 1998]\ignore{\cite{ref3}}.

Analytical techniques have been extensively researched to obtain a tight upper bound of the WCRT with diverse assumptions on target architectures and applications. This paper assumes that an application is given as a task graph that represents data dependency between tasks and the execution time of a task may vary. It is assumed that each task has a fixed priority. In addition, we support arbitrary mixture of preemptive and non-preemptive processing elements in the system. To analyze the WCRT of an application, this paper proposes a performance analysis technique, called hybrid performance analysis (HPA), combining a scheduling time bound analysis and a response time analysis (RTA). The proposed technique is proven to be conservative and experimental results show that it provides a tighter bound of WCRT than the other state-of-the-art techniques.

The rest of this paper is organized as follows. In Section \ref{sec:related} we overview the related work and highlight the contributions of this work. In Section \ref{sec:problem_def}, the application model and the system model assumed in this paper are formally described. Section \ref{sec:YW_review} reviews the Y\&W method and proves that it is not conservative by showing some counter examples. The proposed technique and some optimization techniques are explained in Sections \ref{sec:HPA} and \ref{sec:enhanced_HPA} respectively. We summarize the overall algorithm in Section \ref{sec:overall_alg}. Experimental results are discussed in Section \ref{sec:experiment}. Finally, we conclude this paper in Section \ref{sec:conclusion}.

\section{Related Work} \label{sec:related}

Response time analysis (RTA) was first introduced for a single processor system based on preemptive scheduling of independent tasks that have fixed priorities, fixed execution times, and relative deadline constraints equal to their periods [Lehoczky et al. 1989]\ignore{\cite{ref4}}. Extensive research efforts [Lehoczky 1990, Audsley et al. 1993]\ignore{\cite{ref6}\cite{ref7}} have been performed to release the restricted assumptions. Pioneered by K. Tindell et al. [Tindell and Clark 1994]\ignore{\cite{ref8}}, a group of researchers extended the schedulability analysis technique to distributed systems; for example, dynamic offset of tasks [Palencia and Harbour 1998]\ignore{\cite{ref9}}, communication scheduling [Tindel et all. 1995]\ignore{\cite{ref10}}, partitioned scheduling with shared resources [Schliecker and Ernst 2010]\ignore{\cite{ref11}}, and earliest deadline first (EDF) scheduling [Pellizzoni and Lipari 2007]\ignore{\cite{ref12}}. There exist some researches that consider precedence constraints between tasks, by assigning the offset and deadline of each task conservatively considering every possible execution ordering between tasks [Palencia and Harbour 1998, Tindel et all. 1995, Pellizzoni and Lipari 2007], which usually incur significant overhead of overestimation and computation complexity. On the other hand, we handle the dependent tasks directly, assuming that a task is released immediately after all predecessors complete.

To the best of our knowledge, the state of the art RTA method for dependent tasks was proposed by Yen and Wolf [Yen and Wolf 1998], denoted as the Y\&W method hereafter. It considers data dependencies between tasks and variable execution times of tasks directly in the response time analysis but supports only preemptive systems. Several extensions have been made to the Y\&W method, with considering communication [Yen and Wolf 1995]\ignore{\cite{ref13}} and control dependencies [Pol et al. 2000]\ignore{\cite{ref14}}. Unfortunately, the conservativeness has not been proven for the Y\&W method. This paper discovers that the Y\&W method is actually not conservative by showing counter-examples for which the Y\&W method produces a shorter response time than the worst case response time in Section \ref{sec:YW_review}.

Non-preemptive processing elements are supported in the MAST suite [Harbour 2001]\ignore{\cite{ref16}} that includes several schedulability analysis techniques. But they support only chain-structured graphs where a task has a single input and/or a single output port.

There is a compositional approach distinguished from the holistic RTA-based approaches. SymTA/S [Henia et al. 2005]\ignore{\cite{ref15}} which is a well-known compositional analysis, performs the analysis in a modular manner. It analyzes the performance for each processing component and abstracts its result as an event stream at the component boundary. While the compositional approach achieves scalability, it sacrifices estimation accuracy by ignoring the release time constraint coming from data dependencies between tasks running on different processing elements.

Recently, a holistic WCRT analysis approach, called scheduling time bound analysis (STBA), has been proposed [Kim et al. 2013]\ignore{\cite{ref5}}. It computes the conservative time bound for each task within which the task will be scheduled, considering all possible scheduling patterns. In the STBA approach, however, the task graphs should be expanded up to the least common multiple (LCM) of their periods, which limits the scalability of the technique. While the proposed technique adopts the basic time bound idea of the STBA method to analyze how data dependencies affect the release times of tasks in the same application, it considers inter-application interference analytically, based on the response time analysis.

\section{Problem Definition} \label{sec:problem_def}

\begin{figure}[ht]
\centerline{\includegraphics[width=11cm]{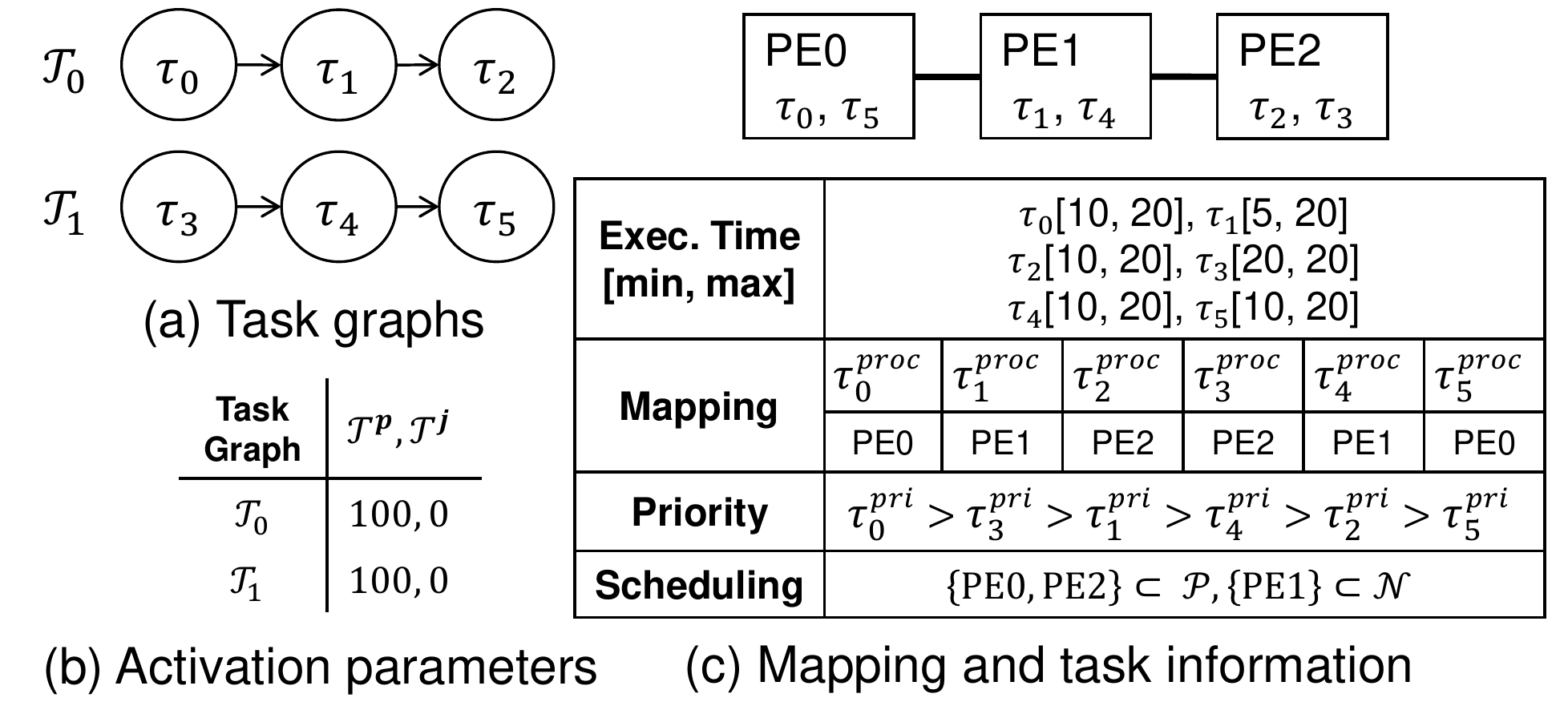}}
\caption{(a) Example task graphs, (b) task graph information, (c) task mapping and task information}
\label{fig:Problem}
\end{figure}

We formally describe the application model and the system model assumed in this paper. An input application, $\graph{i}$, is represented as an acyclic task graph as illustrated in Fig. \ref{fig:Problem} (a). In a task graph, $\graph{}=\{\mathcal{V},\mathcal{E}\}$, $\mathcal{V}$ represents a set of tasks and $\mathcal{E}=\{(\task{1}, \task{2})|\task{1}, \task{2}\in\mathcal{V}\}$ a set of edges to represent execution dependencies between tasks. If a task has more than one input edge, it is released after all predecessor tasks are completed. An application $\graph{}$ can be initiated periodically or sporadically, characterized by a tuple $(\period{}, \jitter{})$ where $\period{}$ and $\jitter{}$ represent the period and the maximum jitter, respectively. For sporadic activation, $\period{}$ denotes the minimum initiation interval. Task graph $\graph{}$ is given a relative deadline $\deadline{}$ to meet once activated. We assume that $\deadline{}$ is not greater than $\period{}$ in this paper. The task graph that task $\task{i}$ belongs to is denoted by $\graph{\task{i}}$.

A system consists of a set of processing elements (PEs) as shown in Fig. \ref{fig:Problem} (c). A task is a basic mapping unit onto a processing element. We assume that task mapping is given and fixed. The processing element that the task $\task{}$ is mapped to is denoted by $\proc{}$. For each task $\task{}$, the varying execution time is represented as a tuple $[\BCET{}, \WCET{}]$ indicating the lower and the upper bound on the mapped PE. Note that a communication network can be modeled as a separate PE. For instance, the PE graph of Fig. \ref{fig:Problem} (c) represents a system that consists of two processors (PE0 and PE2) connected to a bus (PE1). Tasks mapped to a communication network deliver messages between two computation tasks; for example $\task{1}$ indicates message communication between two computation tasks, $\task{0}$ and $\task{2}$.

We assume that the scheduling policy of a PE can be either a fixed-priority preemptive scheduling or a fixed-priority non-preemptive scheduling. $\mathcal{P}$ denotes a set of PEs that have preemptive scheduling policy, and $\mathcal{N}$ denotes a set of PEs with non-preemptive scheduling policy. A PE belongs to either $\mathcal{P}$ or $\mathcal{N}$. In Fig. \ref{fig:Problem}, PE0 and PE2 use preemptive scheduling while PE1 serves the communication tasks in a non-preemptive fashion; a higher-priority message cannot preempt the current message delivery. We assume that all tasks mapped to each PE have distinct priorities to make the scheduling order deterministic. The priority of the task $\task{}$ is denoted by $\pri{}$.

The WCRT of task graph $\graph{}$, denoted by $\mathcal{R}_{\graph{}}$, is defined as the time difference between the latest finish time and the earliest release time among tasks in the task graph.

\section{Review of the Y\&W Method} \label{sec:YW_review}

Since the Y\&W method is known as a state-of-the art schedulability analysis technique that considers dependency between tasks directly, we select it as the reference technique for comparison in this paper. In this section we review the key ideas of the Y\&W method and prove that it fails to find a conservative upper bound of the WCRT. Since dependency between tasks constrains the release times of tasks, the Y\&W method proposed three techniques: \emph{separation analysis}, \emph{phase adjustment}, and \emph{period shifting}.

\begin{figure}[ht]
\centerline{\includegraphics[width=9cm]{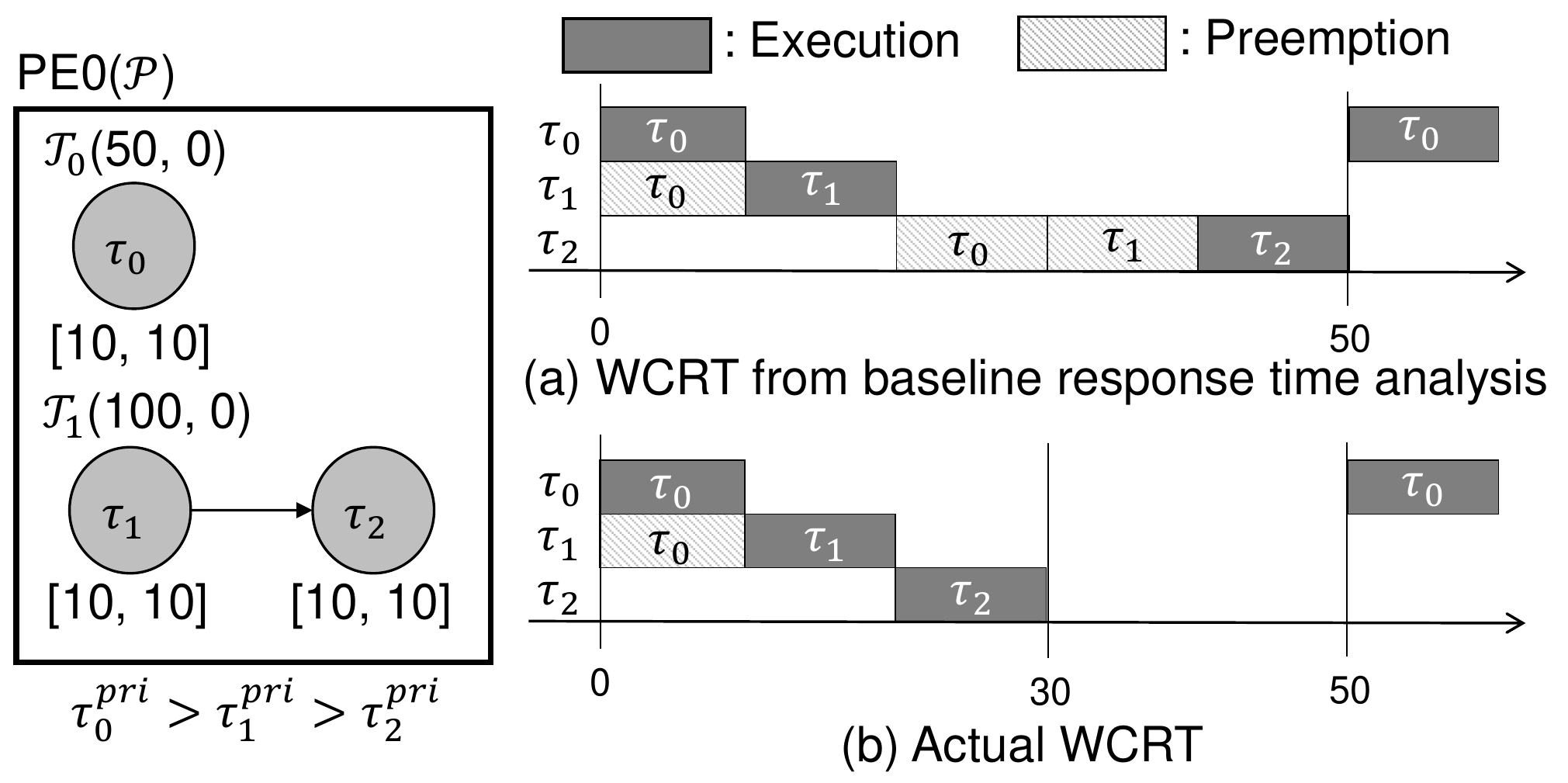}}
\caption{A simple example of WCRT estimation}
\label{fig:Ex_baseRTA}
\end{figure}

The baseline response time analysis for independent tasks defines the response time $r_i$ of a task $\task{i}$ as an iterative form:

\begin{equation} \label{eq:Eq1}
\centering
r_i=\WCET{i}+\sum_{\task{j}\in\{\task{j}|\pri{i}<\pri{j}\}}{\bigg\lceil\frac{r_i}{\period{\task{j}}}\bigg\rceil\cdot\WCET{j}}
\end{equation}

where set ${\task{j}\in\{\task{j}|\pri{i}<\pri{j}\}}$ represents the set of higher priority tasks that may preempt task $\task{i}$. Note that the second term subsumes all possible preemption delays by higher priority tasks for conservative estimation. When the baseline RTA is directly applied to tasks with dependency, the analytical WCRT of a task graph may become much larger than the actual WCRT. Consider an example in Fig. \ref{fig:Ex_baseRTA}. Unless the task dependency is considered, $\task{0}$ can preempt both $\task{1}$ and $\task{2}$, and $\task{1}$ can preempt $\task{2}$. Therefore, WCRT becomes $r_1+r_2=50$ as shown in Fig. \ref{fig:Ex_baseRTA} (a) while the tight WCRT is $30$. If the dependency is considered, $\task{0}$ can preempt either $\task{1}$ or $\task{2}$ but not both, and $\task{1}$ cannot preempt $\task{2}$, as shown in Fig. \ref{fig:Ex_baseRTA} (b).

In the Y\&W method, the release times and the finish times of tasks are computed, considering data dependencies between tasks. If the earliest release time ($\minR{j}$) of a higher priority task $\task{j}$ is larger than the latest finish time ($\maxF{i}$) of a given task $\task{i}$, $\task{j}$ cannot preempt $\task{i}$. And if there is a direct path between a higher priority task and a given task, no preemption will occur. These cases are identified in the \emph{separation analysis} to check the preemption possibility between two tasks in the same task graph; the \emph{separation analysis} finds that $\task{1}$ cannot preempt $\task{2}$ in the example of Fig. \ref{fig:Ex_baseRTA}.

The \emph{phase adjustment} technique in the Y\&W method computes the distance, called phase, between the release times of the preempting and the preempted tasks to identify the impossible preemptions along dependent tasks. The request phase $\rphase{i}{j}$ means the minimum distance from the request(release) time of $\task{i}$ to the next release time of $\task{j}$. Note that if the dependency is not considered, the request phase will be 0 since the worst case preemption occurs when the preempting task is released at the same time as the preempted task. The request phase is dependent on the finish phase of its predecessors. The finish phase $\fphase{i}{j}$ means the minimum distance from the finish time of $\task{i}$ to the next release time of $\task{j}$. The \emph{phase adjustment} technique modifies the response time formula of equation (\ref{eq:Eq1}) as follows:

\begin{equation} \label{eq:Eq2}
\centering
\rphase{i}{j}=\max\big(0,\min_{\task{k}\in{pred(\task{i})}}{(\fphase{k}{j}+\maxF{k})}-\max_{\task{k}\in{pred(\task{i})}}{\maxF{k}}\big)
\end{equation}

\begin{equation} \label{eq:Eq3}
\centering
r_i=\WCET{i}+\sum_{\task{j}\in{\{\task{j}|\pri{i}<\pri{j}\}-separated[\task{i}]}}{\bigg\lceil\frac{\max(0, r_i-\rphase{i}{j})}{\period{\task{j}}}\bigg\rceil\cdot\WCET{j}}
\end{equation}

\begin{equation} \label{eq:Eq4}
\centering
\fphase{i}{j}=\left\{\begin{array}{rl}
	\max(0,\rphase{i}{j}-r_i), & \textrm{if $\task{j}\not\in{\{\task{j}|\pri{i}<\pri{j}\}-separated[\task{i}]}$}\\
	(\rphase{i}{j}-r_i) \bmod \period{j} & \textrm{otherwise}\\
\end{array} \right.
\end{equation}

where $separated[\task{i}]$ is a set of higher priority tasks excluded in the preemption delay computation by the \emph{separation analysis}, and $pred(\task{i})$ is the immediate predecessor set of task $\task{i}$. In the modified formula, the request phase is subtracted from the response time in the preemption count computation. For conservative computation, the request phase and the finish phase are made non-negative. For detailed explanation of the formula, refer to [Yen and Wolf 1998]\ignore{\cite{ref3}}.

For a preempting task $\task{j}$, we compute the request phase and the finish phase for each task $\task{i}$ of a task graph. For a source task $\task{i}$, $\rphase{i}{j}$ is set to 0, meaning that the preempting task can be released at the same time to make $\task{i}$ experience the maximum number of preemptions by $\task{j}$. After computing the response time, finish phase $\fphase{i}{j}$ is updated. The request phase for a non-source task is updated in turn based on the finish phases of its predecessors. In the example of Fig. \ref{fig:Ex_baseRTA}, the request phase $\rphase{1}{0}$ is initialized to 0 and the response time of $\task{1}$ is 20. The finish phase $\fphase{1}{0}$ is updated to $(0-20)\bmod50=30$, and the request phase $\rphase{2}{0}$ is inherited from $\fphase{1}{0}$. Since $\task{1}$ and $\task{2}$ are separated, the response time of $\task{2}$ becomes $10=10+\lceil{\frac{max⁡(0,10-30)}{50}}\rceil\cdot10$. If we sum the response times of $\task{1}$ and $\task{2}$, we obtain the actual WCRT that is 30.

Consider a preempting task with varying release time. The release time may vary due to finish time variation of its predecessor. Then, the preempting task will not be scheduled periodically, but in a bursty fashion, which increases the preemption count. The \emph{period shifting} technique is used in the Y\&W method to account for this effect. Let $\task{j}$ be a task that may preempt $\task{i}$ and of which release time varies between $\minR{j}$ and $\maxR{j}$ that indicate the minimum and maximum release time of $\task{j}$ respectively. To compute the maximum number of preemptions by $\task{j}$ onto $\task{i}$, we need to consider the release time difference.

The authors of the Y\&W method did not clarify how to integrate the \emph{period shifting} technique and the \emph{phase adjustment} technique in a single formula. If both techniques are applied separately in sequence, the conventional RTA equation is modified to the following formula, which is used as the Y\&W method in this paper.

\begin{equation} \label{eq:Eq5}
\centering
r_i=\WCET{i}+\sum_{\task{j}\in{\{\task{j}|\pri{i}<\pri{j}\}-separated[\task{i}]}}{\bigg\lceil\frac{\max(0, r_i-\rphase{i}{j}+\maxR{j}-\minR{j})}{\period{\task{j}}}\bigg\rceil\cdot\WCET{j}}
\end{equation}

\subsection{The Y\&W method is NOT conservative}

Unfortunately, the Y\&W method fails to find the maximum preemption count. Consider a simple example of Fig. \ref{fig:Ex_underestimation1}. Since $\task{2}$ has no predecessor and a zero jitter, release time is always 0. Then the response time of $\task{1}$ from the Y\&W method becomes $25=20+\lceil{\frac{(25+0-0)}{30}}\rceil\cdot5$ as shown in Fig. \ref{fig:Ex_underestimation1} (a), in which $\task{1}$ can be preempted once by $\task{2}$. But $\task{2}$ can preempt $\task{1}$ twice and the response time of $\task{1}$ can be as large as 30 as shown in Fig. \ref{fig:Ex_underestimation1} (b) when the start of $\task{2}$ is delayed by the preemption of $\task{0}$. Since the Y\&W method only considers the release time variation by predecessors but not the start time variation by preemption, it fails to obtain a conservative WCRT.

\begin{figure}[ht]
\centerline{\includegraphics[width=9.5cm]{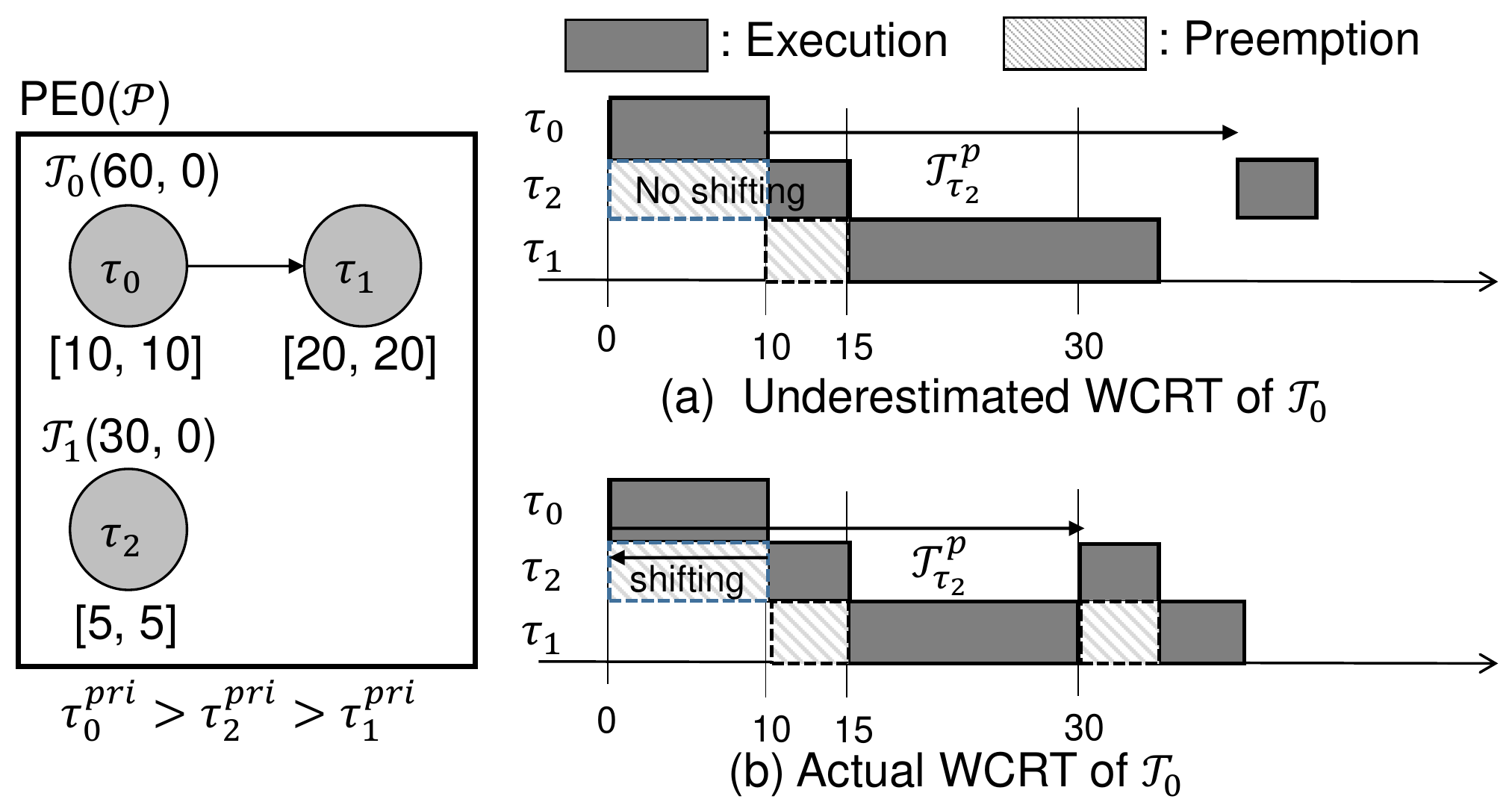}}
\caption{An underestimated WCRT example in the Y\&W method}
\label{fig:Ex_underestimation1}
\end{figure}

Another under-estimate example of Fig. \ref{fig:Ex_underestimation2} shows that period shifting and phase adjustment are inter-dependent while the Y\&W method treats them separately. In the Y\&W method, $\fphase{1}{3}$ and $\rphase{2}{3}$ are 60 assuming that task $\task{3}$ is first released at time 50 simultaneously with $\task{1}$ and the next release will be 100 time units after ignoring the period shifting effect. But the worst case of preemption occurs when $\task{3}$ is first released at time 0. Then the next release of $\task{3}$ can appear after $\task{2}$ executes 10 time units as shown in Fig. \ref{fig:Ex_underestimation2}. So, the WCRT of $\graph{0}$ is 110 since $\task{3}$ appears once per $\graph{0}$ execution in the Y\&W method while it is 130 since $\task{3}$ appears twice per $\graph{0}$ execution in the worst case. Motivated from this example, the proposed technique considers period shifting and phase adjustment holistically, which will be explained in the next section.

\begin{figure}[ht]
\centerline{\includegraphics[width=9.5cm]{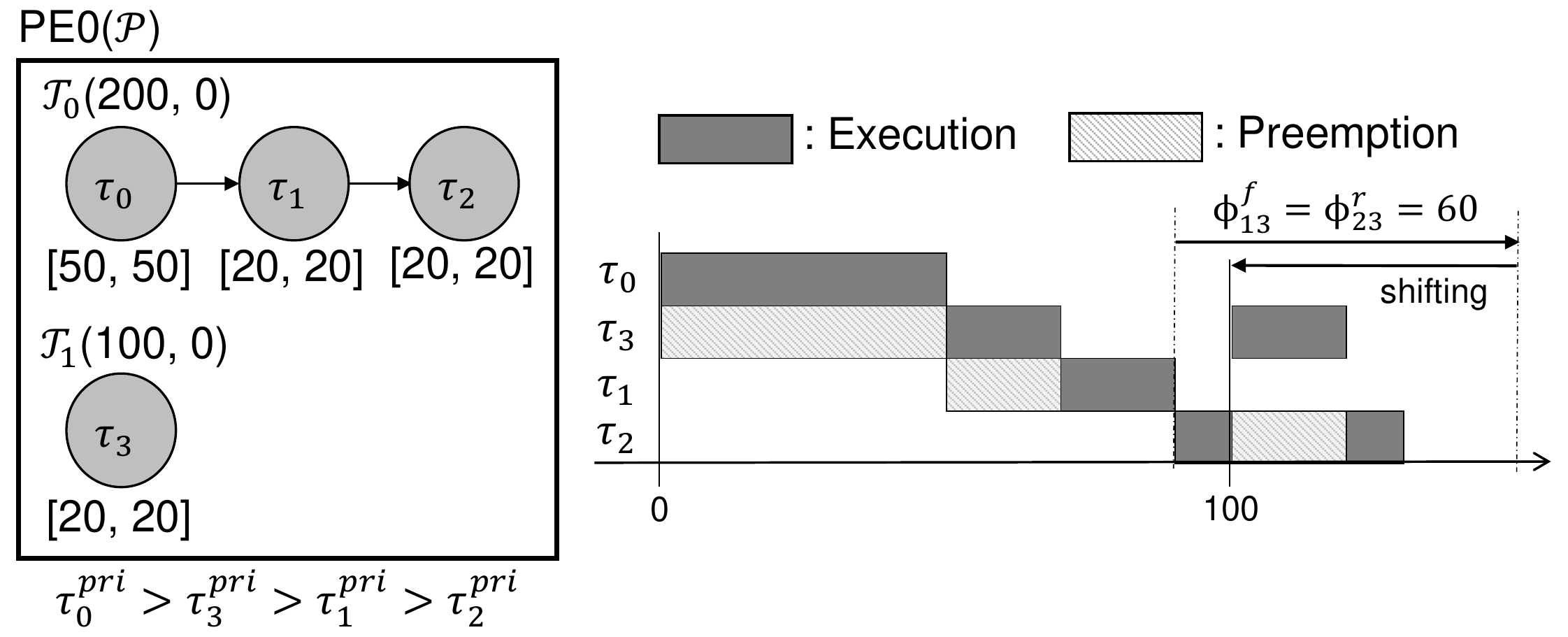}}
\caption{An example of incorrect phase adjustment}
\label{fig:Ex_underestimation2}
\end{figure}

\subsection{Overestimation made by the Y\&W method}

\begin{figure}[ht]
\centerline{\includegraphics[width=9.5cm]{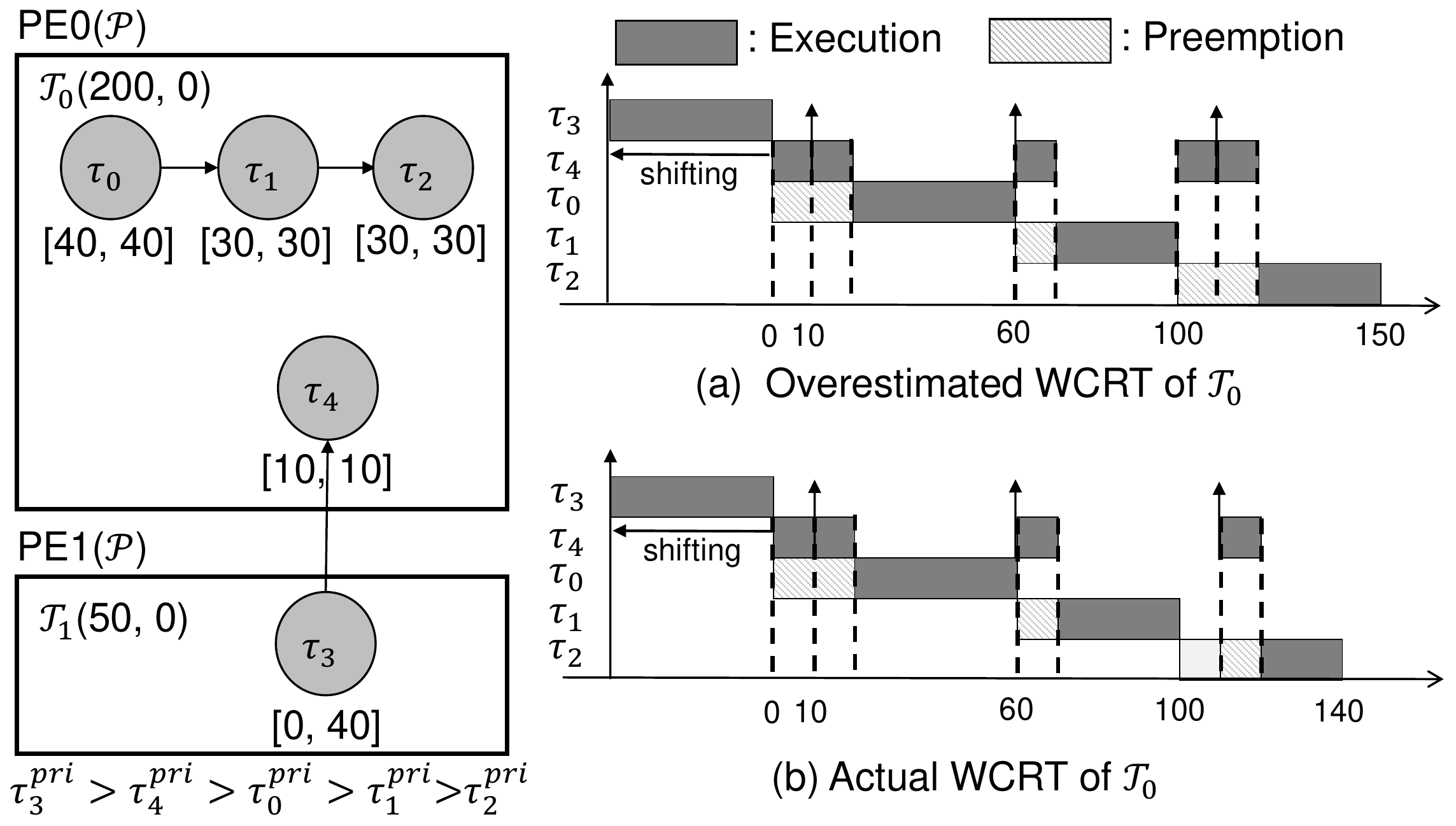}}
\caption{Overestimation due to repeated period shifting}
\label{fig:Ex_overestimation1}
\end{figure}

In the period shifting technique, the release time difference is added to the response time in the preemption delay computation. It should be applied only once in a sequence of releases of the preempting task. But the Y\&W method applies period shifting to all tasks independently, as shown in the example of Fig. \ref{fig:Ex_overestimation1} where period shifting is applied to all tasks, $\task{0}$ to $\task{2}$, in $\graph{0}$. As a result $\task{4}$ makes 5 preemptions in total to make the WCRT of the task graph be 150. But the actual WCRT is 140 as shown in Fig. \ref{fig:Ex_overestimation1} (b) since $\task{4}$ can make 4 preemptions at most. In the Y\&W method, period shifting value for $\task{4}$ is 40 because of the execution time variation of its predecessor $\task{3}$. Then $\task{4}$ preempts $\task{0}$ twice to make the WCRT of $\task{0}$ be $60=40+\lceil{\frac{(60-0+40)}{50}}\rceil\cdot10$. $\rphase{1}{4}$ is computed to $40=(0-60)\bmod50$, and the WCRT of $\task{1}$ is $40=30+\lceil{\frac{(40-40+40)}{50}}\rceil\cdot10$. Since $\rphase{2}{4}$ becomes $0=40-40$, the WCRT of $\task{2}$ is $50=30+\lceil{\frac{(50-0+40)}{50}}\rceil\cdot10$, experiencing two preemptions. By integrating period shifting into phase adjustment we could improve the tightness of the WCRT by removing the problem of redundant application of period shifting, which will be explained in the next section.

\begin{figure}[ht]
\centerline{\includegraphics[width=9cm]{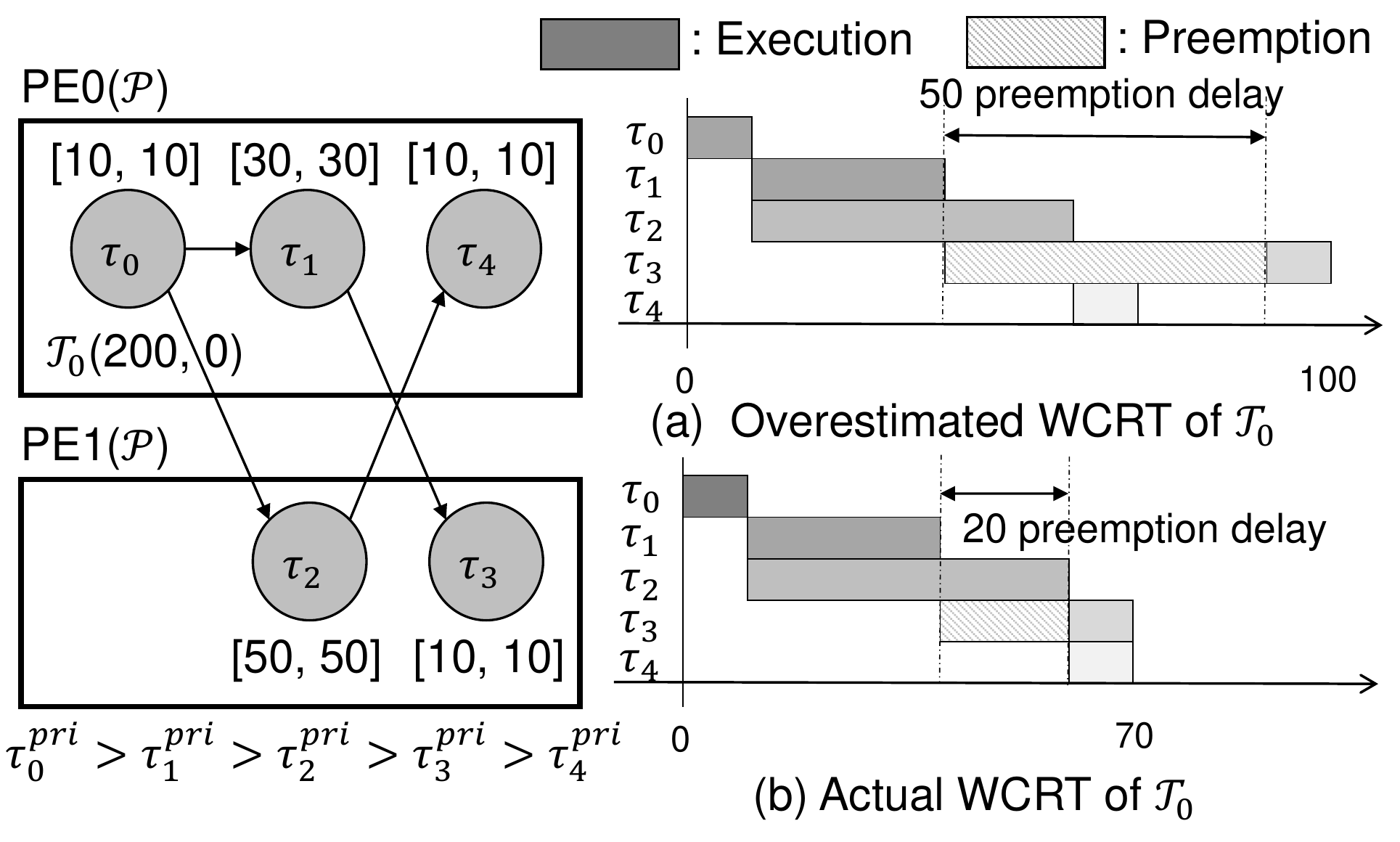}}
\caption{An example of overestimation (Y\&W method)}
\label{fig:Ex_overestimation2}
\end{figure}

Another cause of overestimation comes from the fact that the Y\&W method does not consider partial preemptions. In the example of Fig. \ref{fig:Ex_overestimation2}, $\task{1}$ cannot preempt $\task{4}$ since two tasks are separated. On the other hand, $\task{2}$ and $\task{3}$ are not separated because $\task{3}$ can be released during the execution of $\task{2}$. In the Y\&W method, however, the WCET of $\task{3}$ is wholly added to the preemption delay of $\task{2}$. Partial preemption may occur between tasks in the same task graph as shown in this example, which is not considered in the baseline response time analysis. In the proposed analysis, however, we perform scheduling of tasks in the same task graph so that we could detect this kind of partial preemption precisely.

\section{Proposed Analysis Technique: Hybrid Performance Analysis} \label{sec:HPA}

The proposed technique called HPA (hybrid performance analysis) extends and combines the STBA approach and the RTA method: the former is to account for interference between tasks in a same task graph and the latter for interference from the other task graphs.

Fig. \ref{fig:HPA_overallflow} shows the algorithm flow of the proposed technique. First, we compute three pairs of time bound information for each task: release time bound($\minR{i}$, $\maxR{i}$), start time bound ($\minS{i}$, $\maxS{i}$), and finish time bound ($\minF{i}$, $\maxF{i}$). Unlike the STBA technique [Kim et al. 2013]\ignore{\cite{ref5}} that assumes to schedule all task graphs together, the proposed technique schedules each task graph separately at the time bound computation step. The interference from the other task graphs is considered by the holistic phase adjustment technique based on the period shifting amount computed in the previous iteration. A period shifting amount is updated after time bound computation. This process is repeated until every value is converged.

\begin{figure}[ht]
\centerline{\includegraphics[width=11.5cm]{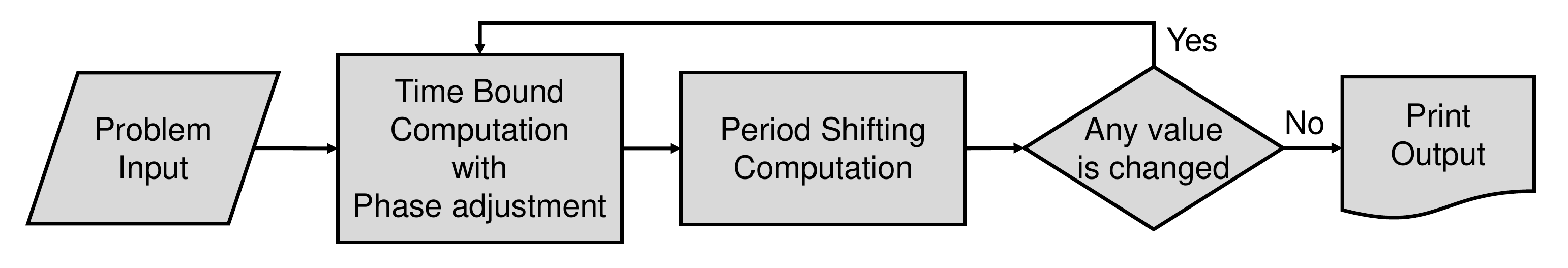}}
\caption{The proposed HPA technique overall flow}
\label{fig:HPA_overallflow}
\end{figure}

In this section, we explain the key techniques of the proposed analysis and discuss how to achieve a safe and tighter bound of the WCRT. At first, how to compute time bounds is explained. Secondly, we derive and optimize a conservative bound of a period shifting amount from $\task{i}$ to $\task{t}$ which bounds the maximum preemption count of $\task{i}$ between release time and the maximum start time of $\task{t}$. Finally, \emph{period shifting} is combined with \emph{phase adjustment} technique in the proposed \emph{holistic phase adjustment} technique.

\subsection{Time Bound Computation}

Let $\release{t}$, $\start{t}$, and $\finish{t}$ denote the actual release time, start time, and
finish time of task $\task{t}$. In our task graph model, the release time of a task is the maximum finish time of its immediate predecessors , as summarized in the following definition.

\begin{defn}
\[
\release{t} =\left\{\begin{array}{rl}
	\graphRelease{\task{t}}, & \textrm{if $\task{t}$ is a source task}\\
	\max_{\task{p}\in{pred(\task{t})}}{\finish{p}}, & \textrm{otherwise}\\
\end{array} \right.
\]
\end{defn}

where $\graphRelease{\task{t}}$ denotes the release time of the task graph. Then the minimum ($\minR{t}$) and maximum ($\maxR{t}$) release time bound pair is computed as follows:

\begin{equation} \label{eq:Eq10}
\minR{t}=\left\{\begin{array}{rl}
	0, & \textrm{if $\task{t}$ is a source task}\\
	\max_{\task{p}\in{pred(\task{t})}}{\minF{p}}, & \textrm{otherwise}\\
\end{array} \right.
\end{equation}

\begin{equation} \label{eq:Eq11}
\maxR{t}=\left\{\begin{array}{rl}
	\jitter{\task{t}}, & \textrm{if $\task{t}$ is a source task}\\
	\max_{\task{p}\in{pred(\task{t})}}{\maxF{p}}, & \textrm{otherwise}\\
\end{array} \right.
\end{equation}

The earliest and the latest release times of a non-source task are defined as the maximum value among the earliest and the latest finish times of predecessors, respectively since it becomes executable only after all predecessor tasks are finished.

\begin{lemm} \label{theorem3}
$\minR{t}$ and $\maxR{t}$ are conservative, or \emph{ $\minR{t} \le \release{t} \le \maxR{t}$}.
\end{lemm}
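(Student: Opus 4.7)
The plan is to prove Lemma \ref{theorem3} by structural induction on the position of $\task{t}$ in the task graph, following a topological order of the DAG $\graph{\task{t}}$. The two inequalities $\minR{t} \le \release{t}$ and $\release{t} \le \maxR{t}$ will be handled in parallel, since both reduce, via the recurrences \eqref{eq:Eq10}--\eqref{eq:Eq11}, to the same monotonicity argument applied to the finish time bounds of predecessors.

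For the base case I would take $\task{t}$ to be a source task. By definition, $\release{t} = \graphRelease{\task{t}}$, the release time of the enclosing task graph. Since the activation of $\graph{\task{t}}$ is characterized by $(\period{\task{t}},\jitter{\task{t}})$, we have $0 \le \graphRelease{\task{t}} \le \jitter{\task{t}}$ after shifting to the nominal arrival instant; this matches the assignments $\minR{t}=0$ and $\maxR{t}=\jitter{\task{t}}$ exactly, so both bounds hold.

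For the inductive step I would assume that for every predecessor $\task{p}\in pred(\task{t})$, the finish time bounds are conservative, i.e.\ $\minF{p} \le \finish{p} \le \maxF{p}$. Using the defining equation $\release{t} = \max_{\task{p}\in pred(\task{t})}\finish{p}$ and the fact that $\max$ is monotone in each of its arguments, I would apply $\max$ to both sides of the predecessor inequalities to obtain
\[
\max_{\task{p}\in pred(\task{t})}\minF{p} \;\le\; \max_{\task{p}\in pred(\task{t})}\finish{p} \;\le\; \max_{\task{p}\in pred(\task{t})}\maxF{p},
\]
which by \eqref{eq:Eq10} and \eqref{eq:Eq11} is exactly $\minR{t}\le\release{t}\le\maxR{t}$. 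Because the task graph is acyclic, every predecessor precedes $\task{t}$ in some topological order, so the induction is well founded.

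The main obstacle is that this argument is not really self-contained: it rests on the conservativeness of $\minF{p}$ and $\maxF{p}$, which are not yet established at this point in the paper. The cleanest way to handle this is to state the lemma as one leg of a mutual induction over release, start, and finish time bounds along the topological order, where each step uses the previously proven bound at the same task (release $\Rightarrow$ start $\Rightarrow$ finish) and the finish bounds of strictly earlier tasks. I would therefore flag that Lemma \ref{theorem3} is implicitly invoked as the release-time leg of that joint induction, with the start- and finish-time legs appearing as companion lemmas in the sequel; the only nontrivial ingredient specific to this lemma is the monotonicity of $\max$, everything else being definitional.
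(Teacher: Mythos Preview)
Your proposal is correct and mirrors the paper's own proof almost exactly: both argue by induction along the DAG, handle the source-task base case via $0\le\graphRelease{\task{t}}\le\jitter{\task{t}}$, and for non-source tasks apply monotonicity of $\max$ to the predecessors' finish-time bounds. Your explicit remark that the argument is really one leg of a mutual induction with the start- and finish-time lemmas is a point the paper leaves implicit (it simply writes ``assume that it holds for all predecessor tasks'' and then silently uses $\minF{p}\le\finish{p}\le\maxF{p}$), so your framing is, if anything, more careful than the original.
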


\begin{proof}
If $\task{t}$ is a source task, $\minR{t}=0\le\release{t}\le\jitter{\task{t}}=\maxR{t}$ since $0 \le \graphRelease{\task{t}} \le \jitter{\task{t}}$.

For non-source task $\task{t}$, assume that it holds for all predecessor tasks of $\task{t}$. Since $ \minR{t} = \max_{\task{p}\in{pred(\task{t})}}{\minF{p}} \le \max_{\task{p}\in{pred(\task{t})}}{\finish{p}} = \release{t} \le \max_{\task{p}\in{pred(\task{t})}}{\maxF{p}} = \maxR{t}$, $\minR{t} \le \release{t} \le \maxR{t}$.

By induction, the lemma holds. Q.E.D.
\end{proof}

For task $\task{t}$ to start, it should be already released and the processor must be available: The start time of $\task{t}$ is not smaller than the release time and the maximum time among finish times of tasks that have higher priority, start earlier, and finish after task $\task{t}$ is released. Formally, the start time is defined as follows:

\begin{defn}
$\start{t}=\max(\release{t}, \max_{\task{s}\in{\Gamma_{\task{t}}}}{\finish{s}})$
\end{defn}

where set ${\Gamma_{\task{t}}}$ is defined as ${\Gamma_{\task{t}}}=\{\task{s}|\proc{s}=\proc{t},\pri{s}>\pri{t},\release{t}<\finish{s},\start{s} \le \start{t}\}$ for the preemptive scheduling policy, and ${\Gamma_{\task{t}}}=\{\task{s}|\proc{s}=\proc{t},(\pri{s}>\pri{t},\release{t}<\finish{s},\start{s}\le\start{t}) or (\pri{s}<\pri{t}, \start{s}<\release{t}<\finish{s})\}$ for the non-preemptive scheduling policy.

One the other hand, the earliest start time $\minS{t}$ is formulated as follows:

\begin{equation} \label{eq:Eq12}
\minS{t}=\max(\minR{t}, \max_{\task{s}\in\condset{A}{t}}{\minF{s}})
\end{equation}

where set $\condset{A}{t}$ for the preemptive scheduling policy is defined as $\condset{A}{t}=\{\task{s}|\task{s}\in\graph{\task{t}},\proc{s}=\proc{t},\pri{s}>\pri{t},\minR{t}<\minF{s},\maxS{s}\le\minS{t}\}$ and for the non-preemptive scheduling policy as $\condset{A}{t}=\{\task{s}|\task{s}\in\graph{\task{t}},\proc{s}=\proc{t},(\pri{s}>\pri{t},\minR{t}<\minF{s},\maxS{s}\le\minS{t}) or (\pri{s}<\pri{t},\maxS{s}<\minR{t}<\minF{s})\}$. If higher priority task $\task{s}$ always starts before $\minS{t}$ and the earliest finish time of $\task{s}$ is later than $\minR{t}$, $\task{t}$ should wait for the completion of $\task{s}$. In case a non-preemptive scheduling is used, a lower priority task that starts before $\minR{t}$ is included. Note that we only consider the tasks in the same task graph since the other tasks can appear at any time, so they should be ignored for conservative estimation of the minimum start time.

To estimate the maximum start time $\maxS{t}$ for conservative estimation, we should consider all possible preemptions.

\begin{equation} \label{eq:Eq13}
\maxS{t}=\maxR{t}+Delay_t^l+Delay_t^h
\end{equation}

where $Delay_t^l$ and $Delay_t^h$ denote the amounts of preemption between the release time and the start time by lower and higher priority tasks respectively. For the premeptive scheduling policy, $Delay_t^l$ is zero. In case a lower priority task is running when $\task{t}$ is released, $\task{t}$ should wait until the current lower priority task finishes in the non-preemptive scheduling policy, which is accounted as follows:

\begin{equation} \label{eq:Eq14}
Delay_t^l=\left\{\begin{array}{rl}
	0, & \textrm{if $\forall_{\task{p}\in{pred(\task{t})}}{(\proc{p}=\proc{t})}$}\\
	\max\left(\begin{array}{c}
		\max\limits_{\task{s}\in{\condset{B}{t}}}{\min(\WCET{s},\maxF{s}-\maxR{t})},\\
		\max\limits_{\task{s}\in{\condset{C}{t}}}{\WCET{s}}
	\end{array}\right), & \textrm{otherwise}\\
\end{array} \right.
\end{equation}

where $\condset{B}{t}=\{\task{s}|\task{s}\in\graph{\task{t}},\proc{s}=\proc{t},\pri{s}<\pri{t},\minS{s}<\maxR{t}<\maxF{s}\}$ and $\condset{C}{t}=\{\task{s}|\task{s}\not\in\graph{\task{t}},\proc{s}=\proc{t},\pri{s}<\pri{t}\}$. Set $\condset{B}{t}$ includes lower priority tasks in the same task graph that may start earlier than $\task{t}$ and delay the start time of $\task{t}$. Note that partial blocking is considered in the formula by $\maxF{s}-\maxR{t}$, which can be smaller than $\WCET{s}$. On the other hand, set $\condset{C}{t}$ includes all lower priority tasks in the other task graphs. Since they can appear at any time, we take the maximum WCET for conservative estimation. In case every predecessor is mapped to the same PE, $Delay_t^l$ is zero.

$Delay_t^h$ is commonly formulated for both scheduling policies as follows:

\begin{eqnarray} \label{eq:Eq15}
Delay_t^h=\sum_{\task{s}\in\condset{D}{t}}{\min(\WCET{s}, \maxF{s}-\maxR{t})}+{}\qquad\qquad\qquad\nonumber\\
\qquad\qquad\qquad\sum_{\task{s}\in\condset{E}{t}}{\bigg\lceil{\frac{\max(0,\maxS{t}-\maxR{t}+1-\rphase{t}{s})}{\period{\task{s}}}}\bigg\rceil\cdot\WCET{s}}
\end{eqnarray}

where $\condset{D}{t}=\{\task{s}|\task{s}\in\graph{\task{t}},\proc{s}=\proc{t},\pri{s}>\pri{t},\minS{s}\le\maxS{t},\maxR{t}<\maxF{s}\}$ and $\condset{E}{t}=\{\task{s}|\task{s}\not\in\graph{\task{t}},\proc{s}=\proc{t},\pri{s}>\pri{t}\}$. Set $\condset{D}{t}$ includes higher priority tasks in the same task graph that can possibly delay the start time of $\task{t}$. Partial preemption is considered similarly to $Delay_t^l$ formulation. For the example of Fig. \ref{fig:Ex_overestimation2}, we precisely compute the preemption delay from $\task{2}$ to $\task{3}$ as $\maxF{2}-\maxR{3}=60-40=20$, which is less that $\WCET{2}$. For the higher priority tasks in the other task graphs, we use a similar formula as the response time analysis to compute the maximum preemption delay. The notation $\rphase{t}{s}$ corresponds to the request phase adjustment that computes the minimum distance from the release time of $\task{t}$ to the next release time of a preempting task $\task{s}$. How to formulate the request phase $\rphase{t}{s}$ will be explained in the next subsection.

\begin{lemm} \label{theorem4}
$\minS{t}$ and $\maxS{t}$ are conservative, or \emph{$\minS{t} \le \start{t} \le \maxS{t}$}.
\end{lemm}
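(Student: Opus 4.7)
The plan is to prove both inequalities together by induction, using a well-founded order on tasks that is consistent with the fixpoint computation of the time bounds (for instance, processing tasks in increasing order of $\maxS{}$, or equivalently, in the order in which the bounds become stable under the iteration described in the overall flow). The inductive hypothesis assumes that the release-time bound (Lemma~\ref{theorem3}) is already in hand and that $\minS{s}\le\start{s}\le\maxS{s}$ holds for every task $\task{s}$ processed earlier. Both the $\condset{A}{t}$ test ($\maxS{s}\le\minS{t}$) and the $\condset{B}{t},\condset{D}{t}$ tests ($\minS{s}\le\maxS{t}$, etc.) are designed precisely to select tasks for which the bounds are available before $\task{t}$'s bound is finalized.

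For the lower bound, $\minS{t}\le\start{t}$, I split $\minS{t}$ into its two max arguments. That $\start{t}\ge\minR{t}$ follows immediately from $\start{t}\ge\release{t}$ together with Lemma~\ref{theorem3}. For every $\task{s}\in\condset{A}{t}$ we must show $\start{t}\ge\minF{s}$. Using that $\task{s}$ and $\task{t}$ share a PE with $\pri{s}>\pri{t}$, a case analysis on the scheduler shows that at time $\start{t}$ the task $\task{s}$ cannot be simultaneously released and unfinished: in preemptive mode this would preempt $\task{t}$, and in non-preemptive mode the second branch of $\condset{A}{t}$ already absorbs the lower-priority in-progress case. Hence either $\finish{s}\le\start{t}$ (whence $\start{t}\ge\finish{s}\ge\minF{s}$) or $\release{s}>\start{t}$. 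The latter alternative is ruled out by combining the inductive hypothesis $\start{s}\le\maxS{s}\le\minS{t}$ with the $\condset{A}{t}$ condition $\minR{t}<\minF{s}$ and Lemma~\ref{theorem3}, yielding the required $\start{t}\ge\minF{s}$.

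For the upper bound, $\start{t}\le\maxS{t}$, I decompose the actual gap $\start{t}-\release{t}$ into time spent blocked by lower-priority tasks and time spent preempted by higher-priority tasks, and bound each piece separately against $Delay_t^l$ and $Delay_t^h$. For $Delay_t^l$, at most one lower-priority task can block $\task{t}$ under non-preemptive scheduling; if it lies in the same graph its remaining work is no more than $\min(\WCET{s},\maxF{s}-\maxR{t})$ (partial blocking, justified by Lemma~\ref{theorem3}), and if it lies in a different graph the full $\WCET{s}$ is the only safe bound. Taking the max over $\condset{B}{t}\cup\condset{C}{t}$ dominates the actual blocker. For $Delay_t^h$, in-graph interference from $\condset{D}{t}$ is summed with the same partial-preemption bound (each such $\task{s}$ runs at most once between $\release{t}$ and $\start{t}$), while out-of-graph interference from $\condset{E}{t}$ is handled by the standard RTA ceiling argument, with the request phase $\rphase{t}{s}$ shifting the window to reflect the minimum spacing between successive releases of $\task{s}$.

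The main obstacle is that the formula for $\maxS{t}$ in equation~(\ref{eq:Eq13}) is implicit: $\maxS{t}$ appears on both sides through the ceiling in $Delay_t^h$, so the argument is really about the least fixpoint of a monotone operator rather than a closed-form expression. I would therefore have to argue that the iteration used to compute $\maxS{t}$ preserves the invariant $\start{t}\le\maxS{t}$ at each step and that convergence yields the smallest such upper bound. A secondary subtlety is proving the in-graph lower-bound case without circularity; this is what forces the induction order to follow the dependency of the bound computations, and why the definitions of $\condset{A}{t}$ through $\condset{E}{t}$ use the already-computed bounds of other tasks rather than $\task{t}$'s own bounds.
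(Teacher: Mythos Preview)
Your proposal is correct and follows essentially the same approach as the paper's proof: the lower bound is established by showing each $\task{s}\in\condset{A}{t}$ satisfies $\minF{s}\le\start{t}$ (the paper phrases this as splitting $\condset{A}{t}$ against the set $\Gamma_{\task{t}}$ from Definition~2, which is exactly your finished-versus-not-yet-released dichotomy), and the upper bound is obtained by arguing that every actual blocking or preemption contributing to $\start{t}-\release{t}$ is already captured in $Delay_t^l$ or $Delay_t^h$. Your explicit attention to the induction order and to the fixpoint nature of equation~(\ref{eq:Eq13}) is a detail the paper leaves implicit.
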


\begin{proof}
First we prove that $\minS{t}\le\start{t}$.

Since $\minR{t} \le \release{t}$, we need to prove $\max_{\task{s}\in\condset{A}{t}}{\minF{s}} \le \start{t}$.
$$\max_{\task{s}\in\condset{A}{t}}{\minF{s}} \le \max(\max_{\task{s}\in{\Gamma_{\task{t}}}}{\minF{s}}, \max_{\task{s}\in\condset{A}{t}- {\Gamma_{\task{t}}}}{\minF{s}}).$$
Since $\condset{A}{t}- {\Gamma_{\task{t}}} \subseteq \{\task{s}|\task{s}\in\graph{\task{t}},\proc{s}=\proc{t},\release{t}\ge\finish{s}\}$ for both preemptive and nonpreemptive cases, and $\finish{s} \ge \minF{s}$, $\max_{\task{s}\in\condset{A}{t}- {\Gamma_{\task{t}}}}{\minF{s}} \le \release{t}$.
$$\max_{\task{s}\in\condset{A}{t}}{\minF{s}} \le \max(\max_{\task{s}\in{\Gamma_{\task{t}}}}{\finish{s}}, \release{t}) = \start{t}.$$

Next, We prove that $\start{t}\le\maxS{t}=\maxR{t}+Delay_t^l+Delay_t^h$ by showing that no task will delay the start of $\task{t}$ without being considered in $Delay_t^l$ or $Delay_t^h$ computation.

\begin{enumerate}
\item Suppose actual preemption amount by lower priority tasks is larger than $Delay_t^l$. For preemptive scheduling policy, it is impossible because a lower priority task cannot preempt $\task{t}$. For nonpreemptive scheduling policy, only one lower priority task can delay the execution of $\task{t}$. For $\task{s}\not\in\graph{\task{t}}$, since $\condset{C}{t}$ includes all lower priority tasks, it is not possible for the preemption amount to be larger than the maximum worst case execution time of tasks in $\condset{C}{t}$.

Consider $\task{s}\in \graph{\task{t}}$. $\task{s}$ can delay $\task{t}$ by $\finish{s}-\release{t}$ at most if $\task{s}$ starts before $\task{t}$ is released or $\start{s} < \release{t}$.
$$\release{t} + \max_{\task{s}\in\graph{\task{t}}, \finish{s}-\release{t}> 0}{(\finish{s}-\release{t})}$$
$$\le \maxR{t}+ \max_{\task{s}\in\graph{\task{t}}, \finish{s}-\maxR{t} > 0}{(\finish{s}-\maxR{t})}$$
$$\le \maxR{t} + \max_{\task{s}\in\graph{\task{t}}, \maxR{t} < \maxF{s}}{(\maxF{s}-\maxR{t})}.$$
Since $\{\task{s}| \start{s} < \release{t}\} \subseteq \{\task{s} | \minS{s} < \maxR{t}\}$ and $\{\task{s}|\task{s}\in\graph{\task{t}},\proc{s}=\proc{t},\pri{s}<\pri{t},\maxR{t}<\maxF{s}\} \cap \{\task{s}|\task{s}\in\graph{\task{t}},\proc{s}=\proc{t},\pri{s}<\pri{t},\minS{s}<\maxR{t}\}\subseteq \condset{B}{t}$,
$$\release{t} + \max\limits_{\task{s}\in\graph{\task{t}}, \finish{s}-\release{t}> 0}{(\finish{s}-\release{t})} \le \maxR{t} + \max\limits_{\task{s}\in\condset{B}{t}}{(\maxF{s}-\maxR{t})}.$$
Since $\task{s}$ cannot be executed longer than $\WCET{s}$,
$$\release{t} + \max_{\task{s}\in{\condset{B}{t}}}{\min(\WCET{s},\finish{s}-\release{t})} \le \maxR{t} + Delay_t^l.$$
Finally, consider the case when all predecessors are mapped to the same processor. Since there is no time interval between the finish time of the latest predecessor task $\task{p}$ and the release time of $\task{t}$, no lower prioirty task can start after $\task{p}$ finishes and before $\task{t}$ is released. \label{theorem3_2_1}

\item Consider higher priority tasks that may preempt $\task{t}$. For $\task{s}\in\graph{\task{t}}$ and $\task{s}\in\condset{D}{t}$, the preemption amount is bounded by $\min(\WCET{s}, \maxF{s}-\maxR{t})$ similarly to the proof of the case $\task{s}\in\graph{\task{t}}$ and $\task{s}\in\condset{B}{t}$ in \ref{theorem3_2_1}.

Suppose there is a task $\task{s}$ which can preempt $\task{t}$ such that $\task{s}\in\graph{\task{t}}$ and $\task{s}\not\in\condset{D}{t}$ . Then $\maxS{t}<\minS{s}$ or $\maxF{s}\le\maxR{t}$. $\task{s}$ always starts later than $\task{t}$ since $\maxS{t}\le\minS{s}$, or $\task{s}$ finishes earlier than $\task{t}$ since $\maxF{s}\le\maxR{t}$. No preemption may be occurred so that there exists no such task.

For $\task{s}\not\in\graph{\task{t}}$, the proof is trivial if the request phase is conservatively given. The conservativeness of the request phase, $\rphase{t}{s}$ will be proven in Lemma \ref{theorem2}. \label{theorem3_2_2}
\end{enumerate}

By \ref{theorem3_2_1} and \ref{theorem3_2_2}, $\start{t}\le\maxS{t}$. Q.E.D.
\end{proof}

The minimum finish time $\minF{t}$ is formulated as follows:

\begin{equation} \label{eq:Eq16}
\minF{t}=\minS{t}+\BCET{t}+Preempt_t^B
\end{equation}

where $Preempt_t^B$ represents the unavoidable (best-case) preemption delay that is zero for the non-preemptive scheduling policy. For the preemptive scheduling policy, $Preempt_t^B$ becomes

\begin{equation} \label{eq:Eq17}
Preempt_t^B=\sum_{\task{s}\in\condset{F}{t}}{\BCET{s}}
\end{equation}

where $\condset{F}{t}=\{\task{s}|\task{s}\in\graph{\task{t}},\proc{s}=\proc{t},\pri{s}>\pri{t},\minS{t} \le \minS{s} \le \maxS{s} \le \minF{t}\}$. $\condset{F}{t}$ is a set of higher priority tasks which always start to execute and preempt $\task{t}$ during $\task{t}$ is running from $\minS{t}$ to $\minF{t}$.

The maximum finish time $\maxF{t}$ is formulated as follows:

\begin{equation} \label{eq:Eq18}
\maxF{t}=\maxS{t}+\WCET{t}+Preempt_t^W
\end{equation}

where $Preempt_t^W$ represents the worst-case preemption delay that is zero for the non-preemptive scheduling policy. $Preempt_t^W$ for the preemptive scheduling policy is formulated as follows:

\begin{equation} \label{eq:Eq19}
Preempt_t^W=\sum_{\task{s}\in\condset{G}{t}}{\WCET{s}}+\sum_{\task{s}\in\condset{E}{t}}{\bigg\lceil{\frac{\max(0,\maxF{t}-\maxS{t}-\sphase{t}{s})}{\period{\task{s}}}}\bigg\rceil\cdot\WCET{s}}
\end{equation}

where $\condset{G}{t}=\{\task{s}|\task{s}\in\graph{\task{t}},\proc{s}=\proc{t},\pri{s}>\pri{t},\maxS{t}<\minS{s}\le\maxF{t}\}$, indicating a set of higher priority tasks which can appear during the execution of $\task{t}$. The notation $\sphase{t}{s}$ is the start phase that is the the minimum distance from the start time of $\task{t}$ to the next release time of a preempting task $\task{s}$. For tasks in the other task graphs, the maximum possible preemption delay is computed similarly to (\ref{eq:Eq15}). The start phase will be explained in the next subsection.

\begin{lemm} \label{theorem5}
$\minF{t}$ and $\maxF{t}$ are conservative, or \emph{$\minF{t} \le \finish{t} \le \maxF{t}$}.
\end{lemm}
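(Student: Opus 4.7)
The plan is to decompose $\finish{t}$ into $\start{t}$ plus the actual execution time plus the actual preemption delay experienced by $\task{t}$ after it starts, and to bound each piece. By Lemma~\ref{theorem4} we already have $\minS{t} \le \start{t} \le \maxS{t}$, and the actual execution time of $\task{t}$ lies in $[\BCET{t}, \WCET{t}]$ by definition. The remaining obligation is to show that $Preempt_t^B$ is a conservative lower bound and $Preempt_t^W$ a conservative upper bound on the preemption delay between $\start{t}$ and $\finish{t}$. As in Lemmas~\ref{theorem3} and~\ref{theorem4}, the overall argument is naturally phrased as an induction on a topological ordering of the task graph, so that the bounds on predecessors (hidden inside $\minS{t}$ and $\maxS{t}$) may be used.

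For $\minF{t}\le \finish{t}$ on a non-preemptive PE the claim is immediate since $Preempt_t^B=0$. For a preemptive PE I would argue that every task $\task{s}\in\condset{F}{t}$ must preempt $\task{t}$ for at least $\BCET{s}$ time units. Indeed, $\task{s}$ lies in the same task graph on the same PE with strictly higher priority, and its start satisfies $\minS{t}\le\minS{s}\le\start{s}\le\maxS{s}\le\minF{t}$ by the definition of $\condset{F}{t}$ together with Lemma~\ref{theorem4}. Hence $\task{s}$ is guaranteed to start while $\task{t}$ is still in its best-case execution window, forcing at least $\BCET{s}$ of preemption, and summing over $\condset{F}{t}$ gives exactly $Preempt_t^B$.

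For $\finish{t}\le\maxF{t}$ I would show that every source of preemption occurring after $\start{t}$ is covered. For a higher-priority $\task{s}\in\graph{\task{t}}$ that actually preempts $\task{t}$ during $[\start{t},\finish{t}]$, we have $\start{s}\le\finish{t}\le\maxF{t}$, so $\minS{s}\le\maxF{t}$; if also $\start{s}>\maxS{t}$ then $\minS{s}>\maxS{t}$ need not hold, but the defining inequality $\maxS{t}<\minS{s}\le\maxF{t}$ of $\condset{G}{t}$ is satisfied when we pick the witness. The remaining case $\start{s}\le\maxS{t}$ has already been absorbed into $Delay_t^h$ from equation~(\ref{eq:Eq15}) during the proof of Lemma~\ref{theorem4} and is therefore reflected in $\maxS{t}$; partitioning along $\start{s}\lessgtr\maxS{t}$ avoids double counting. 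Each such $\task{s}\in\condset{G}{t}$ contributes at most $\WCET{s}$. For cross-graph preemptors $\task{s}\in\condset{E}{t}$, the number of releases of $\task{s}$ landing in the preemption window is at most $\lceil \max(0,\maxF{t}-\maxS{t}-\sphase{t}{s})/\period{\task{s}} \rceil$ by the usual RTA argument, provided that $\sphase{t}{s}$ is indeed a safe lower bound on the distance from $\start{t}$ to the next release of $\task{s}$, which the authors defer to Lemma~\ref{theorem2}.

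The main obstacle is the upper-bound direction: verifying that no preempting task is either missed or double-counted across the two pieces $\maxS{t}$ and $Preempt_t^W$. Preemption that occurs before $\task{t}$ starts has already been baked into $\maxS{t}$ via $Delay_t^h$; preemption that occurs after $\task{t}$ starts is captured by $\condset{G}{t}$ and $\condset{E}{t}$. The boundary case $\start{s}=\maxS{t}$ and, more subtly, the interaction between the cross-graph ceiling term in $Delay_t^h$ (based on $\rphase{t}{s}$) and the one in $Preempt_t^W$ (based on $\sphase{t}{s}$) must be handled carefully so that a single release of $\task{s}$ is not charged twice. The clean way to do this is to note that the two windows $[\maxR{t},\maxS{t}]$ and $[\maxS{t},\maxF{t}]$ are disjoint up to a single endpoint, and that the phases $\rphase{t}{s}$ and $\sphase{t}{s}$ are defined relative to $\maxR{t}$ and $\maxS{t}$ respectively; given those properties and the conservativeness of $\sphase{t}{s}$ from Lemma~\ref{theorem2}, the desired inequality follows.
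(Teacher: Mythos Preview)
Your overall strategy matches the paper's: split $\finish{t}$ into start time, own execution, and post-start preemption, then bound each piece. However, both directions have gaps that the paper's proof closes with arguments you do not supply.

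\textbf{Lower bound.} You assert that every $\task{s}\in\condset{F}{t}$ ``is guaranteed to start while $\task{t}$ is still in its best-case execution window, forcing at least $\BCET{s}$ of preemption.'' The chain $\minS{t}\le\minS{s}\le\start{s}\le\maxS{s}\le\minF{t}$ only gives $\start{s}\ge\minS{t}$, not $\start{s}\ge\start{t}$. If in fact $\minS{t}\le\start{s}<\start{t}$, task $\task{s}$ runs \emph{before} $\task{t}$ starts and does not preempt it at all, so your per-task accounting breaks down. The paper does not argue task by task; instead it starts from $\finish{t}\ge\start{t}+\BCET{t}+\sum_{\start{t}\le\start{s}\le\finish{t}}\BCET{s}$ and then performs a shift, replacing $\start{t}$ by $\minS{t}$ on both sides and observing that the extra tasks admitted into the sum on $[\minS{t},\start{t})$ can contribute at most $\start{t}-\minS{t}$, exactly the amount lost on the left. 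Only after this shift can the summation index be narrowed via $\minS{s}\le\start{s}\le\maxS{s}$ to the set $\condset{F}{t}$. This shift is the missing step.

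\textbf{Upper bound.} Your case split is on the \emph{actual} value $\start{s}\lessgtr\maxS{t}$, whereas $\condset{D}{t}$ and $\condset{G}{t}$ are defined in terms of $\minS{s}\lessgtr\maxS{t}$. The sentence ``$\minS{s}>\maxS{t}$ need not hold, but the defining inequality \ldots\ is satisfied when we pick the witness'' is not an argument: if $\minS{s}\le\maxS{t}<\start{s}$, then $\task{s}\notin\condset{G}{t}$, so the $\WCET{s}$ contribution must come from $\condset{D}{t}$ inside $\maxS{t}$, not from $Preempt_t^W$. The paper avoids this confusion by arguing directly on the complement of $\condset{G}{t}$: if $\task{s}\notin\condset{G}{t}$ then either $\minS{s}\le\maxS{t}$ (so $\task{s}\in\condset{D}{t}$ when $\maxF{s}>\maxR{t}$, already charged to $\maxS{t}$, or $\maxF{s}\le\maxR{t}$, no preemption) or $\minS{s}>\maxF{t}$ (no preemption). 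Partition on $\minS{s}$, not on $\start{s}$, and the double-counting worry you raise at the end disappears without further work.
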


\begin{proof}
(Nonpreemptive) If non-preemptive scheduling is used then no task may preempt $\task{t}$. Hence the finish time is the sum of the start time and the execution time.

(Preemptive) First, we prove that $\minF{t}\le\finish{t}$. Let $\condset{W}{t}$ be $\{\task{s}|\task{s}\in\graph{\task{t}}, \proc{s}=\proc{t}, \pri{s} > \pri{t}\}$.

$\finish{t} \ge \start{t} + \BCET{t} + \sum_{\task{s}\in \condset{W}{t}, \start{t} \le \start{s} \le \finish{t}}{\BCET{s}}$ where the last term accounts for tasks that start to execute during $\task{t}$ is running, and preempt $\task{t}$.

Since $\sum_{\task{s}\in \condset{W}{t}, \start{t} \le \start{s} \le \finish{t}}{\BCET{s}} \le \finish{t}-\start{t}$,\\
and $\sum_{\task{s}\in \condset{W}{t}, \minS{t} \le \start{s} \le \finish{t}}{\BCET{s}} \le \sum_{\task{s}\in \condset{W}{t}, \start{t} \le \start{s} \le \finish{t}}{\BCET{s}}+(\start{t}-\minS{t})$,
$$\finish{t} \ge \start{t} + \BCET{t} + \sum_{\task{s}\in \condset{W}{t}, \start{t} \le \start{s} \le \finish{t}}{\BCET{s}}$$
$$\ge \minS{t} + \BCET{t} + \sum_{\task{s}\in \condset{W}{t}, \minS{t} \le \start{s} \le \finish{t}}{\BCET{s}}$$
$$\ge \minS{t} + \BCET{t} + \sum_{\task{s}\in \condset{W}{t}, \minS{t} \le \minS{s} \le \maxS{s} \le \finish{t}}{\BCET{s}}$$
since $\minS{s} \le \start{s} \le \maxS{s}.$
$$\finish{t} \ge \minF{t} = \minS{t} + \BCET{t} + \sum_{\task{s}\in \condset{W}{t}, \minS{t} \le \minS{s} \le \maxS{s} \le \minF{t}}{\BCET{s}}.$$
Second, we prove that $\finish{t}\le\maxF{t}$ by contradiction.

For $\task{s}\in\graph{\task{t}}$, suppose there is some $\task{s}\not\in\condset{G}{t}$ that satisfies $\task{s}\in\graph{\task{t}}$ and can preempt $\task{t}$. Then $\minS{s}\le\maxS{t}$ or $\maxF{t}<\minS{s}$. The tasks that satisfies $(\task{s}\not\in\condset{G}{t},\task{s}\in\graph{\task{t}},\minS{s}\le\maxS{t},\maxR{t}<\maxF{s})$ belongs to the $\condset{D}{t}$, so that the amount of preemption from those tasks are already included in $\maxS{t}$ and so in $\maxF{t}$. The tasks that satisfies $(\task{s}\not\in\condset{G}{t},\task{s}\in\graph{\task{t}},\minS{s}\le\maxS{t},\maxF{s}\le\maxR{t})$ cannot preempt $\task{t}$ since $\task{s}$ finishes before $\maxR{t}$. The tasks that satisfies $(\task{s}\not\in\condset{G}{t},\task{s}\in\graph{\task{t}},\maxF{t}<\minS{s})$ cannot preempt $\task{t}$ since $\task{s}$ starts after $\maxF{t}$. For $\task{s}\not\in\graph{\task{t}}$, the proof is trivial since we assume that the start phase is conservatively given. The conservativeness of the start phase, $\sphase{t}{s}$ will be proven in Lemma \ref{theorem2}. Q.E.D.
\end{proof}

After determining all time bounds of tasks, we compute the WCRT of each task graph $\graph{}$ as follows:

\begin{equation} \label{eq:Eq20}
\mathcal{R}_{\graph{}}=\max_{\task{s}\in\graph{}}\maxF{s}
\end{equation}

\begin{theo} \label{theorem6}
The HPA technique guarantees the conservativeness of every schedule time bound when it is converged.
\end{theo}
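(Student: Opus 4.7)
The plan is to show that once the HPA iteration has reached a fixed point, the converged values $\minR{t},\maxR{t},\minS{t},\maxS{t},\minF{t},\maxF{t}$ and the phases $\rphase{t}{s},\sphase{t}{s}$ jointly satisfy all the bound inequalities with respect to the actual schedule. The three earlier lemmas (\ref{theorem3}, \ref{theorem4}, \ref{theorem5}) already do almost all the work: they establish that \emph{if} the request and start phases used in the formulas are themselves conservative, \emph{then} each of the release, start, and finish bounds is conservative. So the theorem reduces to verifying that at the fixed point, the mutual dependencies among bounds and phases close up consistently.

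First I would state the structural observation: the algorithm decomposes into a time-bound computation step (equations (\ref{eq:Eq10})--(\ref{eq:Eq19})) and a phase update step (whose conservativeness is postponed to the referenced Lemma \ref{theorem2}). At convergence, neither step changes the current values, so the converged bounds are exactly the outputs of the bound formulas when fed the converged phases, and vice versa. This lets me treat the proof as a static, self-consistent system instead of having to track the iteration trajectory. Using this, I would invoke Lemma \ref{theorem2} to assert that the converged phases are conservative with respect to the converged time bounds, after which Lemmas \ref{theorem3}, \ref{theorem4}, \ref{theorem5} deliver $\minR{t}\le\release{t}\le\maxR{t}$, $\minS{t}\le\start{t}\le\maxS{t}$, and $\minF{t}\le\finish{t}\le\maxF{t}$ immediately.

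Second, to make this rigorous I would run a topological induction through each task graph. The base case is a source task $\task{t}$: Lemma \ref{theorem3} gives $0\le\release{t}\le\jitter{\task{t}}$ directly; Lemma \ref{theorem4} then applies using only the converged phase for inter-graph preemption; Lemma \ref{theorem5} follows from $\maxS{t}$ and the formulas (\ref{eq:Eq16})--(\ref{eq:Eq19}). For the inductive step, the release bound of $\task{t}$ depends only on finish bounds of predecessors (already conservative by induction), and the start/finish bounds of $\task{t}$ depend only on intra-graph tasks with earlier start bounds (handled by internal consistency at the fixed point) together with inter-graph phase values handled by Lemma \ref{theorem2}. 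Finally, (\ref{eq:Eq20}) combines the per-task finish bounds into a conservative WCRT for each task graph.

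The main obstacle I expect is the apparent circularity between $\maxS{t}$ (which uses $\rphase{t}{s}$) and $\maxF{t}$ (which uses $\sphase{t}{s}$), since the phases themselves are computed from time bounds of tasks in \emph{other} task graphs. Intermediate iterates may use stale phases that temporarily break conservativeness, so a direct induction on iteration count would not go through. The resolution, and the crux of the proof, is to argue strictly at the fixed point: the theorem statement only promises conservativeness \emph{after} convergence, so the mutual dependencies collapse into a simultaneously satisfied system of inequalities whose conservativeness is witnessed task-by-task through the DAG order together with Lemma \ref{theorem2}. Everything else is routine bookkeeping that combines the already-proven lemmas in the natural precedence order release $\Rightarrow$ start $\Rightarrow$ finish.
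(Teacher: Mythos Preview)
Your proposal is correct and follows essentially the same route as the paper: the paper's proof is literally the one line ``proved by Lemma~\ref{theorem3}, Lemma~\ref{theorem4}, and Lemma~\ref{theorem5},'' and your additional fixed-point and topological-induction discussion is just a more careful unpacking of why those lemmas (together with Lemma~\ref{theorem2}, which they already invoke internally) compose cleanly at convergence.
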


\begin{proof}
Theorem \ref{theorem6} is proved by Lemma \ref{theorem3}, Lemma \ref{theorem4}, and Lemma \ref{theorem5}. Q.E.D.
\end{proof}

\subsection{Period Shifting Computation}

\begin{figure}[ht]
\centerline{\includegraphics[width=9cm]{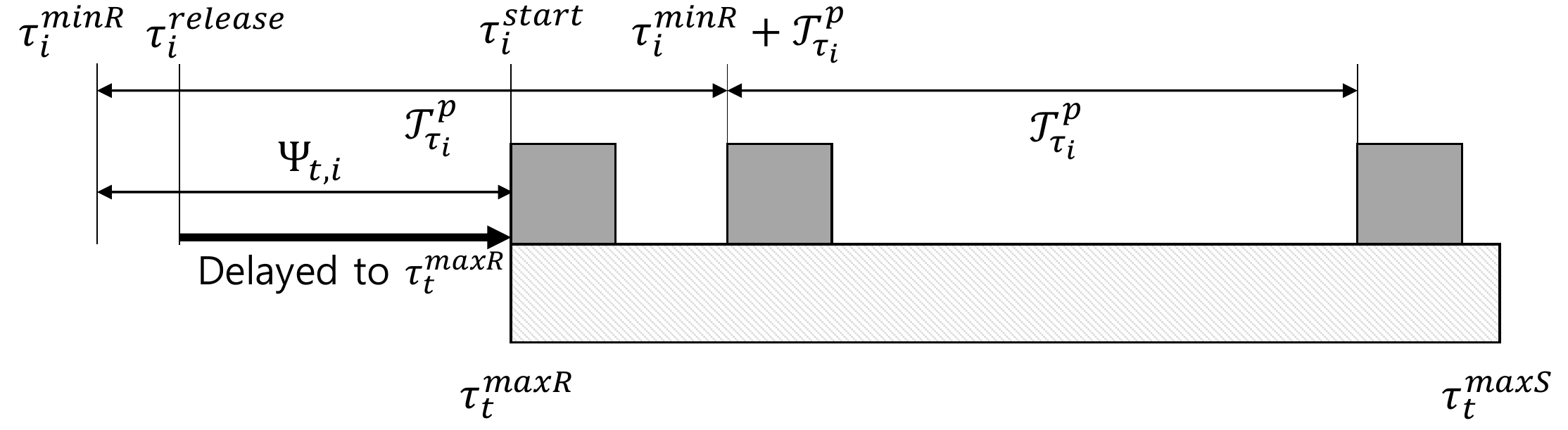}}
\caption{The worst case scenario of the preemption by $\task{i}$ to $\task{t}$}
\label{fig:App_periodshifting}
\end{figure}

When computing the maximum time bounds, we have to consider the worst case scenario of preemptions in the RTA analysis. Fig. \ref{fig:App_periodshifting} shows this scenario. Suppose that the target task $\task{t}$ is released at $\maxR{t}$. The worst case preemptions of a higher priority task $\task{i}$ occur when its start time is aligned with the release time of $\task{t}$ and the second request appears with the shortest interval from $\maxR{t}$, followed by later requests that appear periodically from the second request. If the period shifting amount is denoted by $\pshift{t}{i}$, the next start time of $\task{i}$ will be $\max (\maxR{t}+\WCET{i}, \maxR{t}-\pshift{t}{i}+\period{t})$.

\begin{figure}[ht]
\centerline{\includegraphics[width=11cm]{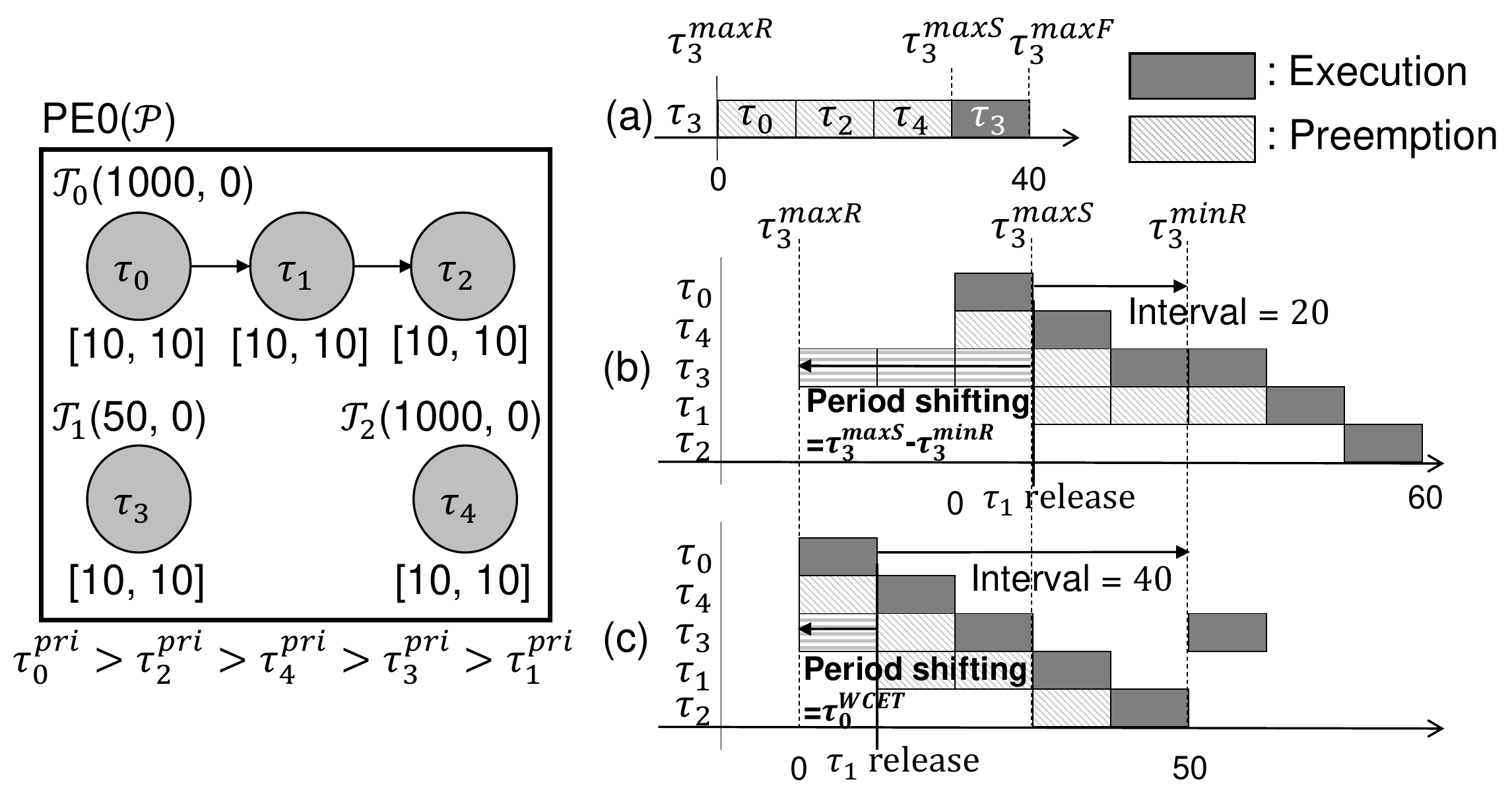}}
\caption{An example for period shifting computation. (a) Possible preemptions to $\task{3}$ between $\maxR{3}$ and $\maxS{3}$, (b) estimated WCRT of $\graph{0}$ with period shifting $\maxS{3}-\minR{3}$, and (c) actual WCRT of $\graph{0}$}
\label{fig:Ex_periodshifting1}
\end{figure}

Recall that the Y\&W method defines $\pshift{t}{i}$ as $\maxR{i}-\minR{i}$ and it fails to find the worst case behavior as shown in the example of Fig. \ref{fig:Ex_underestimation1}. A naive revision is to define $\pshift{t}{i}$ as $\maxS{i}-\minR{i}$, by aligning the maximum start time of $\task{i}$ with the maximum release time of $\task{t}$. The worst case bursty requests occurs when the preempting task is released at $\maxR{t}-\pshift{t}{i}$. In the example of Fig. \ref{fig:Ex_underestimation1}, we can obtain the actual WCRT if $\maxS{2}$ is aligned at $\maxR{1}$, which makes $\pshift{1}{2}=\maxS{2}-\minR{2}=10$. However, it generates a loose bound for the example of Fig. \ref{fig:Ex_periodshifting1}. Fig. \ref{fig:Ex_periodshifting1} (a) shows the worst case preemption of $\task{3}$, considering all possible preemptions from the other tasks. From Fig. \ref{fig:Ex_periodshifting1} (a), it is known that $\maxS{3}-\minR{3}$ is 30. If this value is used as period shifting, a loose WCRT is obtained as shown in Fig. \ref{fig:Ex_periodshifting1} (b). The actual WCRT is computed when $\pshift{1}{3}$ is equal to $\WCET{0}$, as shown in the Fig. \ref{fig:Ex_periodshifting1} (c), which confirms that more elaborate preemption analysis is needed between $\maxR{3}$ and $\maxS{3}$. For the conservative but tight bound computation of period shifting amount, we classify possible preemptions between $\maxR{3}$ and $\maxS{3}$ into three categories. The first is the preemptions that affect the start times of both $\task{3}$ and $\task{1}$, which is the amount of $\WCET{4}$ in Fig. \ref{fig:Ex_periodshifting1}. These preemptions can be ignored for $\pshift{1}{3}$ since those are considered in the computation of $\maxS{1}$. The second is the preemptions from the tasks that finish earlier than $\task{1}$ such as $\task{0}$ in Fig. \ref{fig:Ex_periodshifting1}, which should be considered for the $\pshift{1}{3}$. The last is the preemptions from the task that starts later than $\task{1}$ such as $\task{2}$ that is a descendant of $\task{1}$. These also can be ignored since they cannot appear before $\maxS{1}$.

Based on these observations, we conservativley consider the tasks in the second category in the period shifting computation to get a tighter bound than $\maxS{i}-\minR{i}$: that is

\begin{eqnarray} \label{eq:Eq6}
\pshift{t}{i} = \maxR{i} - \minR{i} +\delta_{t,i}
\end{eqnarray}

where $\delta_{t,i} = \sum_{\pri{s} > \pri{i}, \task{s} \in \condset{E}{t}}{\Big(\Big\lceil \frac{R_{\task{t}} + \delta_{t,i} + \pshift{t}{s}}{\period{\task{s}}} \Big\rceil - \Big\lceil \frac{R_{\task{t}} - \rphase{t}{s}}{\period{\task{s}}} \Big\rceil \Big) \cdot \WCET{s}} + \sum_{\task{s} \in \graph{\task{t}}, \pri{s} > \pri{i}, (\maxR{t} - \delta_{t,i} \le \maxR{s} < \maxR{t} ~or~ \maxR{t} - \delta_{t,i} \le \maxR{s}-\period{\task{s}} < \maxR{t})}{\WCET{s}}$, and $R_{\task{t}} = \maxS{t} - \maxR{t}$.

\begin{lemm} \label{theorem1}
Period shifting is conservative, which means that $\Big\lceil{\frac{\maxF{t}-\maxR{t}+\pshift{t}{i}}{\period{\task{i}}}}\Big\rceil$ is an upper bound of preemption counts of $\task{i}$ in time window $[\maxR{t}, \maxF{t}]$.
\end{lemm}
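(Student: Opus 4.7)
The plan is to identify the worst-case release pattern of the higher priority task $\task{i}$ in the interval $[\maxR{t}, \maxF{t}]$ and then count releases by a standard spacing argument. Since any two consecutive releases of $\task{i}$ are separated by at least $\period{\task{i}}$, the number of releases contained in any interval of length $L$ is at most $\lceil L/\period{\task{i}} \rceil$. The task therefore reduces to showing that the earliest release of $\task{i}$ that can preempt $\task{t}$ within $[\maxR{t}, \maxF{t}]$ cannot occur earlier than $\maxR{t} - \pshift{t}{i}$; once that bound is in place, the effective preemption window has length $(\maxF{t} - \maxR{t}) + \pshift{t}{i}$ and the claimed ceiling estimate follows immediately.

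To bound how far before $\maxR{t}$ an interfering release of $\task{i}$ can sit, I would decompose the relative shift into two pieces, mirroring the definition of $\pshift{t}{i}$. The first piece, $\maxR{i} - \minR{i}$, captures the intrinsic release jitter of $\task{i}$ coming from execution-time variation of its predecessors, which by Lemma~\ref{theorem3} is a sound bound on the range of $\release{i}$. The second piece, $\delta_{t,i}$, captures the extra amount by which the appearance of $\task{i}$ can slip relative to $\task{t}$ owing to higher-priority preemptions. The key observation is that preemptions which delay both $\task{t}$ and $\task{i}$ equally cancel out of the relative phase and are already folded into $\maxR{t}$ through the maximum start time computation, so only the higher-priority interference that reaches $\task{i}$ after $\task{t}$ has been released needs to be counted separately. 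This explains why the two sums in $\delta_{t,i}$ range over (i) cross-graph tasks, treated with the usual RTA ceiling formula offset by each preempter's own period shift $\pshift{t}{s}$, and (ii) intra-graph tasks whose next release falls inside the critical interval of length $\delta_{t,i}$ adjacent to $\maxR{t}$.

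Putting these pieces together, the worst case illustrated in Fig.~\ref{fig:App_periodshifting} is realized by placing a release of $\task{i}$ at $\maxR{t} - \pshift{t}{i}$ and spacing subsequent releases exactly $\period{\task{i}}$ apart from that one; no admissible release can sit further to the left without contradicting one of the two bounds above, and no tighter packing is permitted by the minimum inter-arrival constraint. The number of releases of $\task{i}$ that fall in $[\maxR{t} - \pshift{t}{i}, \maxF{t}]$ is therefore at most $\lceil (\maxF{t} - \maxR{t} + \pshift{t}{i})/\period{\task{i}} \rceil$, which upper-bounds the preemption count inside $[\maxR{t}, \maxF{t}]$.

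The hard part will be the recursive definition of $\delta_{t,i}$, which appears on both sides of its defining equation and also depends on the period shifts $\pshift{t}{s}$ of other preempters. I would handle this by casting the HPA iteration of Fig.~\ref{fig:HPA_overallflow} as a monotone fixed-point computation bounded above by the conservative surrogate $\maxS{i} - \minR{i}$ (the naive value discussed just before the lemma), so that every converged $\delta_{t,i}$ is a sound upper bound on the true relative shift. The other delicate bookkeeping step is a case analysis certifying that every higher-priority task that can enlarge the relative shift between $\task{t}$ and $\task{i}$ is captured by exactly one of the two summations and is never double-counted, which is what makes the decomposition both complete and not wasteful.
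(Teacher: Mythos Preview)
Your proposal follows essentially the same route as the paper: the three-way classification of preemptions to $\task{i}$ (those that also delay $\task{t}$ and are already absorbed in the $\maxS{t}$ computation; those that cannot precede $\task{i}$; and the residual ``second category'' that $\delta_{t,i}$ must cover), followed by the two-part accounting of $\delta_{t,i}$ split between cross-graph preempters (the ceiling-difference term) and intra-graph preempters (the release-window term). Your explicit monotone fixed-point treatment of the self-referential $\delta_{t,i}$ is in fact more careful than the paper's argument, which simply appeals to the overall HPA iteration and does not spell out convergence or the upper barrier $\maxS{i}-\minR{i}$.
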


\begin{proof}
Conservativeness of period shifting is achieved if $\delta_{t,i}$ of equation (\ref{eq:Eq6}) includes all tasks in the second category of tasks preempting $\task{i}$ before $\maxR{t}$.

At first, consider the contribution from $\task{s}\not\in\graph{\task{t}}$ to $\delta_{t,i}$. Since the maximum preemption count of $\task{t}$ by $\task{s} \in \condset{E}{t}$ is $\lceil \frac{t+\pshift{t}{s}}{\period{s}} \rceil$ for $t$ time duration, $\task{s}$ appears $\lceil \frac{R_{\task{t}} + \delta_{t,i} +\pshift{t}{s}}{\period{s}} \rceil$ times during $R_{\task{t}}+\delta_{t,i}$. On the other hand, $\lceil \frac{R_{\task{t}}-\rphase{t}{s}}{\period{s}} \rceil$ is the number of preemptions during $R_{\task{t}}$, which is considered in $\maxS{t}$ computation. Therefore the appearance of $\task{s}$ is bounded by $\Big\lceil \frac{R_{\task{t}} + \delta_{t,i} + \pshift{t}{s}}{\period{\task{s}}} \Big\rceil - \Big\lceil \frac{R_{\task{t}} - \rphase{t}{s}}{\period{\task{s}}} \Big\rceil$ times during $\delta_{t,i}$.

Second, Consider $\task{s} \in \graph{\task{t}}$. If $\release{s} > \maxR{t}$ then $\task{s}$ either preempts $\task{t}$ as well as $\task{i}$ (the first category) or is scheduled later than $\task{t}$ (the third category). So we need to consider only $\task{s}$ that is released between $\maxR{t}-\delta_{t,i}$ and $\maxR{t}$.
$\maxR{t} - \delta_{t,i} < \release{s} \le \maxR{s} < \maxR{t}$. So the worst case is to consider $\task{s}$ such that $\maxR{t} - \delta_{t,i} < \maxR{s} < \maxR{t}$. If $\maxR{t} - \delta_{t,i}$ becomes negative, , the previous instance of $\task{s}$ should be considered with the following condition: $\maxR{t}-\delta_{t,i} < \maxR{s}-\period{\task{s}} < \maxR{t}$. Q.E.D.
\end{proof}

Consider Fig. \ref{fig:Ex_underestimation1}. $\delta_{1,2} = \WCET{0} = 10$ since $\task{0}$ satisfies the condition of the second term of $\delta_{1,2}$, $\maxR{1} - \delta_{1,2} = 10 - 10 = 0 = \maxR{0}$. Hence  $\pshift{1}{2} = \maxR{2}- \minR{2} + \delta_{1,2} = 0 - 0 + 10 = 10$.

Consider Fig. \ref{fig:Ex_periodshifting1}. $\delta{1,3} = \WCET{0} = 10$. Since $\maxR{1} - \delta_{1,3} = 10 - 10 = 0 = \maxR{0}$, $\task{0}$ is included in the second term of $\delta_{1,3}$. On the other hand, $\task{2}$ is excluded in $\delta_{1,3}$ since $10 = \maxR{1} < \maxR{2} = 40$. $\task{4}\not\in\graph{\task{1}}$ has no contribution to $\delta_{t,i}$ since $\Big(\Big\lceil \frac{R_{\task{1}} + \delta_{1,3} + \pshift{1}{4}}{\period{\task{4}}} \Big\rceil - \Big\lceil \frac{R_{\task{1}} - \rphase{1}{4}}{\period{\task{4}}} \Big\rceil \Big) = 1 - 1 = 0$. Hence $\pshift{1}{3} = \maxR{3} - \minR{3} + \delta_{1,3} = 0 - 0 + 10 = 10$.

\subsection{Holistic Phase Adjustment}

In this section, we describe how we combine the \emph{period shifting} technique and the \emph {phase adjustment} technique. There are three phase types considered in the phased adjustment technique; request phase $\rphase{t}{i}$, start phase $\sphase{t}{i}$, and finish phase $\fphase{t}{i}$. The main difference between our \emph{holistic phase adjustment} and the \emph{phase adjustment} of the Y\&W method is that the phases in our technique can be negative: If phase is negative, it acts like a period shifting.

If $\task{t}$ is a source task, request phase $\rphase{t}{i}$ for each task $\task{i}\not\in\graph{\task{t}}$ is initialized to $-\pshift{t}{i}$, which means that the start time of $\task{t}$ is maximally postponed by preemption of $\task{i}$. In this way, period shifting is merged with phase adjustment so that the request phase can be negative.

If $\task{t}$ is a non-source task, the request phase $\rphase{t}{i}$ depends on the finish phases of predecessors. If it is positive, it means the minimum distance from the release of $\task{t}$ to the next release time of a preempting task $\task{i}$; Fig. \ref{fig:Ex_phase} illustrates the case where task $\task{t}$ has two predecessors $\task{p_1}$ and $\task{p_2}$.

\begin{figure}[ht]
\centerline{\includegraphics[width=8.5cm]{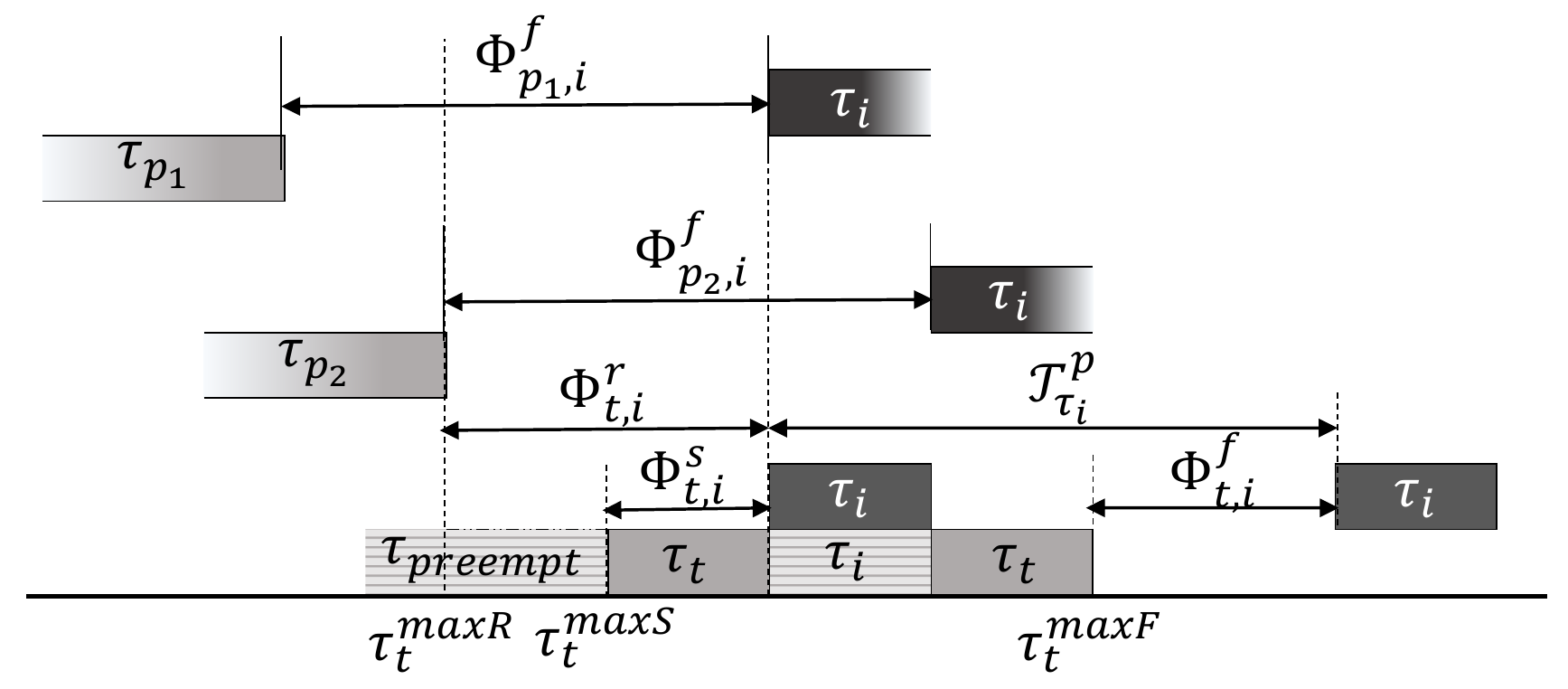}}
\caption{An illustrative example of holistic phase adjustment computation}
\label{fig:Ex_phase}
\end{figure}

Since the phase computation is performed per task independently, the finish phases of the predecessor tasks may be different from each other. When we compute the request phase $\rphase{t}{i}$ and the predecessor tasks see different next request times of $\task{i}$, we have to take the earliest next request time of $\task{i}$ among predecesors for the conservative computation. In Fig. \ref{fig:Ex_phase}, two predecessors $\task{p_1}$ and $\task{p_2}$ of $\task{t}$ see different next request times of $\task{i}$: $\fphase{p_1}{i}+\maxF{p_1}$ and $\fphase{p_2}{i}+\maxF{p_2}$. since $\fphase{p_1}{i}+\maxF{p_1}<\fphase{p_2}{i}+\maxF{p_2}$, we take $\fphase{p_1}{i}+\maxF{p_1}$ as the next request time of $\task{i}$ seen by $\task{t}$, and $\rphase{t}{i}$ becomes $(\fphase{p_1}{i}+\maxF{p_1})-\maxR{t}$, which means the distance from $\maxR{t}$ to the next request time of $\task{i}$.

Note that the inherited phase adjustment can be negative. The negative phase, which acts like a period shifing, cannot be smaller than $-\pshift{t}{i}$. Thus we choose the maximum among the phase adjustment inherited from its predecessors and $-\pshift{t}{i}$. We formulate $\rphase{t}{i}$ as follows;

\begin{equation} \label{eq:Eq7}
\rphase{t}{i}=\left\{\begin{array}{rl}
	-\pshift{t}{i}, & \begin{array}{c}\textrm{if $\task{t}$ is a aource task}\\ \textrm{or $\exists_{\task{p}\in{pred(\task{t})}}{(\proc{p}\not=\proc{t})}$}\end{array}\\
	\max\Big({-\pshift{t}{i}, \min\limits_{\task{p}\in{pred(\task{t})}}{(\fphase{p}{i}+\maxF{p})}-\maxR{t}}\Big), & \textrm{otherwise}\\
\end{array} \right.
\end{equation}

where $pred(\task{t})$ is the immediate predecessor set of task $\task{t}$. In case there is a predecessor mapped to a different processor, $\rphase{t}{i}$ is set to $-\pshift{t}{i}$ for conservative computation.

Based on $\rphase{t}{i}$ and $\maxS{t}$, before computing $\maxF{t}$, the start phase $\sphase{t}{i}$ for each task $\task{i}\not\in\graph{\task{t}}$ is computed as follows:

\begin{equation} \label{eq:Eq8}
\sphase{t}{i}=\left\{\begin{array}{rl}
	(\rphase{t}{i}+\maxR{t})-\maxS{t}, & \textrm{if $\task{i}\not\in\condset{E}{t}$}\\
	\big((\rphase{t}{i}+\maxR{t})-\maxS{t}\big)\bmod\period{\task{i}}, & \textrm{otherwise}\\
\end{array} \right.
\end{equation}

When $\task{i}$ belongs to $\condset{E}{t}$, or $\task{i}\in\condset{E}{t}$, the start phase is made positive by modulo operation to find the distance to the earliest future invocation of $\task{i}$. Otherwise, The start phase $\sphase{t}{i}$ can be negative as $\rphase{t}{i}$.

Similarly, the finish phase $\fphase{t}{i}$ is formulated based on $\sphase{t}{i}$ and $\maxF{t}$ as follows:

\begin{equation} \label{eq:Eq9}
\fphase{t}{i}=\left\{\begin{array}{rl}
	(\sphase{t}{i}+\maxS{t})-\maxF{t}, & \textrm{if $\task{i}\not\in\condset{E}{t}$ or $\proc{t}\in\mathcal{N}$}\\
	\big((\sphase{t}{i}+\maxS{t})-\maxF{t}\big)\bmod\period{\task{i}}, & \textrm{otherwise}\\
\end{array} \right.
\end{equation}

Since there is no preemption during the execution of $\task{t}$ when $\proc{t}\in\mathcal{N}$, finish phase refers to the same invocation of the preempting task as the start phase. The finish phase $\fphase{t}{i}$ is computed after $\maxF{t}$ computation and will be used for the request phases of successors. In the example of Fig. \ref{fig:Ex_phase}, $\rphase{t}{i}$ and $\sphase{t}{i}$ see the same invocation since the difference $(\rphase{t}{i}+\maxR{t})-\maxS{t}$ is yet positive. On the other hand, $\fphase{t}{i}$ sees the next invocation of $\task{i}$, since the invocation of $\task{i}$ seen by $\sphase{t}{i}$ appears during the execution of $\task{t}$. In that case, the time difference from $\maxF{t}$ becomes negative and the modulo operation finds the positive distance to the next invocation.

\begin{lemm} \label{theorem2}
Phase adjustments $\rphase{t}{i}$, $\sphase{t}{i}$, and $\fphase{t}{i}$ are conservative.
\end{lemm}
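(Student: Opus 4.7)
The plan is to establish conservativeness of the three phases by a nested induction. The outer induction is on the topological position of $\task{t}$ within $\graph{\task{t}}$, and the inner induction follows the natural computation order inside a single task: first $\rphase{t}{i}$, then $\sphase{t}{i}$, and finally $\fphase{t}{i}$. By ``conservative'' I will mean that the computed phase is a valid lower bound on the actual minimum distance from the reference event ($\maxR{t}$, $\maxS{t}$, or $\maxF{t}$) to the next release of $\task{i}$, so that substituting it into the ceiling term of $Delay_t^h$ or $Preempt_t^W$ can never understate the preemption count.

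For the base cases of the outer induction (source tasks, or non-source tasks with some predecessor on a different PE), $\rphase{t}{i}=-\pshift{t}{i}$. I will appeal directly to Lemma \ref{theorem1}: $\pshift{t}{i}$ already captures the worst-case bursty alignment of $\task{i}$ relative to $\maxR{t}$, so no release of $\task{i}$ can occur earlier than $\maxR{t}-\pshift{t}{i}$; hence the value $-\pshift{t}{i}$ is a valid lower bound on the distance. For the inductive step, assume conservativeness of $\fphase{p}{i}$ for every predecessor $\task{p}$ of $\task{t}$ (inner/outer IH). From $\task{p}$'s viewpoint, the next release of $\task{i}$ occurs no earlier than $\finish{p}+\fphase{p}{i}$; because releases of $\task{i}$ are global events, the earliest such witness over all predecessors is also a global lower bound. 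Taking the minimum over predecessors and subtracting $\maxR{t}$ then yields a conservative distance relative to $\task{t}$'s worst-case release, and the outer $\max$ with $-\pshift{t}{i}$ merely prevents the bound from falling below what bursty scheduling permits.

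The steps $\rphase{t}{i} \to \sphase{t}{i}$ and $\sphase{t}{i} \to \fphase{t}{i}$ are reference-frame changes. If the invocation of $\task{i}$ witnessed by $\rphase{t}{i}$ still lies after $\maxS{t}$, then shifting the reference by $\maxS{t}-\maxR{t}$ preserves conservativeness directly, giving the first branch of equation (\ref{eq:Eq8}). If that invocation falls inside $[\maxR{t},\maxS{t}]$, then for $\task{i}\in\condset{E}{t}$ (strictly periodic, cross-task-graph) the next invocation is exactly one period $\period{\task{i}}$ later, and the modulo operation correctly advances to it without skipping any release; the intermediate invocations that have already fired are precisely those already charged in $Delay_t^h$ of $\maxS{t}$, so conservativeness is preserved. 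For $\task{i}\not\in\condset{E}{t}$ (tasks in the same graph), no such advancement is needed because those preemptions are handled through the $\condset{D}{t}$/$\condset{G}{t}$ sets. The analogous argument with $\maxF{t}$ in place of $\maxS{t}$ establishes conservativeness of $\fphase{t}{i}$, where the extra $\proc{t}\in\mathcal{N}$ clause just records that non-preemptive execution produces no new intermediate invocations worth advancing past.

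The main obstacle will be the inductive step for $\rphase{t}{i}$ at a join node: I must argue rigorously that the $\min$ over predecessors is conservative even though different predecessors observe the bursty pattern of $\task{i}$ at different absolute times and through possibly different finish-phase histories. The subtle point is that each candidate $\fphase{p}{i}+\maxF{p}$ was derived under the \emph{worst-case} assumption for $\task{p}$, and when we later combine them at $\task{t}$ only one of those worst cases is realized in any actual execution, so taking the minimum (rather than, say, a weighted combination) remains a valid global lower bound. A secondary delicate issue is the interaction with the period-shifting floor $-\pshift{t}{i}$ computed by Lemma \ref{theorem1}, since $\delta_{t,i}$ itself depends on $\rphase{t}{s}$ for other preempters $\task{s}$; I must verify that the fixed-point iteration described in Fig.~\ref{fig:HPA_overallflow} preserves the induction invariant, so that conservativeness holds at convergence.
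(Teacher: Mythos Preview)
Your proposal is correct and follows essentially the same induction as the paper: base case via Lemma~\ref{theorem1} for source tasks (and tasks with a predecessor on a different PE), inductive step inheriting the minimum of predecessors' finish phases, and reference-frame shifts with the modulo advancement for $\task{i}\in\condset{E}{t}$. You are in fact more explicit than the paper about the join-node subtlety and the fixed-point dependence of $\pshift{t}{i}$ on other request phases, both of which the paper's proof glosses over.
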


\begin{proof}
We prove it by induction. First, when task $\task{t}$ is a source task, $\rphase{t}{i}$ is $-\pshift{t}{i}$ and it is conservative according to Lemma \ref{theorem1}. When $\task{i}\not\in\condset{E}{t}$, $\sphase{t}{i}$ and $\fphase{t}{i}$ are computed directly from $\rphase{t}{i}$ by changing the time reference from the release time to the start time ($\maxS{t}$) and ($\maxF{t}$) respectively in equations (\ref{eq:Eq8}) and (\ref{eq:Eq9}). So the consevativeness of the request phase is inherited to the start phase and the finish phase. When $\task{i}\in\condset{E}{t}$, $\sphase{t}{i}$ and $\fphase{t}{i}$ are computed referring to the next release time of $\task{i}$ from $\maxS{t}$ and $\maxF{t}$. Since the worst case preemption scenario to the preempted task is the periodic invocation of the preempting task after the worst case request phase, mod $\period{\task{i}}$ operations in (\ref{eq:Eq8}) and (\ref{eq:Eq9}) find the closest request time of $\task{i}$ from $\maxS{t}$ and $\maxF{t}$ once the request phase is decided. It completes the initial step of the induction process.

Second, we prove the conservativeness of $\rphase{t}{i}$, $\sphase{t}{i}$, and $\fphase{t}{i}$ of non-source task $\task{t}$, assuming that for all $\task{p}\in{pred(\task{t})}$, $\fphase{p}{i}$s are conservative. If $\proc{t}\not=\proc{i}$, then $\rphase{t}{i}$ is $-\pshift{t}{i}$ and it is conservative. Otherwise, we find the closest release time of $\task{i}$ from the finish phase $\fphase{p}{i}$s of predecessors, by (\ref{eq:Eq7}). Hence $\rphase{t}{i}$ is conservative since $\fphase{p}{i}$s are all conservative and the minimum value is chosen. The proof for $\sphase{t}{i}$, and $\fphase{t}{i}$ is similar to the case that task $\task{t}$ is a source task. Q.E.D.
\end{proof}

\section{Optimization Techniques} \label{sec:enhanced_HPA}

In this section, we describe two optimization techniques to tighten the time bounds by removing infeasible preemptions.

\subsection{Exclusion Set Management}

The exclusion technique manages for each task $\task{i}$ a set $\condset{EX}{i}$ which includes tasks that are guaranteed to have no possibility of preempting $\task{i}$. It is obvious that successor tasks belong to this set. If $\task{i}$ always preempts one of the predecessors of $\task{s}$, $\task{s}$ cannot preempt $\task{i}$ since it will always be scheduled after $\task{i}$. In addition, if $\task{j}$ is excluded by $\task{i}$, then all $\task{s}\in\condset{EX}{j}$ are also excluded by $\task{i}$. In summary, the exclusion set $\condset{EX}{i}$ becomes

\begin{equation} \label{eq:Eq21}
\condset{EX}{i}=\{\task{s}|\task{s}\in{descendant(\task{i})}\;or\;\task{i}\in\bigcup_{\task{p}\in{ancestor(\task{s})}}{(\condset{A}{p}\cap\condset{F}{p})}\;or\;\task{s}\in\bigcup_{\task{j}\in\condset{EX}{i}}{\condset{EX}{j}}\}
\end{equation}

where $ancestor(\task{s})$ is a set of ancestors of $\task{s}$ and $descendant(\task{i})$ is a set of descendants of $\task{i}$. Since there is a cyclic dependency in (\ref{eq:Eq21}), iterative computation is required for $\condset{EX}{i}$, initially defined by $descendant(\task{i})$. After time bound computation, it is updated using $\condset{A}{i}$ and $\condset{F}{i}$. Sets $\condset{A}{t}$, $\condset{B}{t}$, $\condset{D}{t}$, $\condset{F}{t}$ and $\condset{G}{t}$ are modified to have an additional condition $\task{s}\not\in\condset{EX}{t}$. It is obvious that the exclusion technique does not affect the conservativeness of the proposed technique.

\subsection{Duplicate Preemption Elimination}

In our baseline technique, preemptions may occur redundantly; Fig. \ref{fig:Ex_redundant} (a) shows an elaborated example that experiences two types of duplicate preemptions. The first type of duplicate preemption may occur between tasks in the same task graph in case a higher priority task has large release time variation. In the scheduling time bound analysis, we detect the preemption possibility by checking if a higher priority task can be released during task execution. In Fig. \ref{fig:Ex_redundant} (a), $\task{5}$ can preempt both $\task{4}$ and $\task{8}$ because its release time varies between 20 and 75.

The second type of duplicate preemption occurs between tasks in different task graphs. In the phase adjustment technique, it is assumed that a preempting task preempts a predecessor task first. In the example of Fig. \ref{fig:Ex_redundant}, $\task{0}$ preempts $\task{4}$ and phase adjustment is performed afterwards. The request phase of $\task{7}$ to $\task{0}$ is reset to $-\pshift{7}{0}$ since $\task{7}$ and $\task{0}$ are assigned to different processors, according to equation (\ref{eq:Eq7}). Since the request phase of $\task{8}$ is inherited from $\task{7}$, $\task{8}$ experiences another preemption by $\task{0}$.

As a result, the WCRT is overestimated as illustrated in Fig. \ref{fig:Ex_redundant} (a) that contains both types of duplicate preemptions.

\begin{figure}[ht]
\centerline{\includegraphics[width=11cm]{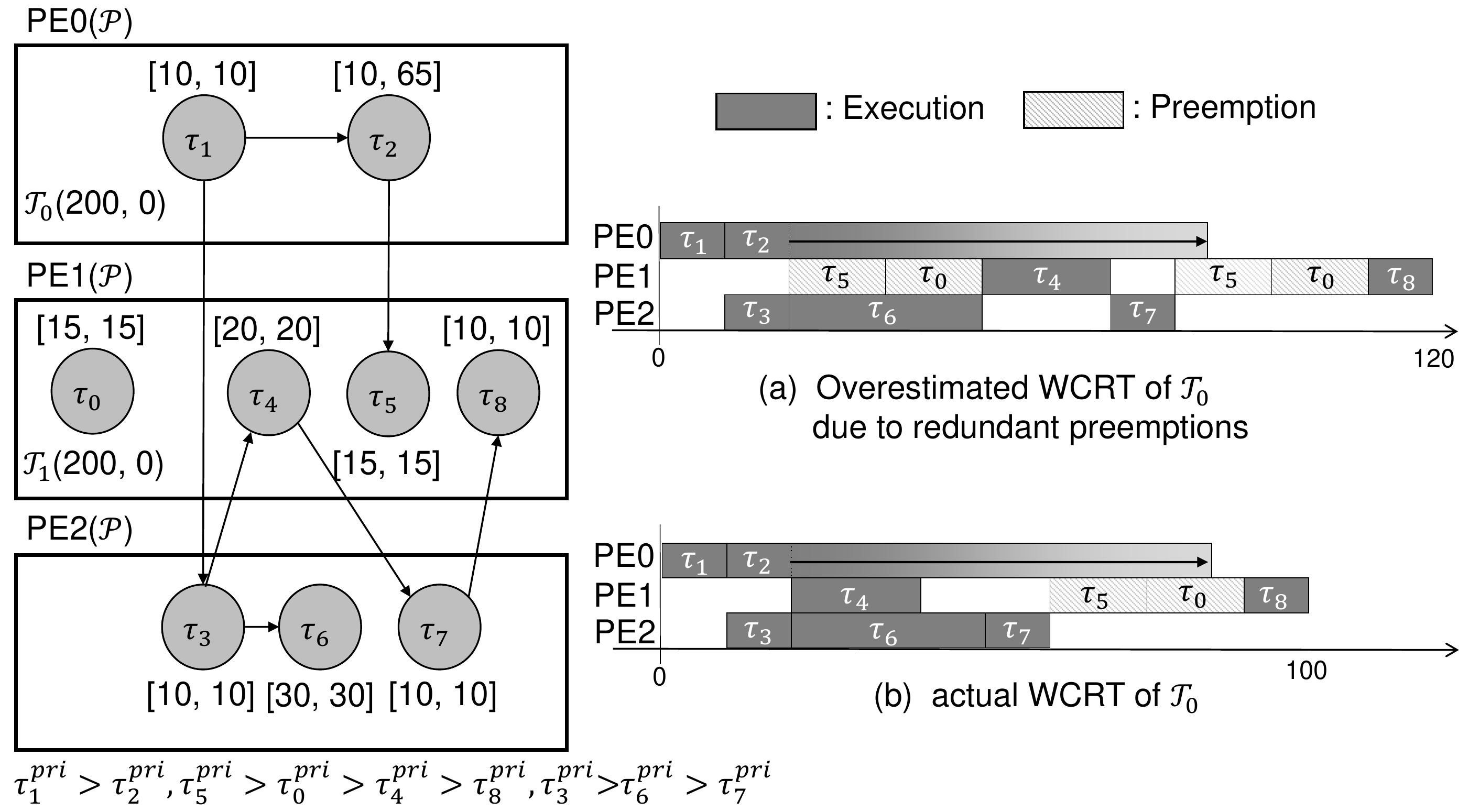}}
\caption{An example of common preemption elimination}
\label{fig:Ex_redundant}
\end{figure}

\IncMargin{1em}      
\begin{algorithm}
\SetAlgoNoLine
\DontPrintSemicolon
\SetKwProg{myProc}{Procedure}{}{}

\SetKwFunction{RemoveDP}{RemoveDP}
\myProc{\RemoveDP{$\task{t}$}}{
	$T_t^r=T_t^s=0$\;
	\Repeat{$T_t^s$ or $T_t^r$ is changed}{
		$T_t^s=0$\;
		$T_t^r=RecursiveRemoveDP(\task{t},\task{t},\phi)$\;
	}
}

\SetKwFunction{RecursiveRemoveDP}{RecursiveRemoveDP}
\myProc{\RecursiveRemoveDP{$\task{c},\task{t},\condset{I}{t}$}}{
	find $\task{cri1}$ that satistifes $(\task{cri1}\in{pred(\task{c})}, \maxF{cri1}=\max_{\task{p}\in{pred(\task{c})}}{\maxF{p}})$\;
	find $\task{cri2}$ that satistifes $(\task{cri2}\in{pred(\task{c})}, \maxF{cri2}=\max_{\task{p}\in{pred(\task{c})-\{\task{cri1}\}}}{\maxF{p}})$\;
	$T_r=T_e=0$\;
	\If{$\task{c}\not=\task{t}$}{
		$T_e$ is increased by $\sum_{\task{s}\in\condset{J}{c}}{\WCET{s}}$\;
		$\condset{I}{t}=\condset{I}{t}\cup\condset{J}{c}$\;
		\If{$\exists_{\task{p}}{(\task{p}\in{pred(\task{t})},\maxF{p}=\maxR{t},detect_p)}$}{
			$T_e$ and $T_t^s$ are increased by $\sum_{\task{s}\in\condset{K}{c}}{PC_{c,s}\cdot\WCET{s}}$\;
			$\condset{I}{t}=\condset{I}{t}\cup\condset{K}{c}$\;
		}
	}
	\If{$\task{c}$ is a source task} {
		\KwRet $T_e$
	}
	$T_r=RecursiveRemoveDP(\task{cri1},\task{t},\condset{I}{t})$\;
	\If{$\task{c}\not=\task{t}$} {
		$T_r=\max\big(0, T_r-\sum_{\task{s}\in\condset{L}{c}}{\min(\WCET{s},\maxF{s}-(\maxR{c}-T_r)}\big)$\;
	}
	\eIf{$\task{cri2}$ is not found} {
		\KwRet $T_r+T_e$
	}{
		\KwRet $\min(T_r, \maxF{cri1}-\maxF{cri2})+T_e)$
	}
}
\caption{Algorithm of duplicate preemption elimination}
\label{alg:commonpreemption}
\end{algorithm}
\DecMargin{1em}

To avoid duplicate preemptions, we devised an optimization that traces back the schedule and moves duplicate preemptions from ancestors to a target task. The proposed optimization heuristic is based on the abstruse fact that later preemption gives worse response time than earlier preemption when there are duplicate preemptions, which is stated in Theorem \ref{theorem8} below.

Algorithm \ref{alg:commonpreemption} presents the psuedo code of the proposed heuristic. It is invoked by $RemoveDP(\task{t})$ (lines 1-7) where $\task{t}$ is the task whose schedule time bound is computed. $T_t^r$ represents how much the release time of $\task{t}$ is reduced by removing duplicate preemptions, and $T_t^s$ represents how much preemption delay caused by other task graphs should be moved from predecessors to $\task{t}$. We recursively trace back the schedule of critical path (lines 9 and 22) where $\task{c}$ and $\task{cri1}$ represent currently visited task and the parent task of $\task{c}$ on the critical path, respectively. We initialize $T_t^r$ to zero and repeat $RecursiveRemoveDP$ until $T_t^r$ is converged.

In algorithm \ref{alg:commonpreemption}, $PC_{t,i}$ means the preemption count from $\task{i}$ to $\task{t}$. Boolean flag $detect_t$ checks if the release time of task $\task{t}$ cannot be reduced as much as the removed preemption time due to a predecessor mapped to a different PE. The boolean flag is inherited to the successors.

Let the currently visited task be $\task{c}$ in the recursive call, $RecursiveRemoveDP$. At first, we compute the amount of duplicate preemptions from tasks in the same task graph which can preempts both $\task{c}$ and $\task{t}$ (lines 13-14), where those tasks belong to set $\condset{J}{c}=\{\task{s}|\task{s}\in\condset{D}{c}\cup\condset{G}{c},\task{s}\not\in\condset{I}{t},\task{s}\not\in\condset{EX}{t},\proc{s}=\proc{t},\pri{s}>\pri{t},\maxR{t}-T_t^r<\maxF{s}\}$. Even though preemption from the same task can be seen several times on the critical path, we need to consider only the recent preemption in this recursive function. Thus we manage task set $\condset{I}{t}$ while traversing the critical path in order to consider only the recent preemption. If $detect_p$ value for $\task{t}$ is true, we also move the duplicate preemptions from tasks in the other task graphs to $\task{t}$ (lines 15-18), where those tasks belong to set $\condset{K}{c}=\{\task{s}|\task{s}\not\in\graph{\task{t}},PC_{c,s}\not=0,\task{s}\not\in\condset{I}{t},\proc{s}=\proc{t},\pri{s}>\pri{t}\}$. After computation of the amount of duplicate preemptions, $RecursiveRemoveDP$ is called recursively for the critical path predecessor $\task{cri1}$, and we get the returned value as $T_r$ (line 22), which is the possible release time reduction of $\task{c}$. However, there can be additional preemptions if $\task{c}$ is released at $\maxR{c}-T_r$. We conservatively find the amount of additional preemptions and reduce $T_r$ (line 23), where a task that incurs additional preemption belongs to set $\condset{L}{c}=\{\task{s}|\task{s}\not\in\condset{D}{c}\cup\condset{G}{c},\task{s}\in\graph{\task{c}},\task{s}\not\in\condset{EX}{c},\task{s}\not\in{ancestor(\task{c})},\proc{s}=\proc{c},\pri{s}>\pri{c},\minS{s}\le\maxR{c},\maxR{c}-T_r<\maxF{S}\}$. For the non-preemptive scheduling policy, the condition $\pri{s}>\pri{c}$ is removed in $\condset{L}{c}$. Note that release time reduction can be bounded by the other predecessors. If there is more than two predecessors of $\task{c}$, we bound the reduced release time with the second largest finish time among the predecessors (lines 25-26). Note that $T_t^r$ is used in the formula of $\condset{J}{c}$. Hence we set $T_t^r$ to zero initially and repeat $RemoveDP$ until $T_t^r$ is converged.

Refer to the example in Fig. \ref{fig:Ex_redundant}. For task $\task{8}$, we trace back the schedule of tasks $\task{7}$, $\task{4}$, $\task{3}$, and $\task{1}$ in order when we call $RemoveDP(\task{8})$. When $\task{4}$ is visited, We find out that $\task{0}\in\condset{K}{4}$ and $\task{5}\in\condset{J}{4}$, and both task can preempt $\task{8}$. We remove those duplicate preemptions, then the maximum release time of $\task{7}$ is reduced to $\maxR{7}-T_r=40$. When returning to $\task{7}$, We can know that $\task{6}\in\condset{L}{7}$ so that the start time of $\task{7}$ cannot be earlier than the maximum finish time of $\task{6}$, which is $\maxR{7}-(T_r-\maxF{6}-(\maxR{7}-T_r ))=50$. Finally, we know that $\task{8}$ can be released at $\maxR{8}-T_t^r=60$ after removing all duplicate preemptions of its predecessors, which is the actual worst case.

The HPA equations need to be modified after Algorithm \ref{alg:commonpreemption} is applied. We call $RemoveDP(\task{t})$ after $\maxR{t}$ computation and set the reduced maximum release time $\hat\tau_t^{maxR}$ to $\maxR{t}-T_t^r$. Then $\maxR{t}$ is replaced by $\hat\tau_t^{maxR}$ in equations (\ref{eq:Eq13}),(\ref{eq:Eq14}), and sets $\condset{B}{t}$ and $\condset{D}{t}$. The terms $\maxF{s}-\maxR{t}$ and $\maxS{t}-\maxR{t}+1-\rphase{t}{s}$ in the formula of $Delay_t^h$ is changed to $\maxF{s}-\hat\tau_t^{maxR}$ and $\max⁡(0,\maxS{t}-\maxR{t})+1-\rphase{t}{s}$ respectively. And the request phase is not reset to the period shifting value in equation (\ref{eq:Eq7}) when there is a predecessor mapped to different processors. The sets $\condset{B}{t}$ and $\condset{D}{t}$ have an additional condition $\task{s}\not\in{ancestor(\task{t})}$ since $\hat\tau_t^{maxR}$ can be smaller than the finish times of predecessors. $\maxS{t}$ is changed to have the initial value of $\hat\tau_t^{maxR}$ and is bounded by $\maxR{t}$. And $\maxS{t}$ is changed to add $T_t^s$ that is the sum of removed preemptions from the tasks in the other task graphs as follows:

\begin{equation} \label{eq:Eq23}
\maxS{t}=\max(\maxR{t},\hat\tau_t^{maxR}+T_t^s+Delay_t^l+Delay_t^h)
\end{equation}

\begin{theo} \label{theorem8}
Algorithm \ref{alg:commonpreemption} that removes the duplicate preemptions from the ancestors in the critical path preserves the conservativeness of the schedule time bound.
\end{theo}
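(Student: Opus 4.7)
The plan is to show by induction along the critical path that for every actual execution trace the modified bound $\hat\tau_t^{maxR}$ and the updated $\maxS{t}$ from (\ref{eq:Eq23}) remain upper bounds on the real release and start times of $\task{t}$. The core observation I will rely on is that each instance of a preempting task $\task{s}$ delays the schedule at most once: either during the execution of an ancestor $\task{c}$ (delaying $\task{t}$'s release by at most $\WCET{s}$) or during $\task{t}$'s own execution (delaying $\task{t}$'s finish by at most $\WCET{s}$), but never both. Whenever Algorithm~\ref{alg:commonpreemption} identifies such a duplicate in $\condset{J}{c}$ or $\condset{K}{c}$, its $\WCET{s}$ budget can be re-assigned from the contribution to $\maxR{c}$ (through the returned $T_r$) into $T_t^s$, and (\ref{eq:Eq23}) re-injects the same amount at $\task{t}$, so the total charge is preserved while the double counting is removed.

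First I would state the invariant that every recursive call $RecursiveRemoveDP(\task{c},\task{t},\condset{I}{t})$ returns a value $T_r+T_e$ that overapproximates the reduction $\maxR{c}-\release{c}$ plus the preemption budget that must still be re-injected at $\task{t}$, provided the instances in $\condset{I}{t}$ have already been credited along the traced path; the set $\condset{I}{t}$ is the bookkeeping device that prevents a single instance of $\task{s}$ from being subtracted twice when it appears at several nested ancestors. The base case is a source task (the \emph{return} in the recursion), where the invariant is immediate because $\maxR{}$ is conservative by Lemma~\ref{theorem3}. For the inductive step I would treat three effects separately: (i) within-graph duplicates via $\condset{J}{c}$, for which Lemma~\ref{theorem4} already places $\task{s}$ in $\condset{D}{c}\cup\condset{G}{c}$ at the ancestor and the ``at most one preemption per instance'' principle makes the subtraction safe; (ii) cross-graph duplicates via $\condset{K}{c}$, which are moved only when the $detect_p$ flag is set, i.e.\ when the critical predecessor shares the processor so that the released slack is actually usable by $\task{t}$---this is essential because (\ref{eq:Eq7}) resets the request phase to $-\pshift{t}{i}$ across PE boundaries; and (iii) additional preemptions from $\condset{L}{c}$ that become reachable at the reduced release time $\maxR{c}-T_r$ and must be re-added in line~23, together with the cap $\maxF{cri1}-\maxF{cri2}$ in line~26 that blocks the reduction from crossing the second-largest predecessor's finish time, which would otherwise violate the data dependencies.

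The hard part will be the interaction between the moved budget $T_t^s$ and the preemption-count term in $Delay_t^h$ of (\ref{eq:Eq15}). Adding $T_t^s$ enlarges $\maxS{t}-\hat\tau_t^{maxR}$, which could in principle admit additional instances of an external $\task{s}$ through the ceiling $\lceil(\maxS{t}-\maxR{t}+1-\rphase{t}{s})/\period{\task{s}}\rceil$. I would argue that every instance contributing to $T_t^s$ was already paid for in the ancestor chain and is therefore not a \emph{new} arrival inside $[\hat\tau_t^{maxR},\maxS{t}]$; the post-algorithm modifications noted in the text---replacing the window by $\max(0,\maxS{t}-\maxR{t})$ in $Delay_t^h$, excluding ancestors from $\condset{B}{t}$ and $\condset{D}{t}$, and keeping the request phase unchanged when a cross-PE predecessor is present---are precisely what ensures that no instance of $\task{s}$ is missed and none is charged twice. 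Convergence of the outer $RemoveDP$ loop follows from monotonicity in $T_t^r$ of the condition $\maxR{t}-T_t^r<\maxF{s}$ used inside $\condset{J}{c}$, together with the fact that both $T_t^r$ and $T_t^s$ are bounded by the total pre-existing preemption budget. Combining the invariant with Lemmas~\ref{theorem3}, \ref{theorem4}, and \ref{theorem5} then yields the conservativeness claim in the same spirit as Theorem~\ref{theorem6}.
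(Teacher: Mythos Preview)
Your proposal diverges from the paper's proof in a substantive way, and as written it contains a gap.

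The paper's argument (in the electronic appendix) does not track an invariant through the recursion of Algorithm~\ref{alg:commonpreemption}. It introduces a semantic quantity $pe(\task{t})[x,y]$, the processor time consumed by higher-priority work in the window $[x,y]$, and proves (its Theorem~A.1) that if $\release{t}\le\maxR{t}-\Delta$ then $\finish{t}\le\maxF{t}-\Delta+pe(\task{t})[\maxR{t}-\Delta,\maxR{t}]$: shrinking the release bound by $\Delta$ shrinks the finish bound by at least $\Delta$ minus whatever interference can fit into the opened gap. This is iterated along the ancestor path (Theorem~A.2). The crux is then a \emph{comparative} statement (Theorems~A.3 and~A.4): for any common preemptor $\task{p}$ of an ancestor $\task{a}$ and the target $\task{t}$, the bound on $\finish{t}$ obtained by letting $\task{p}$ preempt $\task{t}$ is never smaller than the bound obtained by letting it preempt $\task{a}$, including the partial-preemption case. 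Conservativeness follows because the modified bound (common preemptors moved to $\task{t}$) dominates a bound already known to be conservative---not merely because a ``budget'' is preserved.

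Your ``each instance delays the schedule at most once'' observation is true of actual traces, but it does not by itself justify re-assigning the $\WCET{s}$ charge. Concretely, your opening claim that $\hat\tau_t^{maxR}$ remains an upper bound on the real release time is false: in a trace where the common preemptor $\task{s}$ hits the ancestor $\task{a}$ and not $\task{t}$, the actual $\release{t}$ still carries the full delay through $\task{a}$, while $\hat\tau_t^{maxR}=\maxR{t}-T_t^r$ has had that delay subtracted; the paper itself notes that $\hat\tau_t^{maxR}$ can fall below the predecessors' finish times. What must be shown is only that $\maxS{t}$ from~(\ref{eq:Eq23}) and the ensuing $\maxF{t}$ are conservative, and for that you need the direction ``later preemption yields a no-smaller bound''. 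Removing $\WCET{s}$ at $\task{a}$ shrinks $\maxR{t}$ by \emph{at most} $\WCET{s}$---part of it is absorbed by $pe$ along the path, which is exactly what line~23 and the $\maxF{cri1}-\maxF{cri2}$ cap are modelling---whereas adding $\WCET{s}$ at $\task{t}$ enlarges $\maxS{t}$ by the full $\WCET{s}$. Your budget-preservation framing treats these two effects as equal and so does not establish why the inequality goes the right way; the paper's $pe$-based propagation lemma and the two comparison theorems are precisely the missing ingredient your induction would need.
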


\begin{proof}
Refer to the electronic appendix for the proof.
\end{proof}

\section{Overall HPA Algorithm} \label{sec:overall_alg}

Now we ready to summarize the overall algorithm of the proposed technique. We compute the schedule time bounds of tasks and phase adjustment values until all time bounds are converged. The algorithm \ref{alg:overallalg} shows the outermost iterative routine that integrates all computations.

\IncMargin{1em}      
\begin{algorithm}
\SetAlgoNoLine
\DontPrintSemicolon
\SetKwProg{myProc}{Procedure}{}{}

\SetKwFunction{HPA}{HPA}
\myProc{\HPA}{
	period shifting $\pshift{i}{j}=\jitter{\task{j}}$ for every task pair $(\task{i}, \task{j})$\;
	exclusion set $\condset{EX}{i}=descendant(\task{i})$ for every task $\task{i}$\;
	\Repeat{any value is changed and $\forall_{\graph{}}{(\mathcal{R}_{\graph{}}\le\deadline{})}$}{
		\ForEach{task $\task{t}$ in topological and priority descending order}{
			compute $\minR{t}$, $\minS{t}$, and $\minF{t}$\;
			compute $\maxR{t}$, $\hat\tau_t^{maxR}$, $\forall_{\task{i}\not\in\graph{\task{t}}}{\rphase{t}{i}}$, $\maxS{t}$, $\forall_{\task{i}\not\in\graph{\task{t}}}{\sphase{t}{i}}$, $\maxF{t}$, and $\forall_{\task{i}\not\in\graph{\task{t}}}{\fphase{t}{i}}$\;
		}
		update $\pshift{i}{j}$ for every task pair $(\task{i}, \task{j})$\;
		update $\condset{EX}{i}$ for every task $\task{i}$\;
	}
}
\caption{HPA overall algorithm}
\label{alg:overallalg}
\end{algorithm}
\DecMargin{1em}

At first, period shifting and exclusion set are initialized (lines 2-3). Then we iteratively compute the time bounds of tasks (lines 5-8). Tasks are visited according to the topological order first and priority descending order among independent tasks. Each time bound and phase is computed in the written order. After the time bound computation, period shifting and exclusion set set are updated (lines 9-10). This process is repeated until every value is converged. If there is a task graph that violates its deadline, we stop the iteration since it is not schedulable (line 11).

\section{Experiments} \label{sec:experiment}

As the reference RTA technique, we implemented the Y\&W method following the pseudo code in [Yen and Wolf 1998]\ignore{\cite{ref3}}. For comparison with MAST and SymTA/S, we use available tools; MAST suite [Harbour 2001]\ignore{\cite{ref16}} and pyCPA [Diemer and Axer 2012]\ignore{\cite{ref17}} that is a freely available compositional performance analysis tool similar to SymTA/S. The actual WCRT, which is denoted as Optimal, was obtained by an ILP-based approach [Kim et al. 2012]\ignore{\cite{ref2}}. The proposed HPA technique is available on-line [Choi et al. 2014]\ignore{\cite{ref18}}. 

\begin{table}[ht]
\tbl{WCRT estimation results for simple examples in Fig. \ref{fig:Ex_underestimation1}, \ref{fig:Ex_underestimation2}, \ref{fig:Ex_overestimation1}, \ref{fig:Ex_overestimation2}, and \ref{fig:Ex_redundant} \label{tbl:table1}}{
\begin{tabular}{|c|c|c|c|c|c|}
\hline
& HPA & Y\&W & MAST & pyCPA & Optimal \\
\hline
$\graph{0}$ in Fig. \ref{fig:Ex_underestimation1} & 40 & 35 & 40 & 50 & 40 \\
\hline
$\graph{0}$ in Fig. \ref{fig:Ex_underestimation2} & 130 & 110 & 130 & 270 & 130 \\
\hline
$\graph{0}$ in Fig. \ref{fig:Ex_overestimation1} & 140 & 150 & 140 & 300 & 140 \\
\hline
$\graph{0}$ in Fig. \ref{fig:Ex_overestimation2} & 70 & 100 & $\times$ & 210 & 70 \\
\hline
$\graph{0}$ in Fig. \ref{fig:Ex_redundant} & 100 & 120 & $\times$ & 310 & 100 \\
\hline
\end{tabular}}
\end{table}

Table \ref{tbl:table1} shows the comparison results for the examples shown in this paper. As discussed in Section \ref{sec:YW_review}, the Y\&W method fails to find the WCRT for the examples of Fig. \ref{fig:Ex_underestimation1} and Fig. \ref{fig:Ex_underestimation2}. MAST gives tight WCRT results for linear graphs, but no result for Fig. \ref{fig:Ex_overestimation2} and Fig. \ref{fig:Ex_redundant} since it supports only chain-structured graphs, and pyCPA provides highly overestimated results for all examples, since it pessimistically uses response time analysis ignoring the task dependency between processing elements. Note that the proposed HPA technique gives the optimal WCRT results for all these examples.

For extensive comparison, we generate graphs randomly; the number of task graphs varies from 3 to 5, the number of total tasks from 30 to 50, and the number of processing elements from 3 to 5. The $\BCET{}$ and $\WCET{}$ of each task are randomly selected in the range of [500, 1000] and [$\BCET{}$,$\BCET{}\times1.5$] respectively. The period and jitter of task graphs are randomly chosen but repaired to be schedulable if needed. Note that the problem size is too big to find optimal WCRTs with an ILP-based approach.

\begin{table}[ht]
\tbl{Comparison with the Y\&W method and pyCPA for 100 random examples \label{tbl:table2}}{
\begin{tabular}{|c|c|c|c|c|c|c|}
\hline
& Win & Tie & Lose & Max{\%} & Min(\%) & Avg(\%) \\
\hline
vs Y\&W & 193 & 200 & 1 & 39.42 & -0.69 & 3.86 \\
\hline
vs pyCPA & 394 & 0 & 0 & 359.00 & 16.85 & 157.60 \\
\hline
\end{tabular}}
\end{table}

Table \ref{tbl:table2} shows the comparison results with the Y\&W method and pyCPA. For the comparison with the Y\&W method, we assume that all processing elements use the preemptive scheduling policy, and there is no jitter of input arrival. Since tasks may have multiple predecessors or successors, MAST is excluded in this comparison. The total number of task graphs is 394 in 100 randomly generated examples. The first column shows how many cases the HPA technique produces a tighter bound than the other methods. Similarly, the second and the third columns show how many cases HPA technique produces an equivalent bound and a looser bound. Columns Max, Min, and Avg indicate the maximum, minimum, and average WCRT estimation gaps between HPA and the other approaches. HPA gives tighter bounds than the Y\&W method for almost half of task graphs. A looser bound than the Y\&W method was found for one task graph only with very small estimation gap, 0.69\%. pyCPA provides highly overestimated WCRT results for all task graphs, which are on average 2.57 times larger than the results of HPA.

\begin{table}[ht]
\tbl{Comparison with MAST and pyCPA for 100 random examples \label{tbl:table3}}{
\begin{tabular}{|c|c|c|c|c|c|c|}
\hline
& Win & Tie & Lose & Max{\%} & Min(\%) & Avg(\%) \\
\hline
vs MAST & 393 & 8 & 6 & 104.31 & -4.48 & 18.95 \\
\hline
vs pyCPA & 406 & 1 & 0 & 295.16 & 0.00 & 44.83 \\
\hline
\end{tabular}}
\end{table}

For the comparison with MAST, we create another 100 random examples that are restricted to chain-structured graphs. In contrast to the previous experiment, random examples can have arbitrary mixture of preemptive and non-preemptive processing elements, and the jitter of input arrival is allowed. The number of total task graphs is 407. As shown in Table \ref{tbl:table3}, HPA shows remarkable performance advantage over the other methods. HPA provides tighter bounds than MAST and pyCPA in most cases.

Experiments are futher conducted to examine how the performance of the proposed HPA technique scales as the number of tasks, the number of graphs, and the execution time variation increases, compared with the other methods. The results are depicted in Fig. \ref{fig:Exp_taskvariation}, \ref{fig:Exp_graphvariation}, and \ref{fig:Exp_execvariation}, respectively. In these experiments, the default ranges of the number of processing elements, the number of task graphs, and the number of total tasks are set to [5,10], [5,7], and [30,50] respectively. For the left graphs in three figures, random examples are generated to have no jitter and only preemptive processing elements to compare with the Y\&W method and pyCPA. For the right graphs, the task graphs in random examples are restricted to chain-structed graph but non-preemptive processing elements are allowed to compare with MAST and pyCPA. For all graphs in three figures, right y axis indicates the average WCRT estimation gap between HPA and pyCPA. For left(right) graphs, left y axis indicates the average WCRT estimation gap between HPA and the Y\&W method(MAST). for each data point, the average value is obtained from 100 random examples.

\begin{figure}[ht]
\centerline{\includegraphics[width=14.2cm]{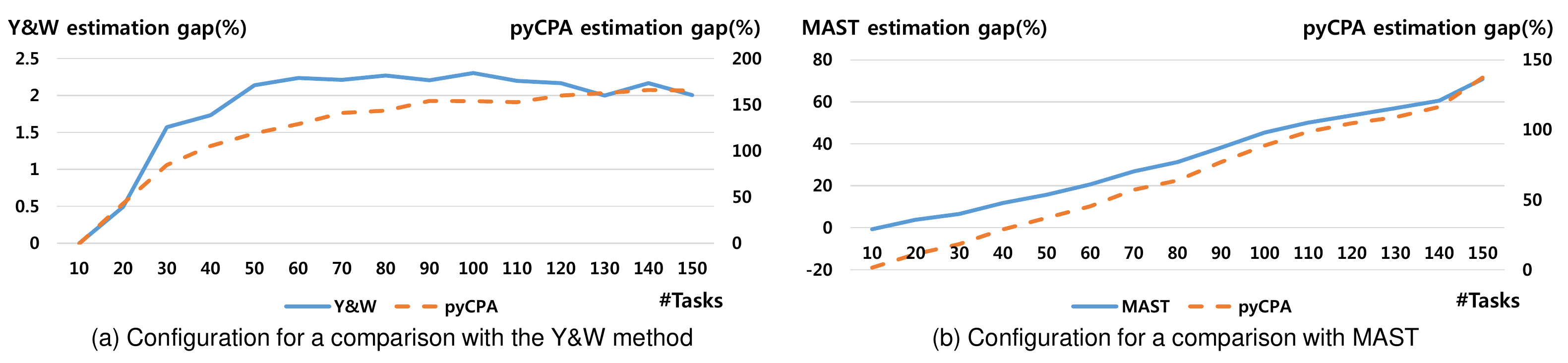}}
\caption{WCRT estimation gap increases as the number of total tasks increases}
\label{fig:Exp_taskvariation}
\end{figure}

Fig. \ref{fig:Exp_taskvariation} shows the change of estimation gap while varing the number of total tasks from 10 to 100. The estimation gap increases by the number of total tasks. For the right graph that uses chain-structed graph, the estimation gap is increased linearly, which confirms that HPA handles the effect of dependency effectively.

\begin{figure}[ht]
\centerline{\includegraphics[width=14cm]{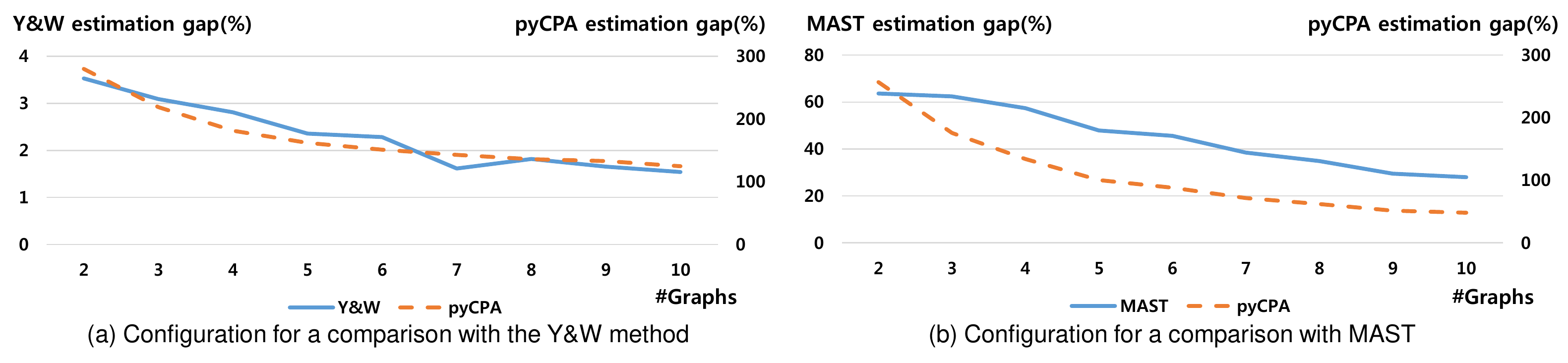}}
\caption{WCRT estimation gap decreases as the number of task graphs increases with the same number of tasks in total}
\label{fig:Exp_graphvariation}
\end{figure}

Fig. \ref{fig:Exp_graphvariation} shows the result of experiment that varies the number of task graph from 2 to 10 while fixing the number of total tasks to 100. In this experiment, the estimation gap decreases. It is because that as the task dependency decreases as well, the inter-application interference considered by the RTA analysis becomes dominant over intra-application interference that is computed by the schedule time bound analysis.

\begin{figure}[ht]
\centerline{\includegraphics[width=13.8cm]{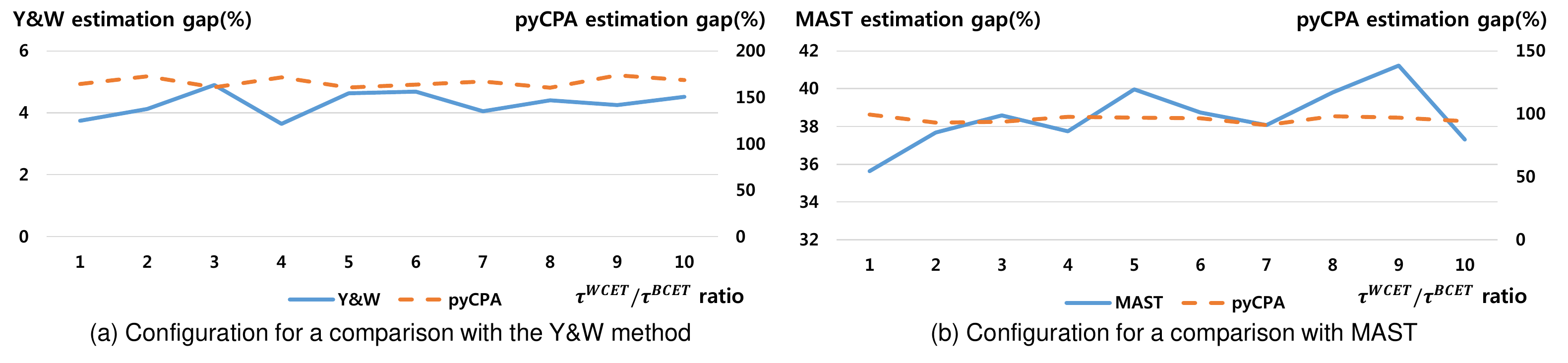}}
\caption{WCRT estimation gap change graphs with variation of $\frac{\WCET{}}{\BCET{}}$ ratio}
\label{fig:Exp_execvariation}
\end{figure}

The performance variation over the $\frac{\WCET{}}{\BCET{}}$ ratio from 1 to 10 is shown in Fig. \ref{fig:Exp_execvariation}. The figure shows that the execution time variation does not incurs any meaningful change in the performance gap.

\begin{figure}[ht]
\centerline{\includegraphics[width=13.7cm]{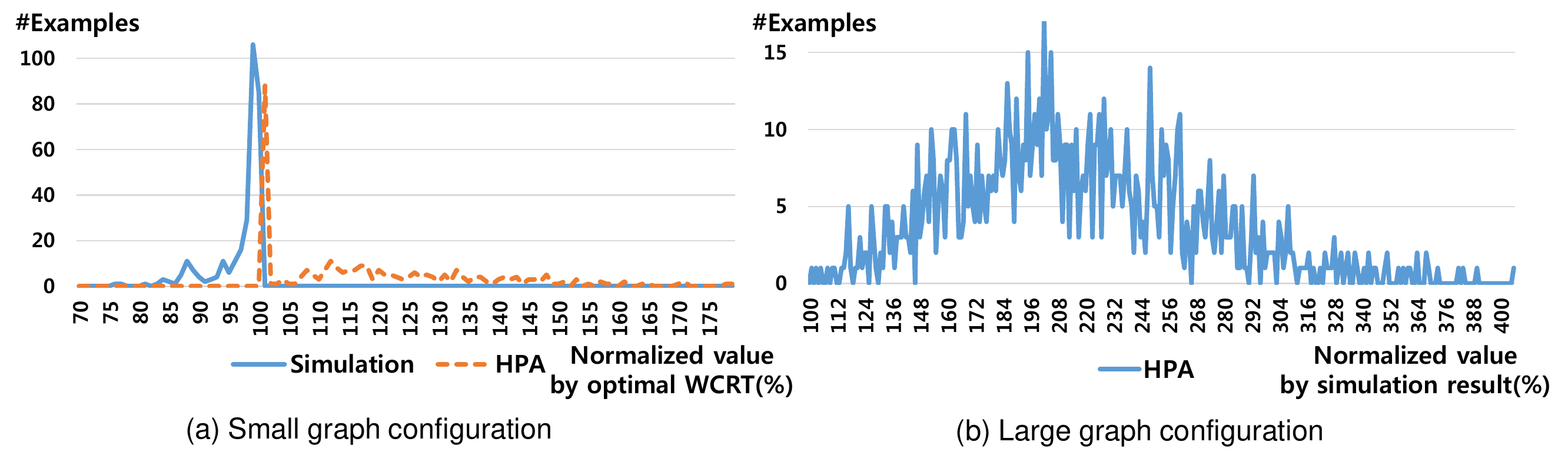}}
\caption{Distribution graphs of the estimated WCRT from the HPA}
\label{fig:Exp_distribution}
\end{figure}

Fig. \ref{fig:Exp_distribution} shows the distribution of the estimated WCRTs from HPA with 300 randomly generated examples that have arbitrary mixture of preemptive and non-preemptive scheduling policies and no restriction on the graph topology. Fig. \ref{fig:Exp_distribution} (a) shows the distribution with small size examples and Fig. \ref{fig:Exp_distribution} (b) with large size examples. For small(large) size examples, the ranges of the number of processing elements, the number of task graphs, and the number of total tasks are set to [2,3]([5,10]), [1,3]([5,7]), and [10,15](60,100), respectively. We also performed Monte-Carlo simulation to obtain the WCRT empirically by sampling graph instances 200 million times for each example. Optimal WCRTs are found with an ILP-based approach only for small size examples. Fig. \ref{fig:Exp_distribution} (a) depicts the distribution of the WCRTs obtained from Monte-Carlo simulation and HPA, normalized by optimal WCRTs. The x axis indicates the normalized value (\%) and the y axis presents the number of examples. Note that both Monte-Carlo simulation and HPA provide results close to the actual WCRT in majority cases. But Monte-Carlo simulation may not find the true WCRT despite 200 million times of simulation, which confirms the need of conservative estimation techniques. For large size examples, the distribution results from HPA are normalized by the near-WCRTs obtained from Monte-Carlo simulation in Fig. \ref{fig:Exp_distribution} (b) since optimal WCRTs cannot be obtained. Since Monte-Carlo simulation gives underestimated WCRTs, the amount of overestimation might be quite exaggerated. It shows that the estimated WCRT from HPA may have about 100\% overestimation on average.

\begin{figure}[ht]
\centerline{\includegraphics[width=10cm]{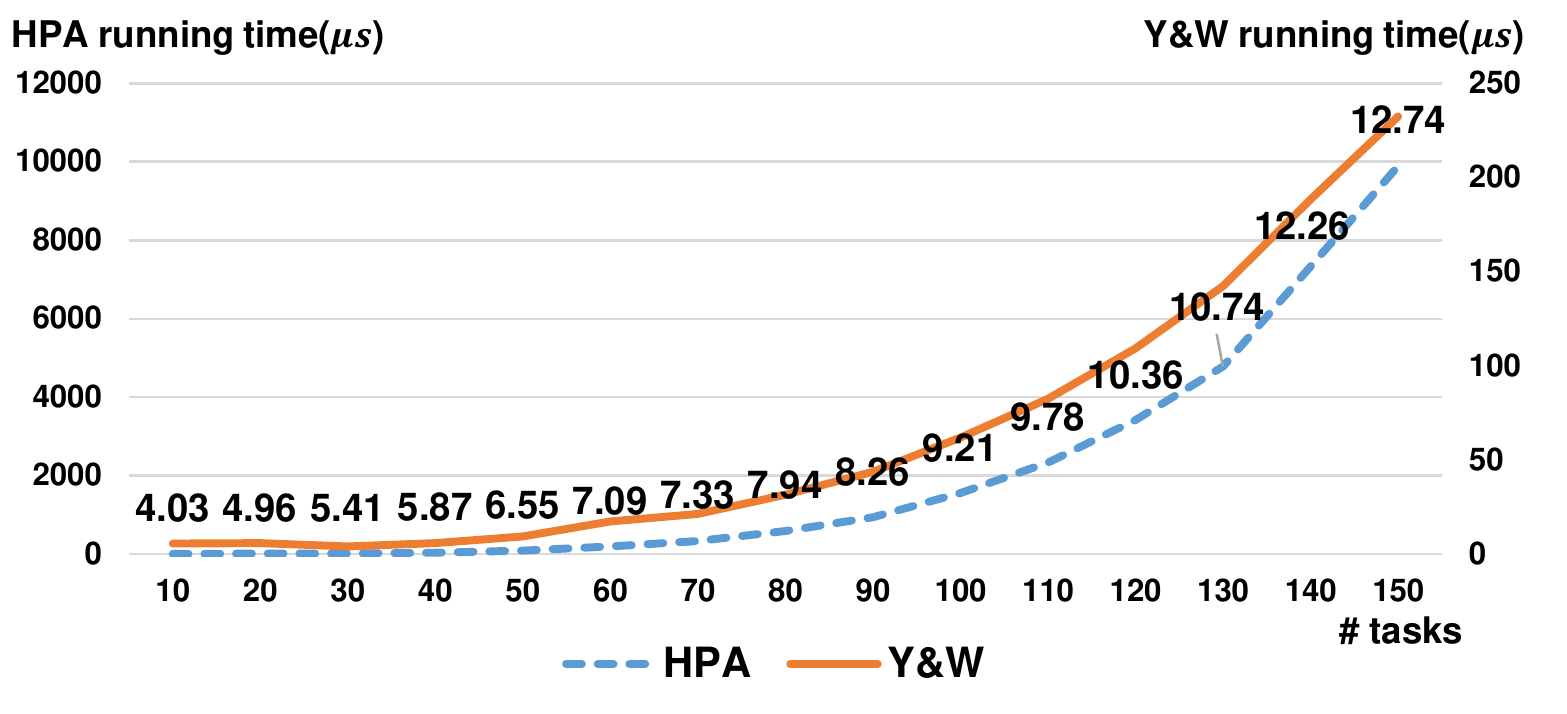}}
\caption{Average running time graph with average number of iteration}
\label{fig:Exp_performance}
\end{figure}

Finally, we measure the running time of HPA on 3.4 GHz i7 machine with 8GB main memory and the iteration number of convergence to verify the scalability of the proposed technique. We vary the number of total tasks from 10 to 150 and use 100 randomly generated examples for each number of total tasks. The average values are plotted in Fig. \ref{fig:Exp_performance}, where the left y-axis and the right y-axis indicate the running time of HPA and Y\&W in micro seconds, respectively. The numbers labeled to the graph of the HPA indicate the average numbers of iterations for convergence. Every example was converged within maximum 20 iterations. Although HPA is slower than the Y\&W method by an order of magnitude, both methods show similar scalability as shown in Fig. \ref{fig:Exp_performance}.

\section{Conclusion} \label{sec:conclusion}

In this paper, we addressed a very challenging problem that is to tightly estimate the worst case response time of an application in a distributed embedded system. It is shown that a state-of-the-art technique, Y\&W method, fails to find a conservative WCRT bound. Thus we propose a hybrid performance analysis (HPA) method that combines the scheduling time bound analysis and the response time analysis to consider inter-task interference between different tasks. It finds a conservative and tight WCRT bound, considering task dependency, execution time variation, arbitrary mixture of fixed-priority preemptive and non-preemptive processing elements, and input jitters. Experimental results show that it produces tighter bounds than the Y\&W method, MAST, and pyCPA. Convergence and scalability of the proposed technique are confirmed empirically. Since it is an iterative technique with tens of non-linear formulas, it is a very difficult problem to prove the convergence formally, which is left as a future work.



\appendixhead{CHOI}


\bibliographystyle{ACM-Reference-Format-Journals}

\received{February 2007}{March 2009}{June 2009}

\elecappendix

\newtheorem{APPtheorem}{Theorem}[section]
\newtheorem{APPlemma}{Lemma}
\newtheorem{APPdefinition}{Definition}

\renewcommand{\theAPPtheorem}{{\thesection}.\arabic{APPtheorem}}
\renewcommand{\theAPPlemma}{{\thesection}.\arabic{APPlemma}}
\renewcommand{\theAPPdefinition}{{\thesection}.\arabic{APPdefinition}}

\medskip

\section{Proofs for Optimization Techniques}

We prove that the optimization technique of duplicate preemption elimination preserves the conservativeness of the HPA technique. It is a rather long proof. We will make several definitions, lemmas, and theorems in this section.

\begin{APPdefinition} \label{theorem4_2}
$pe(\task{t})[x,y]$ is defined as the sum of execution time of tasks of which priority is higher than $\task{t}$ from time $x$ to time $y$.
Formally,
$$pe(\task{t})[x,y] = \sum_{\task{s}\in{S_{pe(\task{t})[x,y]}}}{\min(y,\finish{s})-\max(x,\start{s})}$$
where $S_{pe(\task{t})[x,y]}=\{\task{s}|\pri{s}>\pri{t}, (x \le \start{s}<y ~or~ x< \finish{s} \le y)\}$ for preemptive scheduling and
$S_{pe(\task{t})[x,y]}=\{\task{s}|(\pri{s}>\pri{t} ~or~ \start{s} < \start{t} ), (x \le \start{s}<y ~or~ x< \finish{s} \le y)\}$ for nonpreemptive scheduling.
\end{APPdefinition}

The following two lemmas hold and their proofs are trivial.

\begin{APPlemma} \label{theorem4_3}
$0 \le pe(\task{t})[x,y] \le y-x.$
\end{APPlemma}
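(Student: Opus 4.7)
The claim has two halves: a non-negativity bound and an occupancy bound by the window length. The first is a purely term-by-term arithmetic check on the summands; the second is a single-processor capacity argument. I would present them in that order.

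For the lower bound $0 \le pe(\task{t})[x,y]$, I would show that each individual summand $\min(y,\finish{s})-\max(x,\start{s})$ is non-negative. Since $\task{s}\in S_{pe(\task{t})[x,y]}$ implies $x \le \start{s} < y$ or $x < \finish{s} \le y$, I would split into two cases. In the first, $\max(x,\start{s})=\start{s}$ and $\min(y,\finish{s}) \ge \start{s}$ because $\finish{s}\ge\start{s}$ and $y>\start{s}$, so the term is non-negative. The second case is symmetric: $\min(y,\finish{s})=\finish{s}$ and $\max(x,\start{s})\le\finish{s}$. Summing non-negative terms preserves non-negativity.

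For the upper bound $pe(\task{t})[x,y]\le y-x$, the idea is a processor-capacity argument. The tasks in $S_{pe(\task{t})[x,y]}$ are scheduled on the same PE as $\task{t}$ (the priority comparison $\pri{s}>\pri{t}$ is meaningful only on a common PE, and the same holds for the non-preemptive clause that invokes $\start{s}<\start{t}$). On a single processor no two tasks execute simultaneously, so decomposing the interval $[x,y]$ by which task currently holds the CPU gives a partition whose total length is exactly $y-x$. I would argue that each clipped interval $[\max(x,\start{s}),\min(y,\finish{s})]$ is contained in the portion of $[x,y]$ during which $\task{s}$ (or a task preempting $\task{s}$ and therefore also in $S_{pe(\task{t})[x,y]}$) is charged to the sum, so the total of the clipped lengths equals the measure of the subset of $[x,y]$ on which some task of $S_{pe(\task{t})[x,y]}$ is live. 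That subset has length at most $y-x$, which yields the bound.

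The main obstacle is the upper bound, because in preemptive scheduling two task lifetimes $[\start{s_1},\finish{s_1}]$ and $[\start{s_2},\finish{s_2}]$ can overlap when one preempts the other, so a naive disjointness argument fails and one has to re-attribute the clipped lengths carefully. The cleanest route, which I would adopt, is to induct on the number of preemption nesting levels inside $[x,y]$: at the innermost level the running task's lifetime occupies a subinterval of $[x,y]$, and its contribution can be subtracted from both sides, reducing the problem to a strictly smaller window and fewer tasks, until the base case of an empty $S_{pe(\task{t})[x,y]}$ trivially gives $0\le y-x$.
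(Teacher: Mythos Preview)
The paper does not actually prove this lemma: it declares that the proofs of this lemma and the next (additivity of $pe$) are ``trivial'' and moves on. The authors are evidently relying on the informal prose reading of $pe(\task{t})[x,y]$ as the total processor time consumed by higher-priority work inside the window $[x,y]$.

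Your lower-bound argument is correct and complete.

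Your upper-bound argument has a genuine gap, and the difficulty you flag is real enough that the inequality $pe(\task{t})[x,y]\le y-x$ is \emph{false} for the summation formula in the definition of $pe$ once nested preemption occurs. On a preemptive PE let $\task{1}$ (priority above $\task{t}$) run over $[0,15]$ and be preempted by $\task{2}$ (priority above $\task{1}$) over $[5,10]$, so that $\start{1}=0$, $\finish{1}=15$, $\start{2}=5$, $\finish{2}=10$. With $x=0$, $y=15$ both tasks satisfy the membership condition for $S_{pe(\task{t})[0,15]}$ and the formula yields $(15-0)+(10-5)=20>15=y-x$. The clipped lifetimes overlap and the sum double-counts the inner interval; your sentence ``the total of the clipped lengths equals the measure of the subset of $[x,y]$ on which some task of $S$ is live'' is exactly the step that fails. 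The induction on nesting depth does not repair it either: when you remove the innermost task and shrink the window, the clipped contribution of the \emph{outer} task also shrinks, so the two sides of the inequality do not decrease by the same amount and the induction hypothesis does not close.

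What the paper intends---and what it actually uses downstream---is the informal definition: total execution time in $[x,y]$ of tasks that dominate $\task{t}$ on $\proc{t}$. For that quantity the upper bound really is a one-line capacity argument (a single processor runs at most $y-x$ units of anything in $[x,y]$), and that is the ``trivial'' proof the authors have in mind. If you want a rigorous write-up, argue for the intended quantity directly, or add the hypothesis that the clipped lifetimes of tasks in $S_{pe(\task{t})[x,y]}$ are pairwise disjoint (which holds, for instance, under non-preemptive scheduling or whenever no two members of $S$ preempt one another).
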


\begin{APPlemma} \label{theorem4_4}
$pe(\task{t})[x,z] = pe(\task{t})[x,y] + pe(\task{t})[y,z].$
\end{APPlemma}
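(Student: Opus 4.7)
The plan is to reduce the claimed identity for $pe(\task{t})[\cdot,\cdot]$ to a per-task statement and then verify that statement by a short case split on where the execution interval of each higher-priority task sits relative to the split point $y$. Each summand in Definition~\ref{theorem4_2} has the form $\min(b,\finish{s})-\max(a,\start{s})$, and this expression is already additive in the endpoints whenever $\start{s}\le a\le b\le\finish{s}$; the only real work is handling the clipping at the interval boundaries and making sure the indexing sets line up.

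First, I would expand both $pe(\task{t})[x,y]$ and $pe(\task{t})[y,z]$ and observe that $S_{pe(\task{t})[x,y]}\cup S_{pe(\task{t})[y,z]}=S_{pe(\task{t})[x,z]}$, because any higher-priority task executing anywhere inside $[x,z]$ either executes inside $[x,y]$, inside $[y,z]$, or straddles $y$ (in which case it lies in both). For the nonpreemptive clause of Definition~\ref{theorem4_2}, the side condition $\pri{s}>\pri{t}$ or $\start{s}<\start{t}$ is a property of $\task{s}$ alone and is independent of the window, so the same coverage still holds. Consequently the identity reduces to showing that for each $\task{s}$ in the union, the per-task contribution $c(\task{s},a,b):=\min(b,\finish{s})-\max(a,\start{s})$, read as $0$ whenever $\task{s}\notin S_{pe(\task{t})[a,b]}$, satisfies $c(\task{s},x,z)=c(\task{s},x,y)+c(\task{s},y,z)$.

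Second, I would case-split on the position of $[\start{s},\finish{s}]$ relative to $y$. If $\finish{s}\le y$, then $\task{s}$ contributes nothing to $[y,z]$ and $c(\task{s},x,z)=\finish{s}-\max(x,\start{s})=c(\task{s},x,y)$; the case $\start{s}\ge y$ is symmetric; and if $\start{s}<y<\finish{s}$, then $c(\task{s},x,y)+c(\task{s},y,z)=(y-\max(x,\start{s}))+(\min(z,\finish{s})-y)$, with the $y$'s cancelling to yield $c(\task{s},x,z)$. Summing the three equalities over all $\task{s}$ in $S_{pe(\task{t})[x,z]}$ delivers the claim.

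The only place where care is needed is bookkeeping at the boundary: when a task lies in $S_{pe(\task{t})[x,z]}$ but not in, say, $S_{pe(\task{t})[x,y]}$, we must check that the ``zero'' contribution we assign on $[x,y]$ really is what the definition produces, and that a task that touches only the single point $y$ is not double-counted. Both points are handled by the strict/non-strict inequalities already built into $S_{pe(\task{t})[a,b]}$ (the use of $\start{s}<b$ paired with $\finish{s}\le b$), so no separate argument beyond noting this consistency is required. I expect this to be the mildest of obstacles; the identity is essentially measure-additivity of the union of clipped intervals.
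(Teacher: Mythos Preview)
The paper gives no proof beyond declaring the lemma ``trivial,'' so there is no detailed argument to compare against; the substance of the review is whether your case analysis actually closes.

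It does not quite. In your third case, $\start{s}<y<\finish{s}$, you write $c(\task{s},x,y)=y-\max(x,\start{s})$, which presupposes $\task{s}\in S_{pe(\task{t})[x,y]}$. But take $\start{s}<x$ and $y<\finish{s}\le z$: then $\task{s}\in S_{pe(\task{t})[x,z]}$ (via $x<\finish{s}\le z$) and $\task{s}\in S_{pe(\task{t})[y,z]}$, yet $\task{s}\notin S_{pe(\task{t})[x,y]}$ because neither $x\le\start{s}<y$ nor $x<\finish{s}\le y$ holds. Under your own convention $c(\task{s},x,y)=0$, so the per-task identity reads $\finish{s}-x = 0 + (\finish{s}-y)$, which is false whenever $x<y$. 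The symmetric case $x\le\start{s}<y$ with $\finish{s}>z$ fails the same way on the $[y,z]$ side. Your final paragraph waves at exactly this boundary issue but asserts the strict/non-strict pattern in $S_{pe(\task{t})[a,b]}$ resolves it; it does not.

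The underlying problem is that Definition~\ref{theorem4_2}, read literally, omits any task whose execution interval strictly contains $[a,b]$, even though such a task occupies the processor for the full window. The informal description in the definition (``the sum of execution time of tasks of higher priority from time $x$ to time $y$'') is plainly interval-additive, and that is what the authors mean by ``trivial.'' To make your formal argument go through you should either (i) work with the informal quantity, or (ii) first repair the indexing set to $\{\task{s}\mid [\start{s},\finish{s}]\cap[a,b]\neq\emptyset,\ \ldots\}$, after which your three-case split is correct and complete.
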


\ignore{
\begin{proof}
pe(\task{t})[x,y] + pe(\task{t})[y,z] \\
= \sum_{\pri{s}>\pri{t} ~and ~(x \le \start{s}~{\text {or}}~ \finish{s} \le y)}{min(y,\finish{s})-max(x,\start{s})}
+ \sum_{\pri{s}>\pri{t} ~and ~(y \le \start{s}~{\text {or}}~ \finish{s} \le z)}{min(z,\finish{s})-max(y,\start{s})} \\
= \sum_{\pri{s}>\pri{t} ~and ~(x \le \start{s}~{\text {or}}~ \finish{s} \le y ~or y \le \start{s}~{\text {or}}~ \finish{s} \le z)} {min(y,\finish{s})-max(x,\start{s})}+{min(z,\finish{s})-max(y,\start{s})}
...
\end{proof}
}

\begin{APPtheorem} \label{theorem4_5}
for all $\task{t}$, if $\release{t} \le \maxR{t}-\Delta$
then it always holds that $\finish{t} \le \maxF{t} - \Delta + pe(\task{t})[\maxR{t}-\Delta,\maxR{t}]$.
\end{APPtheorem}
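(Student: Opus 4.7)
The plan is to bound $\finish{t}$ by decomposing the processor's busy intervals before and after $\maxR{t}$, and then close with a Lemma~\ref{theorem5}-style estimate on the tail. I would begin by observing that since $\release{t} \le \maxR{t} - \Delta$, task $\task{t}$ is released and ready throughout $[\maxR{t} - \Delta, \maxR{t}]$. Because the processor cannot idle when $\task{t}$ is ready, during this window it must execute either $\task{t}$ itself or some task counted in $pe(\task{t})$ --- higher-priority tasks in the preemptive case, plus, through the enlarged set in the non-preemptive branch of Definition~\ref{theorem4_2}, any lower-priority blocker that started before $\task{t}$. Hence at time $\maxR{t}$, $\task{t}$ has already completed at least $\Delta - pe(\task{t})[\maxR{t}-\Delta,\maxR{t}]$ units of work.

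Applying the same ``no-idle'' principle to the whole interval $[\release{t}, \finish{t}]$ gives
\[ \finish{t} - \release{t} \le \WCET{t} + pe(\task{t})[\release{t}, \finish{t}]. \]
I would then split the right-hand side via Lemma~\ref{theorem4_4} into $[\release{t}, \maxR{t}-\Delta]$, $[\maxR{t}-\Delta, \maxR{t}]$, and $[\maxR{t}, \finish{t}]$, bound the first piece by $\maxR{t}-\Delta-\release{t}$ using Lemma~\ref{theorem4_3}, and rearrange to obtain
\[ \finish{t} \le \maxR{t} - \Delta + \WCET{t} + pe(\task{t})[\maxR{t}-\Delta, \maxR{t}] + pe(\task{t})[\maxR{t}, \finish{t}]. \]

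The target inequality then reduces to showing $\maxR{t} + \WCET{t} + pe(\task{t})[\maxR{t}, \finish{t}] \le \maxF{t}$. Substituting $\maxF{t} = \maxR{t} + Delay_t^l + Delay_t^h + \WCET{t} + Preempt_t^W$ from equations~(\ref{eq:Eq13}) and~(\ref{eq:Eq18}) turns this into the claim that $Delay_t^l + Delay_t^h + Preempt_t^W$ upper-bounds the actual non-$\task{t}$ execution time on $[\maxR{t}, \finish{t}]$. Since $[\maxR{t}, \finish{t}] \subseteq [\maxR{t}, \maxF{t}]$ by Lemma~\ref{theorem5}, I would obtain this inequality by repeating the same case analysis over $\condset{B}{t}, \condset{C}{t}, \condset{D}{t}, \condset{E}{t}$ that appears inside the proof of Lemma~\ref{theorem5}.

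The main obstacle will be this last step: the $Delay + Preempt$ terms were originally derived under the assumption that $\task{t}$ is released at $\maxR{t}$, yet we apply them when the actual release $\release{t}$ is strictly earlier. I need to verify that shifting $\task{t}$'s release earlier cannot \emph{increase} the higher-priority or blocking load falling after $\maxR{t}$. This should hold because inter-graph interferers are governed by their own period shifts, independent of $\task{t}$, while intra-graph interferers in $\condset{D}{t} \cup \condset{G}{t}$ are screened by the start- and finish-time conditions in their definitions, but making this precise requires a careful parallel to Lemma~\ref{theorem5}'s non-preemptive argument.
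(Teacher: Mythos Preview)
Your proposal is correct and structurally the same as the paper's argument: both start from $\finish{t} \le \release{t} + \WCET{t} + pe(\task{t})[\release{t},\finish{t}]$, split via Lemma~\ref{theorem4_4}, absorb the leading segment with Lemma~\ref{theorem4_3}, and reduce everything to the single inequality $\maxF{t} \ge \maxR{t} + \WCET{t} + pe(\task{t})[\maxR{t},\maxF{t}]$. The paper wraps this in a contradiction while you argue directly, but that is cosmetic.

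The only real difference is how much work you assign to that last inequality. The paper dispatches it in one line as ``obvious'' from Definition~\ref{theorem4_2} and the construction of $\maxF{t}$; it does not re-run the $\condset{B}{t},\condset{C}{t},\condset{D}{t},\condset{E}{t}$ case split. Your stated obstacle --- that the $Delay$ and $Preempt$ terms were derived assuming release at $\maxR{t}$ while here $\release{t}$ is earlier --- does not actually arise: the inequality concerns only the \emph{fixed} window $[\maxR{t},\maxF{t}]$, and the conservativeness arguments of Lemmas~\ref{theorem4} and~\ref{theorem5} already show that $Delay_t^l + Delay_t^h + Preempt_t^W$ dominates the higher-priority and blocking work landing in that window in \emph{any} execution, independent of where $\task{t}$ itself was released. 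Once you invoke $\finish{t}\le\maxF{t}$ from Lemma~\ref{theorem5} (which both you and the paper need to pass from $pe(\task{t})[\maxR{t},\finish{t}]$ to $pe(\task{t})[\maxR{t},\maxF{t}]$), the claim follows without the extra re-derivation you anticipate.
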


\begin{proof}
We prove it by contradiction. Suppose that $\finish{t}$ exists such that
$\finish{t} > \maxF{t} - \Delta + pe(\task{t})[\release{t},\finish{t}]$ where $\release{t} \le \maxR{t}-\Delta$.
It is obvious that $\maxF{t} \ge \maxR{t} + \WCET{t} + pe(\task{t})[\maxR{t},\maxF{t}]$ and $\finish{t} \le \release{t} + \WCET{t} + pe(\task{t})[\release{t},\finish{t}]$, by Definition \ref{theorem4_2}.
$$\finish{t} > \maxF{t} - \Delta + pe(\task{t})[\maxR{t}-\Delta,\maxR{t}]$$
$$\ge \maxR{t} + \WCET{t} + pe(\task{t})[\maxR{t},\maxF{t}] - \Delta + pe(\task{t})[\maxR{t}-\Delta,\maxR{t}]$$
$$= \maxR{t} + \WCET{t} - \Delta + pe(\task{t})[\maxR{t}-\Delta,\maxF{t}].$$
Thus,
$$\release{t} + \WCET{t} + pe(\task{t})[\release{t},\finish{t}]
> \maxR{t} + \WCET{t} - \Delta + pe(\task{t})[\maxR{t}-\Delta,\maxF{t}].$$
$$\release{t} + pe(\task{t})[\release{t},\maxR{t}-\Delta] + pe(\task{t})[\maxR{t}-\Delta,\finish{t}]$$
$$> \maxR{t} - \Delta + pe(\task{t})[\maxR{t}-\Delta,\maxF{t}].$$

$\release{t} + pe(\task{t})[\release{t},\maxR{t}-\Delta] > \maxR{t} - \Delta $
since $pe(\task{t})[\maxR{t}-\Delta,\maxF{t}] \ge pe(\task{t})[\maxR{t}-\Delta,\finish{t}]$.\\

$ \maxR{t} - \Delta < \release{t} + pe(\task{t})[\release{t},\maxR{t}-\Delta] \le \maxR{t} -\Delta$
since $pe(\task{t})[\release{t},\maxR{t}-\Delta] \le \maxR{t}-\Delta - \release{t}$.\\

It is a contradiction. Hence $\finish{t} \le \maxF{t} - \Delta + pe(\task{t})[\maxR{t}-\Delta,\maxR{t}]$. Q.E.D.
\end{proof}

Theorem~\ref{theorem4_5} explains how much the maximum finish time can be reduced
if the maximum release time decreases. Since the release time depends on the finish times of
predecessor tasks, the theorem presents the effect of the finish time of predecessor tasks onto
the finish time of the target task. This theorem will be generalized later in theorem~\ref{theorem4_12}.

Hereafter, $pe(\task{t},\Delta)$ denotes $pe(\task{t})[\maxR{t}-\Delta,\maxR{t}]$ for brevity.

It is trivial that if a predecessor task always finishes early then the target task is released early.
Therefore, for $\task{s} \in pred(\task{t})$, if $\forall_{\finish{s}}{(\finish{s} \le \maxF{s}-\Delta_{\task{s}})}$
then $\release{t} \le \max_{\task{s} \in pred(\task{t})}{(\maxF{s}-\Delta_{\task{s}})}$.

\begin{APPdefinition} \label{theorem4_8}
$ancestor(\task{t})$ is a set of ancestors of $\task{t}$, formally $ancestor(\task{t}) = \bigcup_{i\ge1}{pred^i(\task{t})}$,
where $pred^i(\task{t}) = \bigcup_{\task{s} \in pred^{i-1}(\task{t})}{pred(\task{s})}$ an
d\\
$pred^1(\task{t}) = pred(\task{t})$.
\end{APPdefinition}

\begin{APPdefinition} \label{theorem4_9}
$succ(\task{t})$ is the immediate successor set of task $\task{t}$, formally $succ(\task{t}) = \{\task{c} | \task{t} \in pred(\task{c})\}.$
\end{APPdefinition}

\begin{APPdefinition} \label{theorem4_10}
$descendant(\task{t})$ is a set of descendants of $\task{t}$, formally $descendant(\task{t}) = \bigcup_{i\ge1}{succ^i(\task{t})}$,
where $succ^i(\task{t}) = \bigcup_{\task{s} \in succ^{i-1}(\task{t})}{succ(\task{s})}$ and
$succ^1(\task{t}) = succ(\task{t})$.
\end{APPdefinition}

\begin{APPdefinition} \label{theorem4_11}
$path(\task{s},\task{d}) = descendant(\task{s}) \bigcap ancestor(\task{d}).$
\end{APPdefinition}

Now we will examine the effect of the early finish time of an ancestor task to the finish time of the target task.
In theorem~\ref{theorem4_5}, we examine the relation between the release time and the finish time which
also explains the finish times between predecessor tasks and the target task.
We will extend it to the effect between an ancestor task and the target task.
If we apply theorem~\ref{theorem4_5} repeatedly, we can compute how much the maximum finish time reduction is inherited
from an ancestor task to the target task.

\begin{APPtheorem} \label{theorem4_12}
For $\task{a} \in ancestor(\task{t})$, if $\forall_{\finish{a}}{(\tau_a^{finish} \le \maxF{a}-\Delta_{\task{a}}+pe(\task{a},\Delta_{\task{a}}))}$
then always $\finish{t} \le \maxF{t} - \Delta_{\task{t}} + pe(\task{t},\Delta_{\task{t}})$, 
where $\Delta_{\task{m}} = \max_{\task{i} \in pred(\task{m})}{(\maxF{i})} - \max_{\task{i} \in pred(\task{m})}{(\maxF{i}-\Delta_{\task{i}}+pe(\task{i},\Delta_{\task{i}}))}$.
Note that $\Delta_{\task{m}} = 0$ if there is no predecessor.
\end{APPtheorem}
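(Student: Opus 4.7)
The plan is to prove Theorem~\ref{theorem4_12} by induction on the length of the longest directed path from $\task{a}$ to $\task{t}$ in the task graph, propagating finish-time bounds forward through the dependency structure. The release time $\release{t}$ equals the maximum of predecessor finish times, so inductively-assumed bounds on $\finish{i}$ for $\task{i} \in pred(\task{t})$ immediately give a bound on $\release{t}$; Theorem~\ref{theorem4_5} then converts that release-time bound into the desired finish-time bound on $\task{t}$.

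To set up the induction, I would first adopt the convention that $\Delta_{\task{m}} = 0$ for every task $\task{m}$ that is neither $\task{a}$ nor a descendant of $\task{a}$. Under this convention the recursive formula for $\Delta_{\task{m}}$ is self-consistent: at any such $\task{m}$ all predecessors also have $\Delta = 0$ and $pe(\task{i},0) = 0$, so the formula evaluates to $\max_{\task{i}} \maxF{i} - \max_{\task{i}} \maxF{i} = 0$. The base case of the induction is $\task{t}$ an immediate successor of $\task{a}$: the only predecessor with a nontrivial $\Delta$ is $\task{a}$ itself (any $\task{i} \in pred(\task{t})$ that were also a descendant of $\task{a}$ would lengthen the longest $\task{a}\to\task{t}$ path beyond 1), the theorem's hypothesis supplies the bound on $\finish{a}$, and every other predecessor $\task{i}$ satisfies the trivial bound $\finish{i} \le \maxF{i}$ from Lemma~\ref{theorem5}.

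For the inductive step, assume the conclusion holds for every descendant of $\task{a}$ closer to $\task{a}$ than $\task{t}$. For each $\task{i} \in pred(\task{t})$, either $\task{i} \in \{\task{a}\} \cup (descendant(\task{a}) \cap ancestor(\task{t}))$---in which case the theorem's hypothesis (for $\task{i} = \task{a}$) or the inductive hypothesis yields $\finish{i} \le \maxF{i} - \Delta_{\task{i}} + pe(\task{i}, \Delta_{\task{i}})$---or $\Delta_{\task{i}} = 0$ and the bound reduces to $\finish{i} \le \maxF{i}$. In either case the same inequality holds, and taking maxima over $\task{i}$ gives
$$\release{t} \;=\; \max_{\task{i}\in pred(\task{t})} \finish{i} \;\le\; \max_{\task{i}\in pred(\task{t})} \bigl(\maxF{i} - \Delta_{\task{i}} + pe(\task{i}, \Delta_{\task{i}})\bigr) \;=\; \maxR{t} - \Delta_{\task{t}}.$$
Theorem~\ref{theorem4_5} applied to $\task{t}$ with reduction $\Delta_{\task{t}}$ then delivers $\finish{t} \le \maxF{t} - \Delta_{\task{t}} + pe(\task{t}, \Delta_{\task{t}})$, as required.

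The main obstacle, and the step I would check most carefully, is the non-negativity $\Delta_{\task{t}} \ge 0$, which is required for Theorem~\ref{theorem4_5} to apply (its interval $[\maxR{t}-\Delta, \maxR{t}]$ degenerates when $\Delta < 0$, and the proof implicitly treats $\Delta$ as a reduction). Non-negativity follows from Lemma~\ref{theorem4_3}: each $pe(\task{i}, \Delta_{\task{i}}) \le \Delta_{\task{i}}$, so $\maxF{i} - \Delta_{\task{i}} + pe(\task{i}, \Delta_{\task{i}}) \le \maxF{i}$ for every predecessor, hence the shifted maximum cannot exceed $\max_{\task{i}\in pred(\task{t})} \maxF{i} = \maxR{t}$, which is exactly the statement $\Delta_{\task{t}} \ge 0$.
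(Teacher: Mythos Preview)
Your proposal is correct and follows essentially the same route as the paper: induction along the dependency structure, bounding $\release{t}$ via the inductive hypothesis on predecessor finish times, then invoking Theorem~\ref{theorem4_5} to obtain the finish-time bound for $\task{t}$. Your write-up is in fact more careful than the paper's on two points---you make explicit the convention $\Delta_{\task{m}}=0$ for tasks outside $\{\task{a}\}\cup descendant(\task{a})$, and you verify $\Delta_{\task{t}}\ge 0$ via Lemma~\ref{theorem4_3} before applying Theorem~\ref{theorem4_5}---both of which the paper's proof leaves implicit.
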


\begin{proof}
We will prove it by induction.
First, $\finish{a} \le \maxF{a}-\Delta_{\task{a}}+pe(\task{a},\Delta_{\task{a}})$.
Assume that $\finish{i} \le \maxF{i}-\Delta_{\task{i}}+pe(\task{i},\Delta_{\task{i}})$.
For $\task{m} \in succ(\task{i})$,
$\release{m}
= \max_{\task{i} \in pred(\task{m})}{\finish{i}}
\le \max_{\task{i} \in pred(\task{m})}{\maxF{i}-\Delta_{\task{i}}+pe(\task{i},\Delta_{\task{i}})}$.
Let $\Delta_{\task{m}}$ = $\max_{\task{i} \in pred(\task{m})}{(\maxF{i})}- \max_{\task{i} \in pred(\task{m})}{(\maxF{i}-\Delta_{\task{i}}+pe(\task{i},\Delta_{\task{i}}))}$.\\

Since $\maxR{m} = \max_{\task{i} \in pred(\task{m})}{\maxF{i}}$,

$$\release{m} = \max_{\task{i} \in pred(\task{m})}{\finish{i}} \le \max_{\task{i} \in pred(\task{m})}{\maxF{i}-\Delta_{\task{i}}+pe(\task{i},\Delta_{\task{i}})}$$
$$= \max_{\task{i} \in pred(\task{m})}{(\maxF{i})} - \Delta_{\task{m}} = \maxR{m} - \Delta_{\task{m}}.$$

By theorem ~\ref{theorem4_5}, $\finish{m} \le \maxF{m} - \Delta_{\task{m}}+pe(\task{m},\Delta_{\task{m}})$
since $\release{m} \le \maxR{m} - \Delta_{\task{m}}$. Q.E.D.
\end{proof}

Theorem~\ref{theorem4_12} tells us that the contribution of the maximum finish time reduction
of an ancestor task diminishes as it propagates to the child tasks.
To utilize theorem~\ref{theorem4_12},
we define the new reduced finish time $\hat \tau_t^{maxF}$ by reducing the finish time of a task $\task{a}$ by $\Delta$ as following:

\begin{APPdefinition} \label{theorem4_13}
$\hat \tau^{maxF}_{t,\Delta_{\tau_a} \leftarrow \Delta}
= \tau_t^{maxF} - \Delta_{\tau_t} + pe(\tau_t,\Delta_{\tau_t})$,
where $\Delta_{\tau_m} = \max_{\tau_i \in pred(\tau_m)}{\tau_i^{maxF}} - \max_{\tau_i \in pred(\tau_m)}
{\hat \tau^{maxF}_{i,\Delta_{\tau_a} \leftarrow \Delta}}$ if $\Delta_{\tau_a}$ is $\Delta$.
\end{APPdefinition}

By definition~\ref{theorem4_13} and theorem~\ref{theorem4_12},
$\tau_t^{finish} \le \hat \tau_t^{maxF}$ 
if $\forall \tau_a^{finish}, \tau_a^{finish} \le \hat \tau_a^{maxF}$. 
To avoid the duplicate preemptions, we define a common preemptor set as $cp(\tau_a,\tau_t)$.

\begin{APPdefinition} \label{theorem4_14}
For $\tau_a \in ancestor(\tau_t)$,
$cp(\tau_a,\tau_t) = \{\tau_s|\tau_s^{pri}>\max(\tau_a^{pri},\tau_t^{pri}), \tau_s^{maxF} \ge \tau_t^{minR}, \tau_s^{minR} \le \tau_a^{maxF}\}$.
\end{APPdefinition}

If a preemption task can preempt both $\tau_a$ and $\tau_t$ then $\tau_t^{finish}$ is larger
when it preempts $\tau_t$ rather than $\tau_a$.

\begin{APPtheorem} \label{theorem4_15}
If a common preemptor $\tau_p$ can preempt ancestor task $\tau_a$ and target task $\tau_t$ completely,
then always it produces no smaller finish time of $\tau_t$ when $\tau_p$ preempts $\tau_t$ than $\tau_a$.
\end{APPtheorem}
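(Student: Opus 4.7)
The plan is to couple two execution scenarios that differ only in where $\tau_p$'s $\WCET{p}$ amount of work lands on the shared processor: scenario~(A) in which $\tau_p$ preempts $\tau_a$, and scenario~(B) in which $\tau_p$ preempts $\tau_t$ instead. Let $F^A$ and $F^B$ denote the resulting finish times of $\tau_t$; the goal is to show $F^B \ge F^A$. All other scheduling decisions are held fixed, so the only effects to track are the propagation of an earlier $\finish{a}$ forward to $\tau_t$ (gained by removing $\tau_p$ from $\tau_a$) and the direct preemption of $\tau_t$ (paid by placing $\tau_p$ on $\tau_t$).

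The first step is to quantify the gain. Under the complete-preemption hypothesis, stripping $\tau_p$'s preemption from $\tau_a$ drops $\finish{a}$ by exactly $\WCET{p}$, so I can invoke Theorem~\ref{theorem4_12} with $\Delta_{\tau_a} = \WCET{p}$ and trace the propagated $\Delta_{\tau_m}$ of Definition~\ref{theorem4_13} along $path(\tau_a,\tau_t)$. This bounds the induced reduction of $\finish{t}$ by $\Delta_{\tau_t} - pe(\tau_t,\Delta_{\tau_t})$, with $\Delta_{\tau_t} \le \WCET{p}$. The second step re-inserts $\tau_p$ as a direct preemptor of $\tau_t$, which adds exactly $\WCET{p}$ to $\finish{t}$: the common-preemptor conditions $\pri{p} > \pri{t}$, $\minR{p} \le \maxF{a} \le \maxR{t}$, and $\maxF{p} \ge \minR{t}$ from Definition~\ref{theorem4_14} force $\tau_p$'s activation to lie strictly inside $\tau_t$'s release-to-finish window, so the full $\WCET{p}$ counts. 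Combining,
\begin{equation*}
F^B - F^A \;\ge\; \WCET{p} - \bigl(\Delta_{\tau_t} - pe(\tau_t,\Delta_{\tau_t})\bigr) \;\ge\; pe(\tau_t,\Delta_{\tau_t}) \;\ge\; 0.
\end{equation*}

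The main obstacle is the propagation step. I must rigorously verify two things. First, removing $\tau_p$'s preemption of $\tau_a$ must drop $\finish{a}$ by the full $\WCET{p}$ rather than by a smaller amount; this is where the word ``completely'' in the hypothesis earns its keep, ruling out partial-preemption scenarios in which a previously blocked lower-priority task could slip in and reclaim part of the freed slot. Second, the recursive $\Delta_{\tau_m} = \max_i \maxF{i} - \max_i(\maxF{i} - \Delta_{\tau_i} + pe(\tau_i,\Delta_{\tau_i}))$ of Definition~\ref{theorem4_13} must stay bounded by $\WCET{p}$ at every node along $path(\tau_a,\tau_t)$, including multi-predecessor nodes where the outer $\max$ could in principle amplify the propagated reduction. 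For the latter I would induct on the length of $path(\tau_a,\tau_t)$, using Lemma~\ref{theorem4_4} to compose the $pe$ intervals and the observation that $\Delta_{\tau_m}$ never exceeds $\max_i \Delta_{\tau_i}$, which inductively remains at most $\WCET{p}$.
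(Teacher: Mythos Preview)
Your plan is the paper's argument with the monotonicity step spelled out: the paper simply writes $\hat\tau^{maxF}_{t,\Delta_{\tau_a}\leftarrow\WCET{p}}\ge\hat\tau^{maxF}_{t,\Delta_{\tau_t}\leftarrow\WCET{p}}$ ``since $\tau_a$ is an ancestor of $\tau_t$,'' and your induction $\Delta_{\tau_m}\le\max_{\tau_i\in pred(\tau_m)}\Delta_{\tau_i}\le\WCET{p}$ (using $pe\ge0$ from Lemma~\ref{theorem4_3}) is precisely the justification that line needs. One caution on framing: you set up $F^A,F^B$ as finish times of concrete executions and then invoke Theorem~\ref{theorem4_12} and Definition~\ref{theorem4_13}, which are statements about the analysis bounds $\maxF{}$, not about runs; the theorem you are proving is likewise about bounds, so your concern about a lower-priority task ``slipping in'' to reclaim the freed slot, and your appeal to Definition~\ref{theorem4_14} to pin down $\tau_p$'s actual activation window, are both unnecessary once you work entirely in the $\hat\tau^{maxF}$ calculus---the ``completely'' hypothesis just means the full $\WCET{p}$ is charged in both places in the baseline bound.
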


\begin{proof}
The reduced maximum finish time of $\tau_t$ is
$\hat \tau^{maxF}_{t,\Delta_{\tau_a} \leftarrow \tau_p^{WCET}}$ when $\tau_p$
does not preempt $\tau_a$ but preempts $\tau_t$.
It is $\hat \tau^{maxF}_{t,\Delta_{\tau_t} \leftarrow \tau_p^{WCET}}$
when $\tau_p$ does not preempt $\tau_t$ but preempts $\tau_a$.
Since $\tau_a$ is an ancestor of $\tau_t$,
$\hat \tau^{maxF}_{t,\Delta_{\tau_a} \leftarrow \tau_p^{WCET}}
\ge \hat \tau^{maxF}_{t,\Delta_{\tau_t} \leftarrow \tau_p^{WCET}}$.
Hence, it produces no smaller finish time to preempt $\tau_t$ than $\tau_a$. Q.E.D.
\ignore{
Assume that $\delta$ indicates the overlapped execution time
of a common preemption task $\tau_p$ with task $\tau_t$.
Note that it is assumed that $\tau_t$ is preempted by $\tau_p$ for $\delta \le \tau_p^{WCET}$
while $\tau_a$ can be preempted for $\tau_p^{WCET}$ if it is preempted by $\tau_p$.
Let $\tau_{maxF}$ denote the maximum finish time when the common preemption
task preempts both $\tau_a$ and $\tau_t$ simultaneously.
}
\end{proof}

Let us consider a case that a preemption task $\tau_p$ can preempt $\tau_a$ and $\tau_t$, but
$\tau_t$ partially.
Even in this case, $\tau_p$ can always preempt $\tau_a$ fully
since $\tau_p^{finish} \ge \tau_t^{release} \ge \tau_a^{finish}$.
Then we may need to compare
$\hat \tau^{maxF}_{t,\Delta_{\tau_a} \leftarrow \tau_p^{WCET}}$ with
$\hat \tau^{maxF}_{t,\Delta_{\tau_t} \leftarrow maxP(\tau_t,\tau_p)}$ where
$maxP(\tau_t,\tau_p)$ indicates the maximum partial preemption time of $\tau_t$ by $\tau_p$.
Even in this case, surprisingly, preempting $\tau_t$ always provides
larger finish time than the partial preemption case, which is stated by theorem~\ref{theorem4_16}.

\ignore{
Note that $maxP(\tau_t,\tau_p) \le \tau_p^{maxF}-\tau_t^{minR}$.
If $\hat \tau^{maxF}_{t,\Delta_{\tau_a} \leftarrow \tau_p^{WCET}} >
\hat \tau^{maxF}_{t,\Delta_{\tau_t} \leftarrow maxP(\tau_t,\tau_p)}$, then
it provides larger finish time to preempt $\tau_t$; otherwise, it gives larger finish time
to preempt $\tau_a$. Therefore, for the partial preemption case, the optimization algorithm
compares the reduced finish time and chooses the worst case.
[HO: I am not sure whether the supporting of the partial preemption is possible since
too many combinations should be tested whether $\tau_p$ preempts $\tau_a$ or $\tau_t$ if
there are many $\tau_p$. As I know, Junchul's algorithm does not consider the partial
preemption. Can you check it?]
}

\begin{APPtheorem} \label{theorem4_16}
If a common preemptor $\tau_p$ can preempt an ancestor task $\tau_a$ fully and the target task $\tau_t$ partially,
it always produces no smaller finish time bound of $\tau_t$ when $\tau_p$ preempts $\tau_t$ rather than $\tau_a$.
\end{APPtheorem}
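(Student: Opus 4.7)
The plan is to mirror the strategy of Theorem~\ref{theorem4_15} while carefully bookkeeping the partial nature of $\tau_p$'s preemption of $\tau_t$. Let $T_0$ denote the duplicate-preemption baseline of $\tau_t^{maxF}$, i.e., the value that charges $\tau_p^{WCET}$ to $\tau_a$ and $maxP(\tau_t,\tau_p)$ to $\tau_t$ simultaneously. I would consider two alternatives for removing this duplicate: in case~$A$ the preemption is attributed to $\tau_t$, so the $\tau_p^{WCET}$ contribution at $\tau_a$ is removed and, by Definition~\ref{theorem4_13}, the resulting bound is $T_A = \hat\tau^{maxF}_{t,\,\Delta_{\tau_a}\leftarrow \tau_p^{WCET}}$; in case~$B$ the preemption is attributed to $\tau_a$, so the $maxP(\tau_t,\tau_p)$ contribution at $\tau_t$ is removed, giving $T_B = \hat\tau^{maxF}_{t,\,\Delta_{\tau_t}\leftarrow maxP(\tau_t,\tau_p)}$. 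The theorem reduces to proving $T_A \ge T_B$.

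Next I would unfold $T_A$ via Theorem~\ref{theorem4_12}. The removal at $\tau_a$ propagates along $path(\tau_a,\tau_t)$ to produce a release-time decrement $\Delta_{\tau_t}\le\tau_p^{WCET}$ at $\tau_t$, and by Theorem~\ref{theorem4_5} the resulting decrement of $\tau_t^{maxF}$ is exactly $\Delta_{\tau_t}-pe(\tau_t,\Delta_{\tau_t})$. The target inequality $T_A\ge T_B$ therefore reduces to the single comparison
\[
\Delta_{\tau_t}-pe(\tau_t,\Delta_{\tau_t})\;\le\;maxP(\tau_t,\tau_p).
\]

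The decisive step is to prove this inequality by an interval/window argument. Intuitively the left-hand side is exactly the portion of the reclaimed $\tau_p^{WCET}$ budget in the window $[\tau_t^{maxR}-\Delta_{\tau_t},\tau_t^{maxR}]$ that cannot be absorbed by other higher-priority work already scheduled there (Lemma~\ref{theorem4_3} only guarantees $pe(\tau_t,\Delta_{\tau_t})\le \Delta_{\tau_t}$). When $\tau_p$ is instead reattributed to preempt $\tau_t$, precisely this unabsorbed amount must slide past $\tau_t^{maxR}$ into the window that defines $\tau_t$'s own execution; by the definition of $maxP(\tau_t,\tau_p)$ as the maximum such overlap, this yields the desired bound.

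The main obstacle I foresee is making the ``unabsorbed budget equals later overlap'' step fully rigorous in two regimes. First, when the ancestor chain crosses processor boundaries, a predecessor's finish-time reduction does not translate one-for-one into a release-time shift of $\tau_t$, so the telescoping over $path(\tau_a,\tau_t)$ must be handled carefully; I would do this by induction on the path length in the spirit of Theorem~\ref{theorem4_12}, replaying the single-step argument of Theorem~\ref{theorem4_5} at each hop. Second, when several common preemptors other than $\tau_p$ are reattributed in the same pass of Algorithm~\ref{alg:commonpreemption}, I would isolate $\tau_p$'s own contribution via a case split on whether $\tau_p^{maxF}\le\tau_t^{maxR}$ (in which case $maxP(\tau_t,\tau_p)=0$ and the propagated $\Delta_{\tau_t}$ along the chain must also vanish, so both sides are zero) or $\tau_p^{maxF}>\tau_t^{maxR}$, which is the genuine partial-overlap case settled by the window argument above.
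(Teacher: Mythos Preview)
Your reduction to the inequality $\Delta_{\tau_t}-pe(\tau_t,\Delta_{\tau_t})\le maxP(\tau_t,\tau_p)$ is exactly what the paper proves, so the overall strategy matches. The paper's argument for this inequality, however, is shorter and more concrete than your ``unabsorbed budget slides past $\tau_t^{maxR}$'' sketch: writing $\delta=\tau_p^{maxF}-\tau_t^{maxR}$ for the partial overlap and $\epsilon=\Delta_{\tau_t}\le\tau_p^{WCET}$ for the propagated release-time shift, the paper simply observes that $\tau_p$ \emph{itself} occupies the interval $[\tau_t^{maxR}-(\tau_p^{WCET}-\delta),\,\tau_t^{maxR}]$ and therefore contributes at least $\min(\epsilon,\tau_p^{WCET}-\delta)$ to $pe(\tau_t,\epsilon)$; a two-case split on whether $\epsilon\le\tau_p^{WCET}-\delta$ then yields $\epsilon-pe(\tau_t,\epsilon)\le\delta$ immediately in both cases. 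Your two anticipated obstacles (cross-processor hops along $path(\tau_a,\tau_t)$ and interaction with other common preemptors) turn out not to be needed for this theorem: the path propagation is already absorbed into the single fact $\epsilon\le\tau_p^{WCET}$ via Theorem~\ref{theorem4_12}, and the statement concerns one $\tau_p$ in isolation.
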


\begin{proof}
Assume that $\delta$ indicates the smallest partial preemption time between $\tau_p$ and $\tau_t$.
So, $\delta = \tau_p^{maxF} - \tau_t^{maxR} \le \tau_p^{WCET}$.

\begin{enumerate}
\item If $\tau_p$ preempts $\tau_a$ then $\tau_t$ finishes earlier than $\tau_t^{maxF} - \delta$.
\item If $\tau_p$ preempts $\tau_t$ then assume that $\tau_t^{release} \le \tau_t^{maxR} - \epsilon$.
Then $\tau_t^{finish} \le \tau_t^{maxF} - \epsilon + pe(\tau_t,\epsilon)$.

\begin{enumerate}
\item If $\epsilon \le \tau_p^{WCET} - \delta$ then $pe(\tau_t,\epsilon) = \epsilon$, and
$\tau_t^{maxF} - \epsilon + pe(\tau_t,\epsilon) = \tau_t^{maxF}$. \label{theorem4_16_1}
\item If $\tau_p^{WCET} - \delta < \epsilon$ then $pe(\tau_t,\epsilon) \ge \tau_p^{WCET} - \delta$.
Since $\epsilon \le \tau_p^{WCET}$ by theorem~\ref{theorem4_12},
$\tau_t^{maxF} - \epsilon + pe(\tau_t,\epsilon) \ge \tau_t^{maxF} - \tau_p^{WCET} + \tau_p^{WCET} - \delta
= \tau_t^{maxF} - \delta$. \label{theorem4_16_2}
\end{enumerate}

\end{enumerate}

Hence, in both cases \ref{theorem4_16_1} and \ref{theorem4_16_2}, the preemption of $\tau_t$ provides no smaller maximum finish time
of $\tau_t$.
\ignore{
The reduced maximum finish time of $\tau_t$ is
$\hat \tau^{maxF}_{t,\Delta_{\tau_a} \leftarrow \tau_p^{WCET}}$ when $\tau_p$
does not preempt $\tau_a$ but preempts $\tau_t$.
It is $\hat \tau^{maxF}_{t,\Delta_{\tau_t} \leftarrow \tau_p^{WCET}}$
when $\tau_p$ does not preempt $\tau_t$ but preempts $\tau_a$.
}
\end{proof}

Now, we consider multiple ancestor tasks which have multiple common preemption tasks.
We define a unique common preemptor set, $ucp(\tau_a,\tau_p)$, to indicate
the closest ancestor with the common preemption task as following.

\begin{APPdefinition} \label{theorem4_17}
$$ucp(\tau_a,\tau_t) = cp(\tau_a,\tau_t) - \bigcup_{\tau_m \in path(\tau_a,\tau_t)}{cp(\tau_m, \tau_t)}.$$
\end{APPdefinition}

Then multiple ancestors may have different common preemption tasks. To consider
the cascaded scheduling effect as a whole, we redefine $\Delta_{\tau_a}$.
We introduce $\delta_{\tau_a}$ which denotes the reduced time by moving common preemption tasks
from $\tau_a$ to $\tau_t$. $\delta_{\tau_a}$ is statically computed. $\Delta_{\tau_a}$ indicates
a variable used in definition~\ref{theorem4_13}. The finish time of $\tau_m$ is contributed by the early finish
time of ancestor tasks of $\tau_m$ and $\delta_{\tau_m}$. The total $\Delta_m$ is defined as following:

\begin{APPdefinition} \label{theorem4_18}
$\Delta_{\tau_m} = \max_{\tau_i \in pred(\tau_m)}{\tau_i^{maxF}} - \max_{\tau_i \in pred(\tau_m)}{\hat \tau_i^{maxF}} + \delta_{\tau_m}$,
where $\delta_{\tau_m} = \sum_{\tau_s \in ucp(\tau_m,\tau_t)}{\min(\tau_s^{WCET},\tau_s^{maxF}-\tau_t^{maxR})}$.
\end{APPdefinition}

Finally, we define $pe(\tau_t,\Delta)$ which indicates the occupied time from $\tau_t^{maxR}-\Delta$ to $\tau_t^{maxR}$. Although $pe(\tau_t,\Delta)$ shows the exact time, it is varying at run time. Therefore,
we need to compute its bound. Although we use its bound, the previous theorems hold.
Since $pe(\tau_t,\Delta) \le \min(\Delta, maxpe(\tau_t)[\tau_t^{maxR}-\Delta,\tau_t^{maxR}])$
where $maxpe(\tau_t)[x,y] =
\sum_{\tau_s^{pri} > \tau_t^{pri}, \tau_s^{maxF} \ge x, \tau_s^{minS} \le y}{\min(\tau_s^{WCET}, y-\tau_s^{minS}, \tau_s^{maxF} - x)}$,
$\min(\Delta, maxpe(\tau_t)[\tau_t^{maxR}-\Delta,\tau_t^{maxR}])$
is used as the bound of $pe(\tau_t,\Delta)$.

\begin{APPdefinition} \label{theorem4_19}
$pe(\tau_t,\Delta) = \min(\Delta, maxpe(\tau_t)[\tau_t^{maxR}-\Delta,\tau_t^{maxR}])$,\\
where $maxpe(\tau_t)[x,y] = \sum_{\tau_s^{pri} > \tau_t^{pri}, \tau_s^{maxF} \ge x, \tau_s^{minS} \le y}{\min(\tau_s^{WCET}, y-\tau_s^{minS}, \tau_s^{maxF} - x)}$.
\end{APPdefinition}

\begin{APPtheorem} \label{theorem4_20}
Algorithm \ref{alg:commonpreemption} that removes the duplicate preemptions from the ancestors in the critical path preserves the conservativeness of the schedule time bound.
\end{APPtheorem}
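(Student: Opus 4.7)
The plan is to show that Algorithm~\ref{alg:commonpreemption} never decreases the upper bound computed for $\maxF{t}$ when it relocates a duplicate preemption from an ancestor in the critical path to $\task{t}$ itself. Conceptually, the algorithm subtracts one preemption's worth of delay from some ancestor $\task{a}$ and adds it back into $\maxS{t}$; for conservativeness it suffices to argue that, for every preempter $\task{p}$ that could preempt both $\task{a}$ and $\task{t}$, charging the preemption to $\task{t}$ rather than to $\task{a}$ yields a bound on $\finish{t}$ that is at least as large as charging it to $\task{a}$.

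First I would prove a single-step release-to-finish propagation lemma: if $\release{t} \le \maxR{t}-\Delta$, then $\finish{t} \le \maxF{t}-\Delta+pe(\task{t})[\maxR{t}-\Delta,\maxR{t}]$, where $pe(\task{t})[x,y]$ denotes the total higher-priority execution inside $[x,y]$. This captures that shrinking the release window by $\Delta$ recovers at most $\Delta$ units of finish time, minus whatever higher-priority work happens to fall inside the vacated interval. The proof is a short contradiction argument that combines the trivial bound $\finish{t}\le\release{t}+\WCET{t}+pe(\task{t})[\release{t},\finish{t}]$ with additivity of $pe$. I would then iterate this lemma along dependency edges to obtain a reduced finish bound $\hat\tau_t^{maxF}$ for every descendant of an early-finishing ancestor, with the effective reduction at each intermediate task taken as the worst-case maximum across its predecessors.

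The heart of the argument compares two scenarios for a common preempter $\task{p}$: scenario (i), in which $\task{p}$ preempts $\task{a}$ (the HPA baseline), and scenario (ii), in which $\task{p}$ preempts $\task{t}$ (after the algorithm relocates it). Because single-step propagation can only attenuate a reduction applied upstream, in the full-preemption case the bound produced by (ii) dominates that produced by (i). The partial-preemption case, where $\task{p}$ overlaps the release window of $\task{t}$ by only $\delta=\maxF{p}-\maxR{t}<\WCET{p}$, requires a split on the size of the release reduction $\epsilon$ at $\task{t}$: for $\epsilon\le\WCET{p}-\delta$, the entire $\epsilon$ is absorbed by $pe(\task{t},\epsilon)$; for larger $\epsilon$, one shows $pe(\task{t},\epsilon)\ge\WCET{p}-\delta$, and in both subcases the resulting $\maxF{t}$ dominates the $\maxF{t}-\delta$ produced by preempting $\task{a}$. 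To combine multiple ancestors and multiple common preempters, I would assign each preempter to the closest sharing ancestor via $ucp(\task{a},\task{t})$ and aggregate the per-preempter reductions into the $\delta_{\task{m}}$ correction appearing inside $\Delta_{\task{m}}$, which mirrors the bookkeeping Algorithm~\ref{alg:commonpreemption} actually performs.

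The main obstacle I expect is the partial-preemption subcase, because the savings $\delta$ charged on the ancestor side is strictly less than $\WCET{p}$, so monotonicity of propagation alone does not close the argument; one has to exploit that any $\epsilon>\WCET{p}-\delta$ necessarily pushes enough higher-priority work into the vacated release window to make $pe(\task{t},\epsilon)\ge\WCET{p}-\delta$. A secondary subtlety is that $pe(\task{t},\Delta)$ is in practice replaced by the conservative overapproximation $\min(\Delta,maxpe(\task{t})[\maxR{t}-\Delta,\maxR{t}])$, so I would need to verify that every inequality in the propagation chain survives this replacement, which is straightforward since the overapproximation only weakens upper bounds on $\finish{t}$ in the safe direction.
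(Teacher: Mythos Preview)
Your proposal is correct and mirrors the paper's appendix proof almost exactly: the single-step propagation lemma, its inductive extension along predecessors, the full- and partial-preemption comparison (with the same $\epsilon\le\WCET{p}-\delta$ versus $\epsilon>\WCET{p}-\delta$ split), the $ucp$ bookkeeping, and the replacement of $pe$ by its $maxpe$ overapproximation are all precisely the ingredients the paper uses. The only point the paper makes explicit that you leave implicit is that Algorithm~\ref{alg:commonpreemption} traverses \emph{only} the critical path, so duplicate preemptions on non-critical ancestors are left in place; since leaving extra preemptions can only enlarge the bound, this does not threaten conservativeness.
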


\begin{proof}
By replacing variables in definition~\ref{theorem4_13} by definitions~\ref{theorem4_18} and~\ref{theorem4_19},
we can compute the tighter time bound with moving the common preemption tasks from the ancestors. The conservativeness is guaranteed by Theorems \ref{theorem4_15} and \ref{theorem4_16} since preempting $\task{t}$ rather than ancestor $\task{a}$ produces a larger time bound. The difference between the proof and the actual technique is that \emph{duplicate preemption elimination} technique traverses only the critical path, not all ancestors. Even though common preemptions of the ancestors not in the critical path remain, our technique is still conservative. Q.E.D.
\end{proof}

\ignore{
\pshift{t}{i}= \maxF{i} - \minF{i} - \WCET{i} + \BCET{i}\qquad\qquad\qquad\qquad\qquad\qquad\qquad\qquad\qquad\nonumber\\
- \sum_{\task{k} \in \condset{E}{t} \cap \condset{E}{i}}{(\hat n_{i,k} - \check n_{i,k} - \lceil \frac{\maxF{t}-\maxR{t} + \pshift{t}{i} + \pshift{t}{k}}{\period{\task{k}}} \rceil + \hat n_{t,k}) \WCET{k}}\qquad\nonumber\\
- \sum{\task{k} \in desc(\task{t})}{\hat n_{\task{i},\task{k}} \WCET{k}}\qquad
\end{eqnarray}
where $\hat n_{t,k} = \lceil \frac{\maxF{t}-\maxR{t} + \pshift{t}{k}}{\period{\task{k}}} \rceil$, and $\check n_{t,k} = \lfloor \frac{\maxF{t}-\maxR{t}}{\period{\task{k}}} \rfloor$.
}

\ignore{
\begin{proof}
Let $l$ denote $\maxF{t}-maxR{t}$. Assume that $\task{i}$ finishes at $L$ when its period starts at 0, $(n-1)$ of $\task{i}$ consecutely execute and $n-th$ $\task{i}$ finishes at $R$.
Therefore, we assume that $\task{i}$ preempts $\task{t}$ $n$ times. Figure~\ref{fig:pshift_proof} illustrates the notations.

Then $l \le (n-1) \period{\task{i}} + (\period{\task{i}}-L) +R = n \period{\task{i}} - L+R$.
$ n \ge \frac{l+L-R}{\period{\task{i}}} $, and $n = \lceil \frac{l+L-R}{\period{\task{i}}} \rceil$.
So $n$ denote the maximum preemption count of $\task{t}$ by $\task{i}$.
It is known that $L \le \maxF{i}$ and $R \ge \minF{i} + \WCET{i}-\BCET{i}$ since while $\minF{i}$ computation includes $\BCET{i}$, $R$ assumes there is a $\task{i}$ with $\WCET{i}$.
Let $n(x)_{\task{y}}$ denote the number of appearance of $\task{y}$ for $x$.

Consider $\task{k} \in \condset{E}{t} \cap \condset{E}{i}$.
$L = \maxF{i} + \sum_{\task{k}}{(n(L)_{\task{k}} - \hat n_{\task{i},\task{k}})\WCET{k}}$ and
$R = \minF{i}+\WCET{i}-\BCET{i} + \sum_{\task{k}}{n(R)_{\task{k}} - \check n_{\task{i},\task{k}})\WCET{k}}$ since $\maxF{i}$ and $\minF{i}$ computation equations assume that there are $\hat n_{\task{i},\task{k}}$ and $\check n_{\task{i},\task{k}}$ of $\task{k}$ but there are $n(L)_{\task{k}}$ and $n(R)_{\task{k}}$ of $\task{k}$, respectively.

Hence, $L-R = \sum_{\task{k}}{(n(L)_{\task{k}} - n(R)_{\task{k}} - \hat n_{\task{i},\task{k}} + \check n_{\task{i},\task{k}}) \WCET{k}} + \maxF{i} - {\hat n_{\task{i},\task{k}} \WCET{k}} - \minF{i}-\WCET{i}+\BCET{i} $

$n(L)_{\task{k}} + n(l)_{\task{k}} = n(l+L-R)_{\task{k}} + n(R)_{\task{k}}$.
$n(L)_{\task{k}} - n(R)_{\task{k}} = n(l+L-R)_{\task{k}} - n(l)_{\task{k}}$.

If $\task{k}$ in $l$ moves to $L$ then $\finish{t}$ decreases by $\WCET{k}$ and may increase upto $\WCET{k}$ since $\task{i}$ in $L$ is delayed into $l$ upto $\WCET{k}$. Therefore the new finish time of $\task{t}$ is no larger than $\finish{t} - \WCET{k} + \WCET{k}$. Hence when $\task{k}$ appears maximally in $l$, $\finish{t}$ is maximal. When $n(l)_{\task{k}} = \hat n_{\task{t},\task{k}}$, $\finish{t}$ is maximal.

$n(L)_{\task{k}} - n(R)_{\task{k}} = n(l+L-R)_{\task{k}} - \hat n_{\task{t},\task{k}}$.
Since $n(l+L-R)_{\task{k}} \le \lceil \frac{l+L-R + \pshift{t}{k}}{\period{\task{k}}} \rceil$,
$L-R = \pshift{t}{i} \le \maxF{i} - {\hat n_{\task{i},\task{k}} \WCET{k}} - \minF{i}-\WCET{i}+\BCET{i} + \sum_{\task{k}}{( \lceil \frac{l+\pshift{t}{i} + \pshift{t}{k}}{\period{\task{k}}} \rceil - \hat n_{\task{t},\task{k}} - \hat n_{\task{i},\task{k}} + \check n_{\task{i},\task{k}}) \WCET{k}}$.

Consider $\task{k}$ is a descendant of $\task{t}$. Since it is obvious that $\task{k}$ cannot appear before $\task{k}$ finishes, $\task{k}$ cannot preempt $\task{i}$. Therefore, we need to eliminate $\task{k}$ from $\maxF{i}$ computation.
\end{proof}

\begin{figure}[ht]
\centering
\epsfig{file=Figure/pshift_proof, width=13.5cm}
\caption{Proof of period shift conservativeness}
\label{fig:pshift_proof}
\end{figure}
}

\end{document}